\definecolor{goldenpoppy}{rgb}{0.99, 0.76, 0.0}
\definecolor{richblack}{rgb}{0.06, 0.05, 0.03}
\definecolor{cadmiumred}{rgb}{0.89, 0.0, 0.13}
\definecolor{fuchsia}{rgb}{0.3, 0.0, 0.3}
\definecolor{green(ncs)}{rgb}{0.0, 0.52, 0.32}
\tikzstyle{species_T} = [circle,radius=0.1cm, text centered, draw=black, fill=goldenpoppy]
\tikzstyle{species_R} = [circle,radius=0.1cm, text centered, draw=black, fill=white]
\tikzstyle{species_C} = [circle,radius=0.1cm, text centered, draw=black, fill=fuchsia]
\tikzstyle{dots} = [circle,radius=0.1cm,text centered]
\tikzstyle{arrowA} = [-{Latex[length=2mm]},white,dashed]
\tikzstyle{arrowB} = [-{Latex[length=2mm]},green(ncs)]
\tikzstyle{arrowC} = [-{Latex[length=2mm]},cadmiumred]
\tikzstyle{inhibit} = [thick,-|,black!100]
\tikzstyle{loosely dashed}= [dash pattern=on 3pt off 6pt]
\newtheorem{definition}{Definition}[section]
\newtheorem{theorem}{Theorem}[section]
\newtheorem{example}{Example}[section]
\let\Item\item
\begin{document}
\vspace*{-1cm}

\centerline{{\huge Robust control of biochemical reaction networks via}}

\medskip

\centerline{{\huge stochastic morphing}}
 
\medskip
\bigskip

\centerline{
\renewcommand{\thefootnote}{$*$}
{\Large Tomislav Plesa\footnote{
Department of Bioengineering, Imperial College London,
Exhibition Road, \\ London, SW7 2AZ, UK;
e-mail: t.plesa@ic.ac.uk}
\qquad 
Guy-Bart Stan$^*$
\qquad 
Thomas E. Ouldridge$^*$
\qquad 
Wooli Bae$^*$
}}

\medskip
\bigskip

\noindent
{\bf Abstract}:
Synthetic biology is an interdisciplinary field aiming to
design biochemical systems with desired behaviors. 
To this end, molecular controllers have been developed which, 
when embedded into a pre-existing ambient biochemical network,
control the dynamics of the underlying target molecular species.
When integrated into smaller compartments, such as 
biological cells in vivo, or vesicles in vitro, 
controllers have to be calibrated to factor in the intrinsic noise. 
In this context, molecular controllers 
put forward in the literature have focused on manipulating the mean (first moment), 
and reducing the variance (second moment), of the target species. 
However, many critical biochemical processes
are realized via higher-order moments, particularly 
the number and configuration of the modes (maxima) of the probability distributions.
To bridge the gap, a controller called \emph{stochastic morpher} is
put forward in this paper, inspired by gene-regulatory networks, 
which, under suitable time-scale separations, morphs the 
probability distribution of the target species 
into a desired predefined form. The morphing can be 
performed at the lower-resolution, allowing one to achieve
desired multi-modality/multi-stability, and at the higher-resolution, 
allowing one to achieve arbitrary probability distributions.
Properties of the controller, such as robust perfect adaptation
and convergence, are rigorously established, and demonstrated on 
various examples. Also proposed is a blueprint for an experimental implementation 
of stochastic morpher.

\section{Introduction} \label{sec:intro}
Synthetic biology is a growing interdisciplinary field 
of science and engineering which aims to design 
biochemical systems with predefined behaviors~\cite{SynthBio1}.
In the sub-field of nucleic-acid-based synthetic biology
(also called DNA and/or RNA computing), biochemical
systems are engineered using nucleic acids 
(DNA and/or RNA molecules). 
An advantage of this approach lies in the fact that nucleic acids
have relatively well-understood biophysical properties, 
and their production is systematic and cost-effective.
The key mechanism behind the excellent programmability
properties of DNA and RNA, allowing for a controllable and dynamic
change in the structure of these molecules, is the 
\emph{toehold-mediated strand-displacement mechanism}~\cite{Experiment5,RNAComputing}.
 The strand-displacement mechanism involves a single-stranded nucleic acid displacing 
another one from a duplex, as a consequence of the Watson-Crick base-pairing principle,
and allows one to realize dynamical systems~\cite{Sulc,Tom,Experiment1,DNAComputing1}.
In particular, a large class of abstract mass-action biochemical 
reaction networks (see also Appendix~\ref{app:CRNs}
for a background on reaction networks) can be physically realized
using strand-displacement DNA computing~\cite{DNAComputing1}. A proof-of-concept
is the displacillator - a purely DNA-based synthetic oscillator implemented in vitro~\cite{Displacillator}.
Nucleic acids play some of the key roles inside living systems, 
involving storage and transfer of information, catalysis and a 
variety of regulatory functions. 
Consequently, nucleic-acid-based strand-displacement 
synthetic systems are desirable, as they can be more readily interfaced 
with a variety of key biochemical processes in living systems. 
Let us note that DNA and RNA strand-displacement
is also hypothesized to take place in a number of 
native cellular processes, including
 the genetic recombination process~\cite{Experiment2} 
(see also Section~\ref{sec:experiments}),
CRISPR-Cas systems~\cite{CRISPR}
and co-transcriptional folding of RNA~\cite{Sulc}. 

Depending on applications, synthetic systems can be 
implemented in a variety of different 
environments, each generally requiring different 
engineering approaches. In particular, synthetic systems may be integrated into
larger-volume compartments (such as test-tubes in vitro), 
or smaller-volume compartments (such as biological cells in vivo, 
or cell-like vesicles in vitro).
When integrated into larger-volume compartments, owning to the higher species copy-numbers, 
the task is to design reaction networks with 
desired deterministic dynamics, described by the reaction-rate equations~\cite{Feinberg}.
Mathematical methods for achieving such goals have been 
developed in~\cite{Me_Homoclinic,Me_Limitcycles}.
On the other hand, when integrated into smaller-volume compartments, 
owning the the lower species copy-numbers,
intrinsic noise becomes an important dynamical 
feature~\cite{CellCycle,Circadian,Vesicles1,Vesicles2,Vesicles3}, 
and the synthetic systems have to be constructed via a more-detailed
stochastic approach~\cite{GillespieDerivation} (see also Appendix~\ref{app:stochastic_model}
for a background on the stochastic model of reaction networks).
To this end, so-called \emph{noise-control algorithm} has been developed in~\cite{Me_NAA}, 
which systematically re-designs a given reaction network to 
arbitrarily manipulate its intrinsic-noise profile and reshape its probability distribution, 
while preserving the desired underlying deterministic dynamics. 
Biochemical networks have been successfully implemented in vitro, 
displaying both desirable deterministic dynamics in test-tubes~\cite{Displacillator,Enzymes},
and stochastic dynamics in vesicles~\cite{Vesicles1,Vesicles2,Vesicles3}.

Another important feature of the compartments, aside from their volumes,
is whether the compartments are biochemically active, i.e. infused with pre-existing biochemical 
processes (such as the native molecular machinery inside biological cells), 
or otherwise biochemically inactive (such as suitable test-tubes). 
When biochemically active environments are considered, 
of interest may be isolated synthetic systems, 
i.e. systems which, in an ideal case, execute predefined
dynamics without altering their biochemical environment.
On the other hand, one may also be interested 
in so-called controller networks, which are 
designed to couple to their active surroundings in order 
to manipulate the dynamics of some of the
underlying ambient biochemical species. 
Controller networks, based on the RNA strand-displacement, have
 been successfully engineered to manipulate 
 transcription and post-transcription stages of gene expression inside living cells, 
e.g. controlling the behavior of RNA polymerases~\cite{PreTranscription}
and editing the structure of messenger RNA (mRNA) molecules~\cite{PostTranscription}, 
respectively. Let us note that compiling abstract reaction networks
into physical ones, and subsequent integrations into 
biochemically active environments, involves overcoming
a number of challenges~\cite{Control3,Retroactivity}, 
including undesirable cross-reactions inside individual, 
and between multiple, synthetic networks, and between 
the synthetic systems and their environments.

\begin{figure}[!htbp]
\vskip -1.2 cm
\centerline{
\begin{tikzpicture}[thick,scale=0.9, every node/.style={scale=0.9}]
 \begin{scope}[fill opacity=0.75]
\filldraw[fill={richblack}, draw=richblack,line width=0.01cm] (0,0) rectangle (5.83,6);
\end{scope}
\node at (3,6.4) (A) {\LARGE $\color{richblack} \mathcal{R}_{\alpha}$};
\node (X_r1) [species_R, xshift= 2 cm,yshift= 5.2 cm] {};
\node (X_r2) [species_R, xshift= 2 cm,yshift= 4 cm] {};
\node (X_r3) [dots, xshift= 2 cm,yshift= 2.9 cm] {\textcolor{white}{$\Huge\vdots$}};
\node (X_r4) [species_R, xshift= 2 cm,yshift=1.6 cm] {};
\node at (1.2,5.3) () {\large {\color{white} $X_{n+1}$}};
\node at (1.2,4) () {\large {\color{white} $X_{n+2}$}};
\node at (1.3,1.6) () {\large {\color{white} $X_{N}$}};
\node at (1.5,0.7) () {\footnotesize {\color{white} \textbf{Residual species}}};
\node at (1.5,0.35) () {\footnotesize {\color{white} $\mathcal{X}_{\rho}$}};
\node (X_t1) [species_T, xshift= 5 cm,yshift= 5.2 cm] {};
\node (X_t2) [species_T, xshift= 5 cm,yshift= 4 cm] {};
\node (X_t3) [dots, xshift= 5 cm,yshift= 2.9 cm] {\textcolor{goldenpoppy}{$\Huge\vdots$}};
\node (X_t4) [species_T, xshift= 5 cm,yshift=1.6 cm] {};
\node at (4.3,5.3) () {\large {\color{goldenpoppy} $X_{1}$}};
\node at (4.3,4) () {\large {\color{goldenpoppy} $X_{2}$}};
\node at (4.3,1.6) () {\large {\color{goldenpoppy} $X_{n}$}};
\node at (4.5,0.7) () {\footnotesize {\color{goldenpoppy} \textbf{Target species}}};
\node at (4.5,0.35) () {\footnotesize {\color{goldenpoppy} $\mathcal{X}_{\tau}$}};
\draw [arrowA] (X_r2) -- (X_t1); \draw [arrowA] (X_t1) -- (X_r2);
 \begin{scope}[fill opacity=0.35]
\filldraw[fill={green(ncs)}, draw=green(ncs),line width=0.01cm] (7.8,0) rectangle (11.2,6);
\end{scope}
\node at (9.5,6.4) (B) {\LARGE $\color{green(ncs)} \mathcal{R}_{\beta}$};
\node (X_c1) [species_C, xshift= 9 cm,yshift= 5.2 cm] {};
\node (X_c2) [species_C, xshift= 9 cm,yshift= 4 cm] {};
\node (X_c3) [dots, xshift= 9 cm,yshift= 2.9 cm] {\textcolor{fuchsia}{$\Huge\vdots$}};
\node (X_c4) [species_C, xshift= 9 cm,yshift=1.6 cm] {};
\node at (9.6,5.3) () {\large {\color{fuchsia} $Y_{1}$}};
\node at (9.6,4) () {\large {\color{fuchsia} $Y_{2}$}};
\node at (9.7,1.6) () {\large {\color{fuchsia} $Y_{M}$}};
\node at (9.5,0.7) () {\footnotesize {\color{fuchsia} \textbf{Controlling species}}};
\node at (9.5,0.35) () {\footnotesize {\color{fuchsia} $\mathcal{Y}$}};
\draw [arrowB] (X_c1) -- (X_c2); \draw [arrowB] (X_c2) -- (X_c1);
\draw [arrowB] (X_c2) -- (9,3.1); \draw [arrowB] (9,3.1) -- (X_c2);
\draw [arrowB] (9,2.43) -- (X_c4); \draw [arrowB] (X_c4) -- (9,2.43);
 \begin{scope}[fill opacity=0.075]
 \filldraw[fill={cadmiumred}, draw=cadmiumred,line width=0.01cm] (3.4,1.15) rectangle (10.5,5.75);
\end{scope}
\node at (7,6.1) (C) {\LARGE $\color{cadmiumred} \mathcal{R}_{\gamma}$};
\draw [arrowC] (8.84,5.3) -- (5.16,5.3); \draw [arrowC] (5.16,5.3) -- (8.84,5.3); 
\draw [arrowC] (X_c1) -- (5.15,4.12);  \draw [arrowC] (5.15,4.12) -- (X_c1); 
\draw [arrowC] (X_c1) -- (5.12,1.74); \draw [arrowC] (5.12,1.74) -- (X_c1); 
\draw [arrowC] (X_c2) -- (X_t2); \draw [arrowC] (X_t2) -- (X_c2); 
\draw [arrowC] (8.89,4.15) -- (5.18,5.175); \draw [arrowC] (5.18,5.175) -- (8.89,4.15); 
\draw [arrowC] (X_c2) -- (5.19,1.6); \draw [arrowC] (5.19,1.6)--(X_c2); 
\draw [arrowC] (X_t1) -- (8.87,1.73);  \draw [arrowC] (8.87,1.73) -- (X_t1);  
\draw [arrowC] (X_t2) -- (8.81,1.6); \draw [arrowC] (8.81,1.6) -- (X_t2);
\draw [arrowC]  (5.158,1.5) -- (8.84,1.5);  \draw [arrowC] (8.84,1.5) -- (5.158,1.5);  
\end{tikzpicture}
}
\vskip 0cm 
\caption{Schematic representation of biochemical control.
{\it A black-box \emph{input} network, $\mathcal{R}_{\alpha} = 
\mathcal{R}_{\alpha} (\mathcal{X})$, is shown in black. 
The input species $\mathcal{X} = \{X_1, X_2, \ldots, X_N\}$ are divided
into two mutually-exclusive sets: 
the \emph{target} species $\mathcal{X}_{\tau} = \{X_1, X_2, \ldots, X_n\}$, and 
the \emph{residual} species $\mathcal{X}_{\rho} = \{X_{n+1}, X_{n+2}, \ldots, X_N\}$, 
which are shown in white and yellow, respectively. The target and residual species
are generally coupled, but the nature of the coupling is at most partially known.   
A known coupling between the species $X_{1}$ and $X_{n+2}$
is depicted as a white dashed double-arrow. 
A \emph{controller} is shown, consisting of the
sub-networks $\mathcal{R}_{\beta} = \mathcal{R}_{\beta}(\mathcal{Y})$
and $\mathcal{R}_{\gamma} = \mathcal{R}_{\gamma}(\mathcal{X}_{\tau}, \mathcal{Y})$, 
displayed in green and red, respectively. The network $\mathcal{R}_{\beta}$
specifies how the \emph{controlling} species $\mathcal{Y} = \{Y_1, Y_2, \ldots, Y_M\}$, 
shown in purple, interact among themselves, which is depicted as the green double-arrows. 
On the other hand, the network $\mathcal{R}_{\gamma}$ specifies how the controlling
species are interfaced with the target species, which is displayed as the red
double-arrows. Embedding a controller $\mathcal{R}_{\beta,\gamma} = \mathcal{R}_{\beta} \cup 
\mathcal{R}_{\gamma}$ into an input network
$\mathcal{R}_{\alpha}$ gives rise to an \emph{output} network
$\mathcal{R}_{\alpha} \cup \mathcal{R}_{\beta} \cup 
\mathcal{R}_{\gamma}$.
}
}
\label{fig:Control_Theory}
\end{figure}
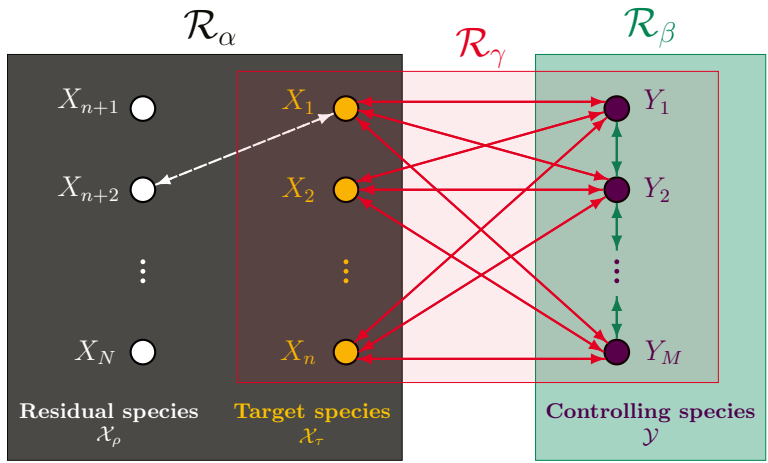

In this paper, we focus on controlling stochastic biochemical reaction networks,
such as those forming biochemically active environments inside 
small-volume compartments, both in vitro (e.g. inside cell-like vesicles)
and in vivo (e.g. inside living cells). 
We call a fixed ambient network, which we wish to control, 
an \emph{input} (uncontrolled) network, shown in black in Figure~\ref{fig:Control_Theory}. 
A critical assumption is that the input network 
is a \emph{black-box}: its fixed structure and the induced dynamics
are at most partially known. For example, if noisy time-series (sample paths) generated 
by a black-box network are experimentally available, dynamical features such as
the average species abundance, time-scales of fluctuations
and bifurcation structures may be inferrable~\cite{Liao,Zoltan,Wooli}.
A key goal of biochemical control is to embed a suitable auxiliary
network, called a \emph{controller},
into a black-box input network, resulting in an 
\emph{output} (controlled) network, which ensures
that a desired subset of the input biochemical species have controlled
stochastic dynamics in the output network. 
A controller consists of a sub-network governing its internal dynamics, 
and a sub-network specifying how the controller
is intefaced with an input network, which
are shown in green and red in Figure~\ref{fig:Control_Theory}, respectively.
We divide the input species
into two mutually-exclusive sets: species 
which are explicitly (directly) 
targeted by the controller are called the \emph{target} species, 
while the remaining species, which may be implicitly 
(indirectly) affected by the controller, are called the \emph{residual} species.
The target and residual species are shown in white and yellow in
Figure~\ref{fig:Control_Theory}, respectively, while the \emph{controlling} species,
introduced by the controller, are shown in purple.
Control may be sought
over the probability distributions of the input biochemical species 
(which we call \emph{weak control}), 
or at the level of the underlying sample paths 
(which we call \emph{strong control}).
Of practical importance is weak control involving the long-time (stationary)
statistics (expectations) of the input species. 
See Appendix~\ref{app:biochemical_control} for more details
on biochemical control.

A controller network must satisfy a set of constraints
in order to be useful and experimentally realizable. 
Structurally, it is desirable that the controller consists of up-to second-order (bi-molecular) 
reactions, i.e. reactions involving at most two reactant 
molecules~\cite{Me_Bimolecular}, in order to be experimentally realizable~\cite{DNAComputing1}.
Let us note that the composition of the bi-molecular reactions is arbitrary 
when implemented via the strand-displacement mechanism, i.e.
the reactions may involve arbitrary product and reactant species. In particular, 
catalytic reactions, involving the same species as both a product and reactant, 
are allowed, as such non-elementary reactions are expanded into suitable 
elementary counterparts when compiled into physical 
networks~\cite{DNAComputing1,Nonelementary,Catalysis1,Catalysis2}.
Kinetically, the rate coefficients of the reactions from a controller, 
realized via the strand-displacement mechanism,
can be varied over at least six orders of magnitude~\cite{Experiment5,Mismatches,RNAComputing}, 
allowing one to achieve time-scale separations.
Dynamically, when the controller is embedded into an input network, 
it is desirable that long-time dynamics of the corresponding
 output network satisfy the following two properties.
Firstly, the stationary probability distribution of the target species
 is independent of the initial conditions for all of the species in the output network 
(a property known as ergodicity in mathematical literature). 
And, secondly, the controlled stationary statistic of the target species does not explicitly 
depend on the parameters (rate coefficients) from the input network. A controller
satisfying these two properties is said to be \emph{robust}, 
see also Appendix~\ref{app:biochemical_control}.

The first property satisfied by a robust controller ensures that
 the dynamics of the target species reach a unique desired probability
distribution in the long-run, even when the precise initial conditions 
for the biochemical species in the output network are unknown 
or experimentally difficult to control. For example, controlled
synthetic systems may be split from a test-tube 
into a large number of cell-like vesicles~\cite{Vesicles1,Vesicles2,Vesicles3}, 
with each vesicle initially containing in general a different
abundance of the underling species. A similar effect occurs during the
division of living cells, which induces an extrinsic noise into the initial
conditions of the corresponding daughter cells~\cite{CellCycle}. 
A robust controller guarantees that, despite the difficult-to-control initial conditions, 
the long-time behavior of the system remains controlled. 

The second property satisfied by a robust controller 
is necessary for systematically controlling the desired statistic, 
as the control may be reached by fine-tuning the parameters 
appearing in the controller, which can be experimentally manipulated, 
without the need to factor in the generally unknown 
and uncontrollable parameters from the underlying black-box input network.
An output statistic which does not explicitly depend on the input parameters is said to
display \emph{robust perfect adaptation}~\cite{Adaptation,CellSignal}.
If such a property holds only in an asymptotic limit of
some of the parameters from the controller (i.e. 
under suitable time-scale separation between the controller and the input network), 
then we say that the statistic displays \emph{asymptotic robust perfect adaptation}.
Perfect and near-perfect adaptations have been hypothesized 
to play important roles in systems biology, e.g. in cell signaling, glycolysis
and chemokinesis~\cite{CellSignal,Glycolic,Chemotaxis1,Chemotaxis2,Chemotaxis3}.
These biochemical processes, and many others, involving 
multi-stability, oscillations, and bifurcations
at both the deterministic and stochastic levels~\cite{CellCycle,Circadian,Bistability,Kepler,
Me_Homoclinic,Me_Limitcycles,Me_NAA,Me_Mixing}, 
 also often display time-scale 
separations (fast-slow dynamics)~\cite{Multiscale}. 
In some of the phenomenological models of such processes, 
the fast dynamics are eliminated (averaged out).
Furthermore, the distinction between perfect and near-perfect adaptations
may be difficult to infer, since the underlying experimental 
biochemical data is subjected to measurement errors.
As a consequence, asymptotic
robust perfect adaptation may play a greater role than its 
non-asymptotic counterpart in biochemical settings. 
In fact, robust perfect adaptation 
is a purely structural property of a reaction network and, as such, 
is fragile under network perturbations (e.g. addition of new reactions).
An instance of the structural fragility has been investigated in~\cite{LeakyControl1}, where it has been shown
that robustness of the integral-feedback controller put forward in~\cite{Khammash}, 
under the addition of degradation reactions in the controlling species, 
is recovered only under an appropriate time-scale separation. 

\begin{table*}[!htbp]
\vskip -2.5 cm
 \renewcommand\tablename{Algorithm}
\hrule
\vskip 2.5 mm
\textbf{Input}: Let the mass-action kinetics input network be given by
\begin{align}
\mathcal{R}_{\alpha} & = \mathcal{R}_{\alpha}(\mathcal{X}), 
\label{eq:input_algorithm}
\end{align}
where $\mathcal{X} = \{X_1, X_2, \ldots, X_N\}$, 
with the rate coefficients $\boldsymbol{\alpha} = (\alpha_1, \alpha_2, \ldots)$.
Assume explicit control is sought over the stochastic dynamics of the
target species $\mathcal{X}_{\tau} = \{X_1, X_2, \ldots, X_n\}
\subseteq \mathcal{X}$, $1 < n \le N$. 

\textbf{Stochastic morpher}: Consider the controller, called the stochastic morpher, given by  
\begin{align}
\mathcal{R}_{\beta,\gamma}(\mathcal{X}_{\tau}, \mathcal{Y}, \mathcal{Z})  & = 
\mathcal{R}_{\beta}(\mathcal{Y}) \cup 
\mathcal{R}_{\gamma}^{\varepsilon}(\mathcal{X}_{\tau}, \mathcal{Z}; \, \mathcal{Y}), 
\label{eq:SM_algorithm}
\end{align}
depending on the target species $\mathcal{X}_{\tau}$, 
and two additional sets of species: the controlling and mediating species,
$\mathcal{Y}$ and $\mathcal{Z}$, respectively.  
The sub-network $\mathcal{R}_{\beta}(\mathcal{Y})$ is given by
\begin{align}
\mathcal{R}_{\beta}(\mathcal{Y}): \; \; 
& & 2 Y_1 & \xrightarrow[]{\beta_{1,1}} Y_1
                      \xrightarrow[]{\beta_{1,2}} Y_2 
                     \xrightarrow[]{\beta_{2,3}} Y_3 \xrightarrow[]{\beta_{3,4}} \ldots 
	                 \xrightarrow[]{\beta_{M-1,M}} Y_{M}
					 \xrightarrow[]{\beta_{M,1}} Y_{1},
\label{eq:R_beta_alg}
\end{align}
with the $M$ controlling species $\mathcal{Y} = \{Y_1, Y_2, \ldots, Y_M\}$, 
whose sum of copy-numbers is assumed to be initially non-zero, $\sum_{i = 1}^M Y_i(0) \ne 0$.
Consider two choices for $\mathcal{R}_{\gamma}^{\varepsilon}(\mathcal{X}_{\tau}, \mathcal{Z}; \, \mathcal{Y})$:
\begin{enumerate}
\item[{\rm (i)}] \textbf{Lower-resolution control}. 
$\mathcal{R}_{\gamma}^{\varepsilon} = \mathcal{R}_{\gamma}^{\mathcal{P}}(\mathcal{X}_{\tau}; \, \mathcal{Y})
= \mathcal{R}_{\gamma_0}^{\varepsilon}(\mathcal{X}_{\tau}; \, \varnothing) 
\cup_{i = 1}^M \mathcal{R}_{\gamma_i}^{\varepsilon}(\mathcal{X}_{\tau}; \, Y_i)$:
\begin{align}
\mathcal{R}_{\gamma_0}^{\varepsilon}(\mathcal{X}_{\tau}; \, \varnothing) : \; & & X_j & 
\xrightarrow[]{\gamma_{0,j}/\varepsilon}  \varnothing, 
\; \; \; \; \; \; \; \; \; \; \; \;  \;
\textrm{for } j \in \{1, 2, \ldots, n\}, \nonumber \\
\mathcal{R}_{\gamma_i}^{\varepsilon}(\mathcal{X}_{\tau}; \, Y_i): \; & & Y_i & 
\xrightarrow[]{\gamma_{i,j}/\varepsilon}  Y_i + X_j, 
\; \; \; \; 
\textrm{for } 
i \in \{1, 2, \ldots, M\},  
\; 
j \in \{1, 2, \ldots, n\}, 
\;  
0 < \varepsilon \ll 1,
\label{eq:Poisson_control_alg}
\end{align} 
with $\gamma_{0,j} > 0$ for $j \in \{1, 2, \ldots, n\}$.
\item[{\rm (ii)}] \textbf{Higher-resolution control}. 
$\mathcal{R}_{\gamma}^{\varepsilon} = \mathcal{R}_{\gamma}^{\delta}(\mathcal{X}_{\tau}, \mathcal{Z}; \, \mathcal{Y})
= \mathcal{R}_{\gamma_0}^{\mu,\varepsilon,\sigma}(\mathcal{X}_{\tau}, \mathcal{Z}; \, \varnothing) 
\cup_{i = 1}^M \mathcal{R}_{\gamma_i}^{\mu,\varepsilon,\sigma}(\mathcal{Z}; \, Y_i)$:
\begin{align}
\mathcal{R}_{\gamma_0}^{\mu,\varepsilon,\sigma}(\mathcal{X}_{\tau}, \mathcal{Z}; \, \varnothing) : \; 
& & \varnothing & \xrightarrow[]{1/\varepsilon}  X_j, 
\; \; \; \; X_j  \xrightleftharpoons[1/\mu]{\gamma_{0,j,1}} Z_{j,1}, \nonumber \\
& & X_j + Z_{j,l} &  \xrightleftharpoons[1/\mu]{\gamma_{0,j,l+1}} Z_{j,l+1},
\; \; \; \; \; \; \; 
\textrm{for } j \in \{1, 2, \ldots, n\},  
\;
l \in \{1, 2, \ldots, c_j-1\}, \nonumber \\
\mathcal{R}_{\gamma_i}^{\mu,\varepsilon,\sigma}(\mathcal{Z}; \, Y_i): \; 
& & Y_i + Z_{j,x_{i,j} + 1} &  \xrightarrow[]{\gamma_{i,j}}  Y_i + Z_{j,x_{i, j}}, 
\; \; \; \; 
\textrm{for } i \in \{1, 2, \ldots, M\},  
\;
 j \in \{1, 2, \ldots, n\}, \nonumber \\
&&&
\; \; \; \; \; \; \; \; \; \; \; 
\; \; \; \; \; \; \; \; \; \; \; 
\; \; \; \; \; \; \; \; \; \; \;  
\; \; \; \; \; \; \; \; \; 
 \{x_{i,j}\}_{i = 1}^M \in \{0, 1,\ldots, c_j-1\},
\label{eq:Kronecker_control_alg}
\end{align} 
\end{enumerate}
where $\mathbf{c} = (c_1, c_2, \ldots, c_n) \in \mathbb{Z}_{>}^{n}$ is the truncation vector,
$\mathcal{Z} =\{\{Z_{j,l}\}_{j = 1}^{n}\}_{l =0}^{c_j}$ are the auxiliary species, 
with $Z_{j,0} \equiv \varnothing$. 
The rate coefficients from~(\ref{eq:Kronecker_control_alg}) are assumed to satisfy
 the kinetic conditions given by~(\ref{eq:kinetic_conditions_alg})
in Appendix~\ref{app:analysis}, with $0 < \mu \ll \varepsilon, \sigma \ll 1$. 

\textbf{Output}: Embedding the stochastic morpher~(\ref{eq:SM_algorithm})--(\ref{eq:Kronecker_control_alg})
into the input network~(\ref{eq:input_algorithm}), gives an output network
\begin{align}
\mathcal{R}_{\alpha,\beta,\gamma}(\mathcal{X}, \mathcal{Y}, \mathcal{Z}) & = 
 \mathcal{R}_{\alpha}(\mathcal{X}) \cup \mathcal{R}_{\beta}(\mathcal{Y}) \cup 
\mathcal{R}_{\gamma}^{\varepsilon}(\mathcal{X}_{\tau}, \mathcal{Z}; \, \mathcal{Y}), \label{eq:output_algorithm}
\end{align}
whose species $\mathcal{X}_{\tau}$, under suitable assumptions, have controlled stochastic dynamics.
In particular, the stationary PMF of $\mathcal{X}_{\tau}$ is 
a linear combination of Poisson distributions centered at the points
$(x_1, x_2, \ldots, x_n) = (\gamma_{i,1}/\gamma_{0,1},  \gamma_{i,2}/\gamma_{0,2}, 
\ldots, \gamma_{i,n}/\gamma_{0,n})$,
 if $\mathcal{R}_{\gamma} =  \mathcal{R}_{\gamma}^{\mathcal{P}}$, 
and Kronecker-delta distributions centered at the points 
$(x_1, x_2, \ldots, x_n) = (x_{i,1}, x_{i,2}, 
\ldots, x_{i,n})$, with $x_{i,j} \in  [0, c_j-1]$,
 if $\mathcal{R}_{\gamma} =  \mathcal{R}_{\gamma}^{\delta}$,
 for $i \in \{1, 2, \ldots, M\}$,  
see Theorem~\ref{theorem:Poisson_Kronecker} in Appendix~\ref{app:analysis}. 
\smallskip
\hrule
\smallskip
\caption{{\it \noindent The algorithm for control of biochemical networks using the stochastic morpher.}}
\label{alg:SM_algorithm}
\end{table*}

Biochemical controllers developed in the literature 
have been largely focused on manipulating the stationary
mean (first-moment) of the target species~\cite{Control1,Control2,Khammash}, 
and reducing their stationary variance (second-moment)~\cite{Khammash2,Cardelli}. 
Such an approach may be seen as a step forward from controlling the 
deterministic dynamics of the
target species, to which the underlying stochastic dynamics converge in
 the thermodynamic limit~\cite{Kurtz}. 
However, many biochemical phenomena in systems biology, 
including cellular differentiation and memory, 
quorum sensing, bacterial chemokinesis and antibiotic resistance, are realized
via higher-order moments of the underlying probability 
distributions~\cite{Switching,Bistability,Swarming,Kepler,Me_Mixing,Me_NAA}. 
Particularly important is the number and configuration of
the modes (maxima) of the probability distributions, 
and the timing and pattern of stochastic 
switching in the underlying sample paths.
Such dynamically exotic and biochemically important phenomena cannot be achieved
using controllers which target only the mean and variance. 
To bridge the gap, in this paper, we develop a robust controller
called \emph{stochastic morpher}, presented in Algorithm~\ref{alg:SM_algorithm}, 
which is inspired by the stochastic phenomenon called 
\emph{noise-induced mixing}~\cite{Me_Mixing}, 
which some gene-regulatory networks utilize for dynamical control in vivo~\cite{Kepler}.
As appropriate reactions from the stochastic morpher fire faster,
 overriding the reactions from the underlying black-box input network, 
the probability distribution of the target species, from the corresponding output network, 
 gradually transforms (morphs)
into a desired predefined form.
More precisely, control may be achieved, under suitable time-scale separations, 
at two different levels of resolution:
at the lower-resolution level, and at a lower biochemical cost, 
one may control the number and configuration
of the modes in the target multi-modal probability distribution (weak control), 
and the mean timing and pattern of stochastic switching
 in the underlying multi-stable sample paths (strong control).
At the higher-resolution level, and at a higher biochemical cost, 
one may achieve arbitrary 
target stationary distributions on bounded state-spaces 
(control over all of the stationary moments). 
The achieved probability distributions, and hence all of the underlying moments, 
display asymptotic robust perfect adapation.

The rest of the paper is organized as follows. 
In Section~\ref{sec:example1}, we introduce 
the lower- and higher-resolution control from
Algorithm~\ref{alg:SM_algorithm} by applying it 
on the one-species first-order (uni-molecular)
production-degradation test network~(\ref{eq:input_1}). 
In Section~\ref{sec:example2}, we focus on the
lower-resolution control in greater detail, by applying it
 on the three-species second-order (bi-molecular) 
test network~(\ref{eq:input_2}). We explicitly jointly control 
 two input biochemical species (target species), and 
outline how the remaining (residual) species is implicitly affected. 
In Section~\ref{sec:example3}, we apply 
Algorithm~\ref{alg:SM_algorithm} on the gene-expression 
network~(\ref{eq:input_3}), and demonstrate how 
implicit control may be achieved. In particular, 
we explicitly influence the mRNA (target species) in 
a suitable way, ensuring that the translated 
protein (residual species) is implicitly controlled. 
In Section~\ref{sec:experiments}, we put forward a blueprint for an experimental
realization of the stochastic morpher, using strand-displacement
DNA nanotechnology. Finally, we conclude by presenting a summary and discussion 
 in Section~\ref{sec:discussion}.
The notation and background theory utilized in the paper are introduced as needed,
and are summarized in Appendix~\ref{app:background}.
General properties and convergence of the stochastic morpher, outlined via
specific examples in Sections~\ref{sec:example1}--\ref{sec:example3}, 
are rigorously established in Appendix~\ref{app:analysis}.

\section{Production-degradation input network} \label{sec:example1}
Consider the one-species uni-molecular
 input network $\mathcal{R}_{\alpha}^1 = \mathcal{R}_{\alpha}^1(X)$,
under mass-action kinetics, given by
\begin{align}
\mathcal{R}_{\alpha}^1:
\hspace{1.0cm} \varnothing  \xrightleftharpoons[\alpha_{2}]{\alpha_{1}} X,
\label{eq:input_1}
\end{align}
where we adopt the convention of denoting two 
irreversible reactions (in this case, 
$\varnothing \to X$ and $X \to \varnothing$)
jointly as a single reversible reaction (in this case, 
$\varnothing \xrightleftharpoons[]{} X$), 
for notational convenience. 
In this paper, biochemical species, and their copy-numbers
as a function of time $t$, are represented with
upper-case letters (such as $X$, and $X(t)$, respectively), 
while the copy-number values are denoted by the corresponding lower-case letters (such as $x$), 
with the latter being elements of the set of non-negative integers, 
denoted by $\mathbb{Z}_{\ge}$.
Symbol $\varnothing$ denotes biochemical species which 
are not explicitly taken into an account. 
Furthermore, for simplicity, we assume
the non-negative rate coefficients, displayed above the reaction arrows,
 are dimensionless, and we denote them using the same letter as the sub-script 
of the corresponding reaction network. 
See also Appendix~\ref{app:background} for a summary of the 
notation used in this paper. 

In what follows, we fix the (dimensionless) rate coefficients
of the input network $\mathcal{R}_{\alpha}^1(X)$ to 
$\boldsymbol{\alpha} = (\alpha_1, \alpha_2) = (1, 1/15)$. 
The stationary probability-mass function (PMF) 
of~(\ref{eq:input_1}), describing the long-time dynamics of the input network 
and denoted by $p_{\infty}(x)$,
is given by the Poisson distribution with mean $\alpha_1/\alpha_2$ 
(in this paper, we also say that the Poisson distribution is centered at 
$\alpha_1/\alpha_2$), denoted by 
$p_{\infty}(x) = \mathcal{P}(x; \, \alpha_1/\alpha_2)$. 
For the particular choice of the rate coefficients, the Poisson PMF
is centered at $x = \alpha_1/\alpha_2 = 15$
and is shown as the black dots, interpolated with solid black lines for visual clarity,
in Figures~\ref{fig:Poisson}--\ref{fig:Kronecker}. 
In the rest of this section, we embed different controllers
into the input network~(\ref{eq:input_1}), in order to desirably
influence the dynamics of the species $X$ and showcase
the capabilities of Algorithm~\ref{alg:SM_algorithm}. 
Network~(\ref{eq:input_1}) may be interpreted as a simplified
model of genetic transcription, with $X$ representing an mRNA species, 
being transcribed and degraded, see also Section~\ref{sec:example3}
for a more-detailed model. 
Despite the simplicity of~(\ref{eq:input_1}), 
it serves as a test network for biochemical control theory, 
outlining some of the advantages and disadvantages a controller may have. 
For example, the controller put forward in~\cite{Khammash}, when
embedded into the test network~(\ref{eq:input_1}), is unable to guide
the stationary mean of the species $X$ below the value $\alpha_1/\alpha_2$.

\begin{figure}[!htbp]
\vskip -1.0 cm
\centerline{
\hskip 0mm
\includegraphics[width=0.35\columnwidth]{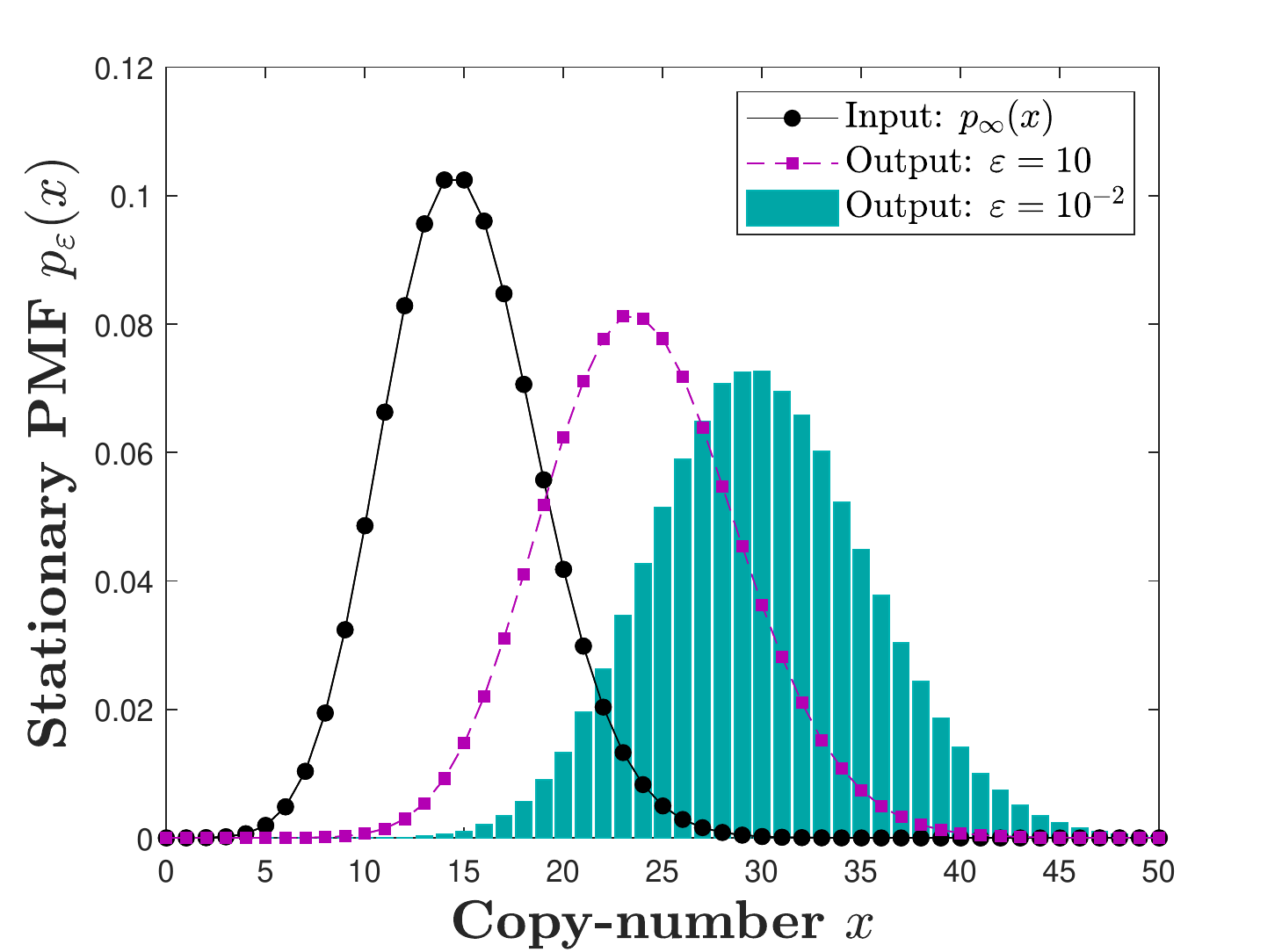}
\hskip 1.5cm
\includegraphics[width=0.35\columnwidth]{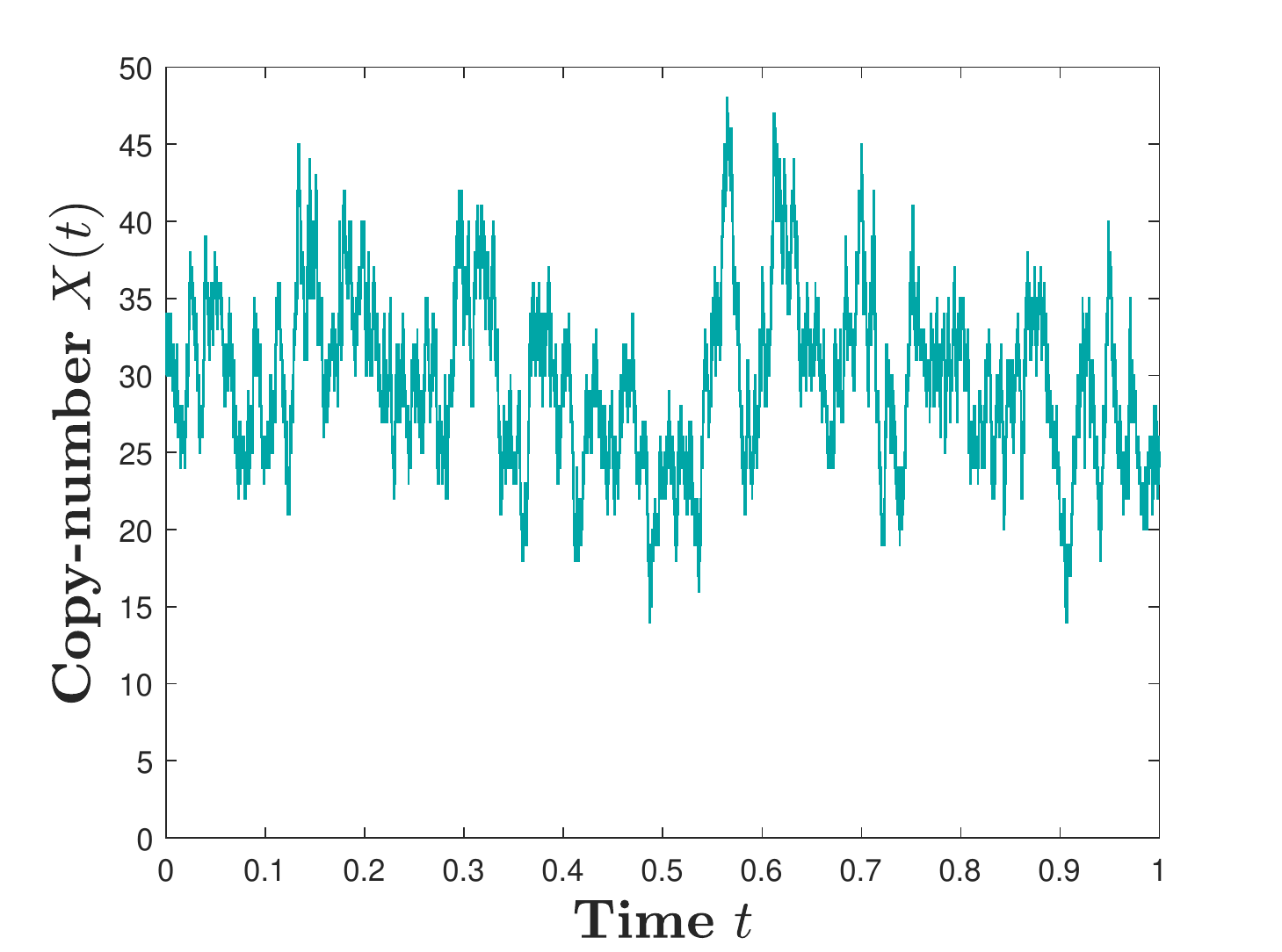}
}
\vskip -4.5cm
\leftline{\hskip 1.5cm (a) \hskip 6.7cm (b)} 
\vskip 4.0cm
\centerline{
\hskip 0mm
\includegraphics[width=0.35\columnwidth]{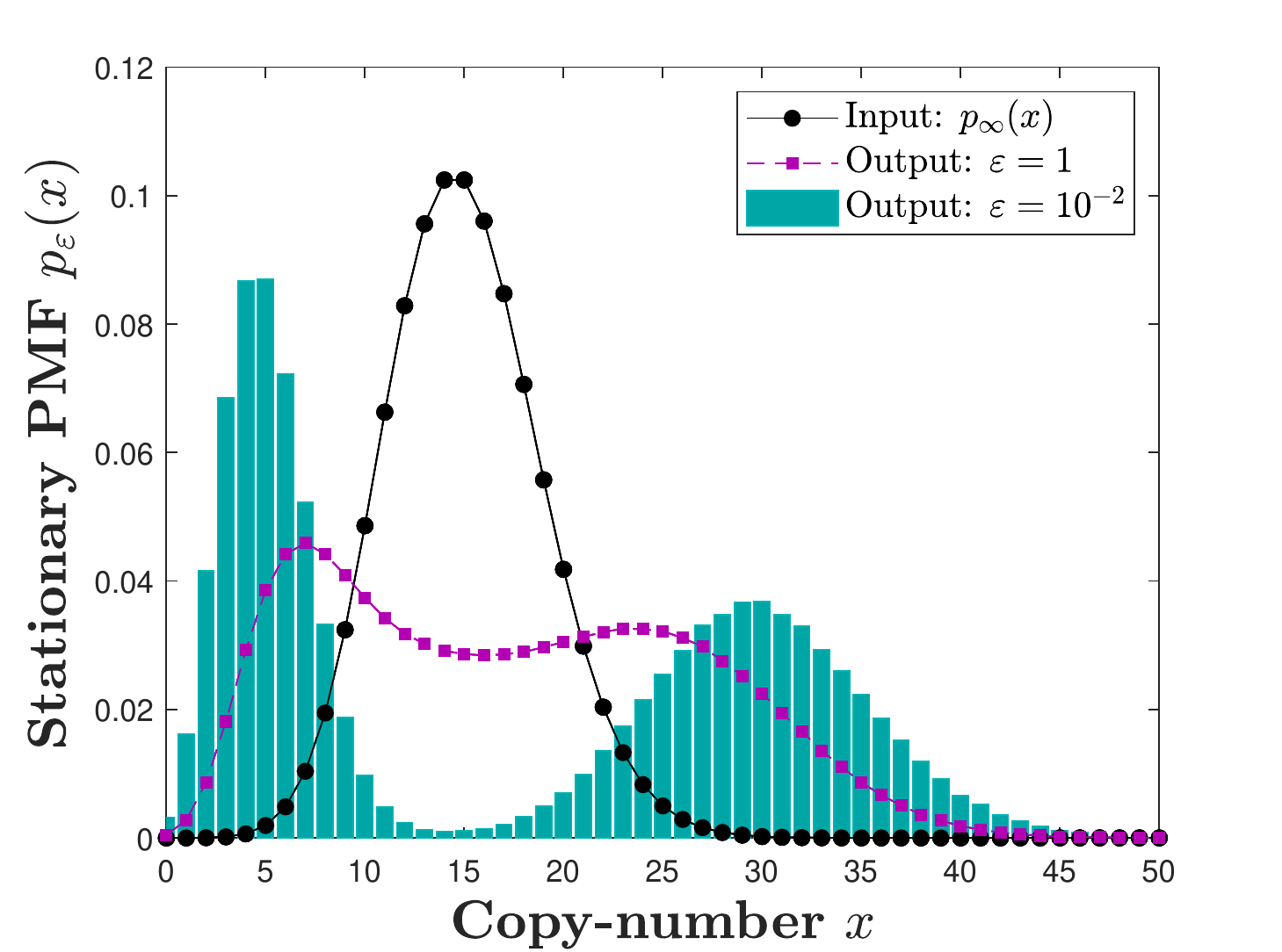}
\hskip 1.5cm
\includegraphics[width=0.35\columnwidth]{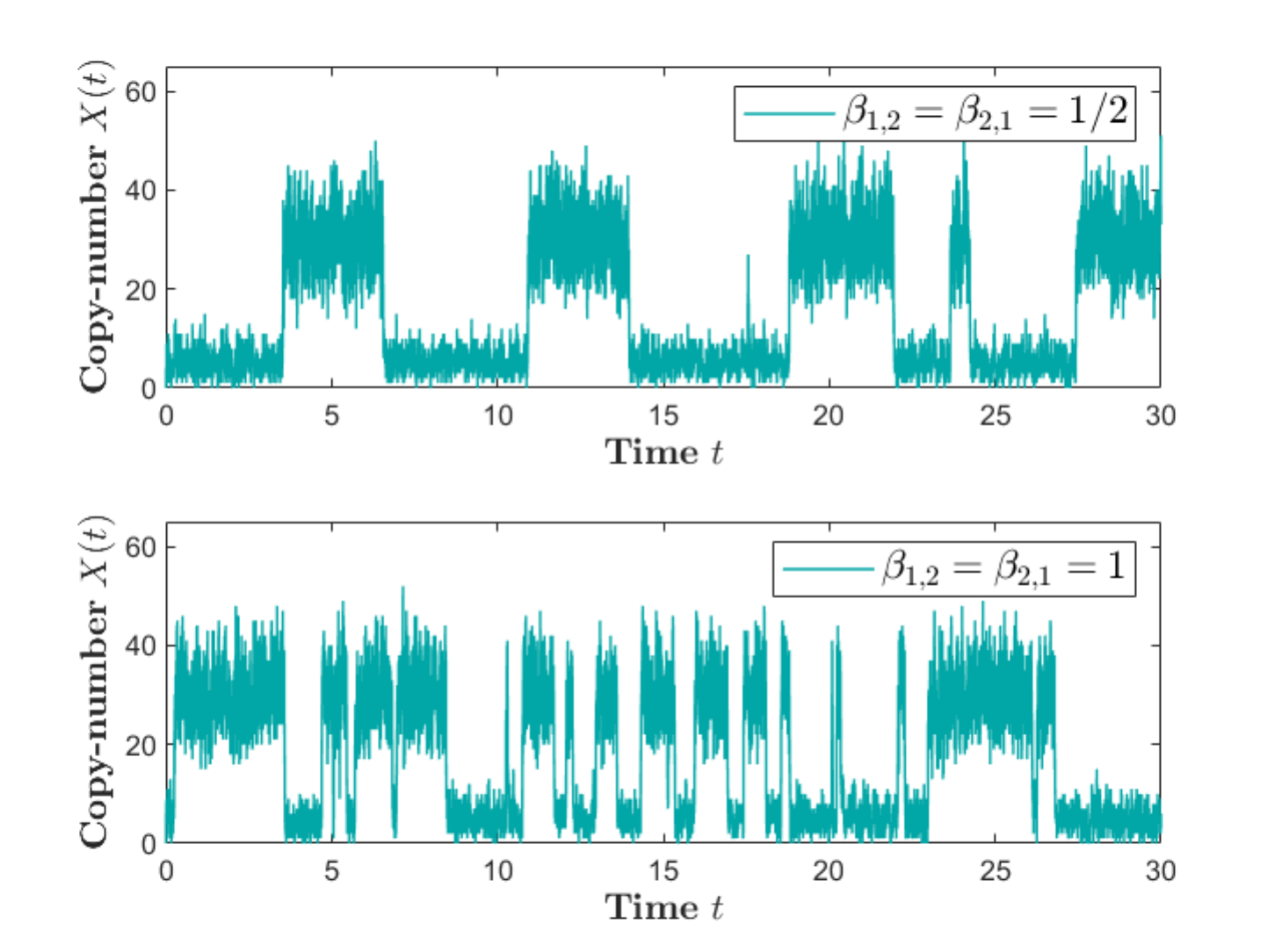}
}
\vskip -4.5cm
\leftline{\hskip 1.5cm (c) \hskip 6.7cm (d)} 
\vskip 4.0cm
\centerline{
\hskip 0mm
\includegraphics[width=0.35\columnwidth]{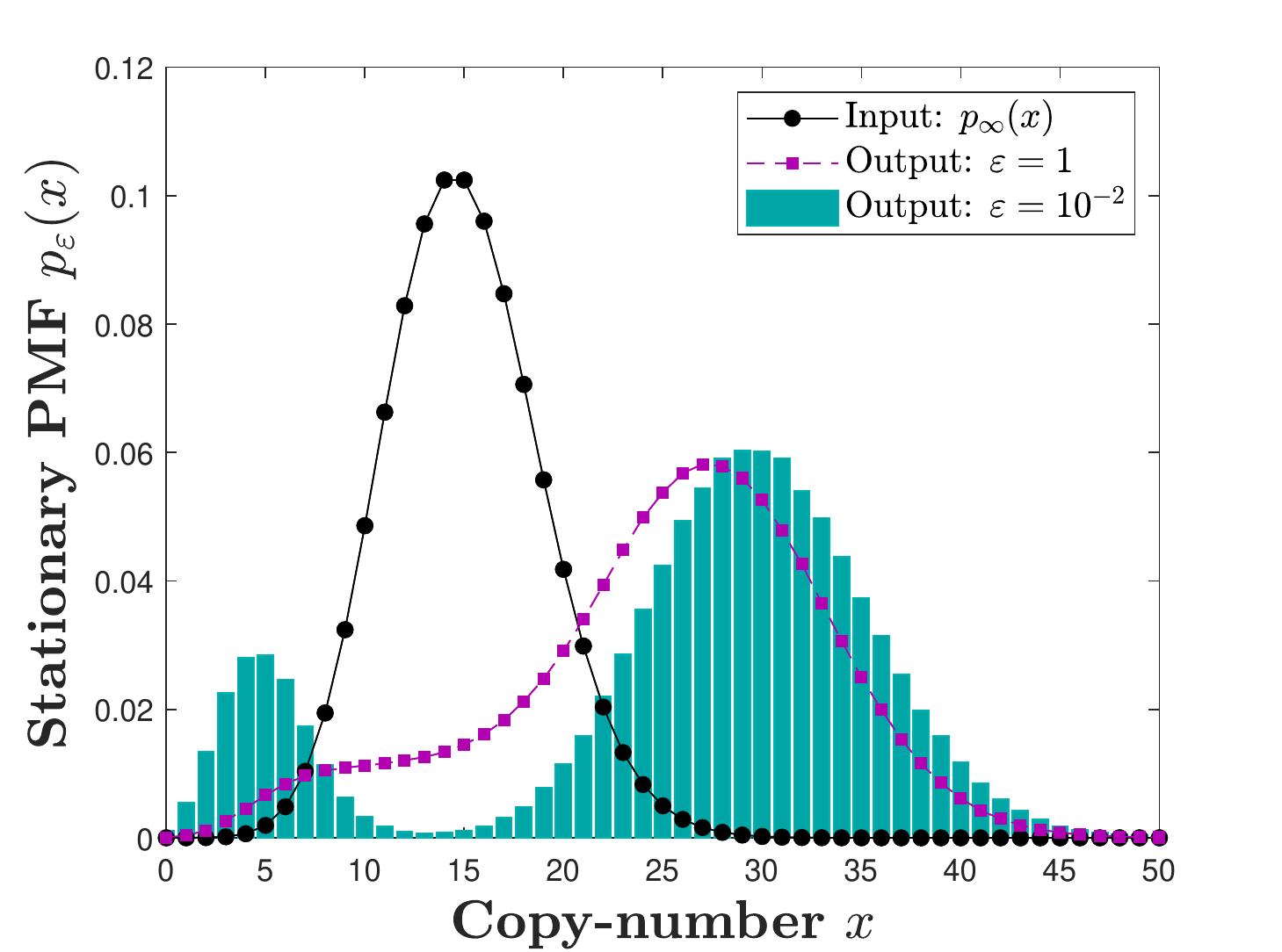}
\hskip 1.5cm
\includegraphics[width=0.35\columnwidth]{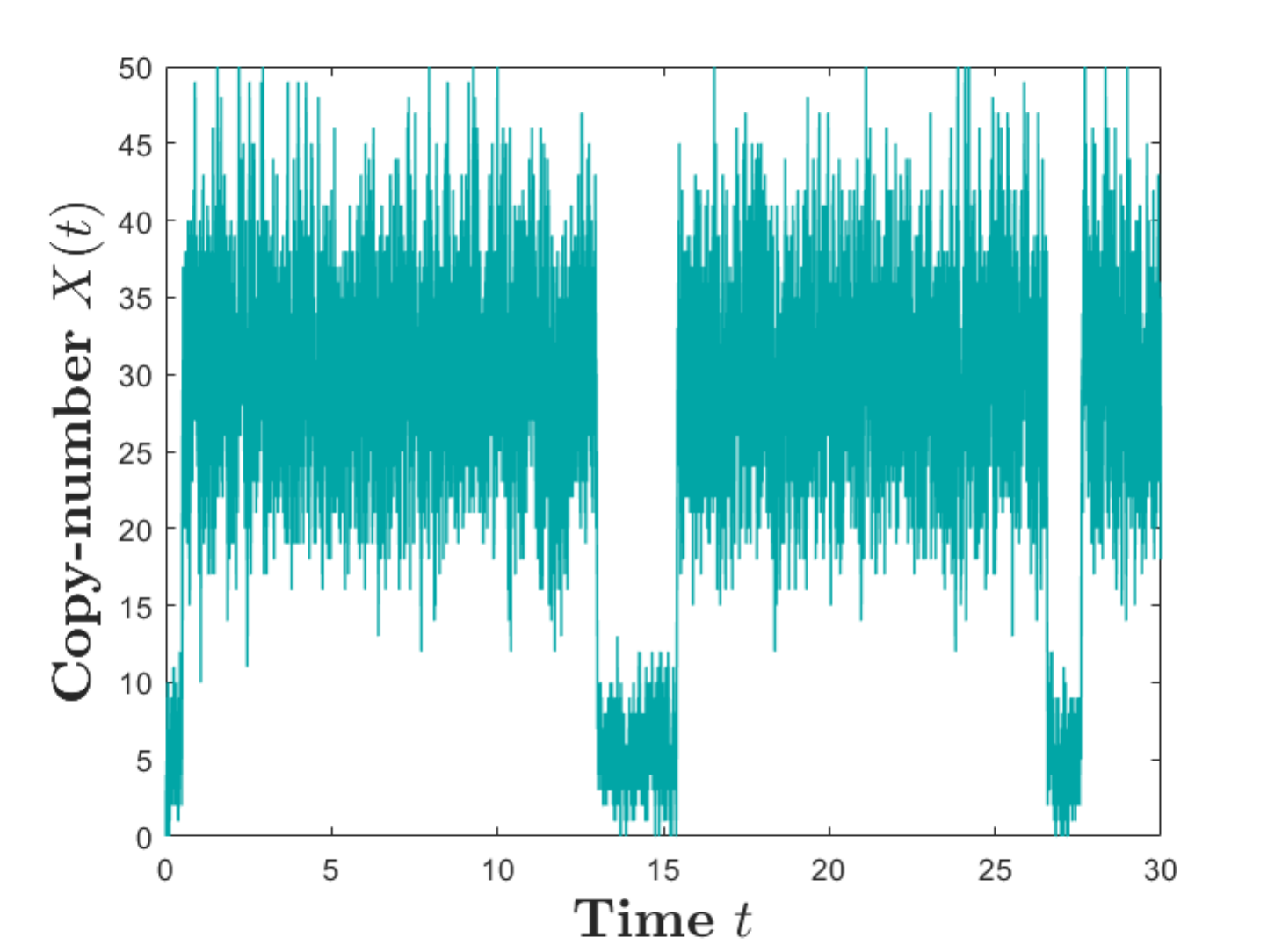}
}
\vskip -4.5cm
\leftline{\hskip 1.5cm (e) \hskip 6.7cm (f)} 
\vskip 4.0cm
\centerline{
\hskip 0mm
\includegraphics[width=0.35\columnwidth]{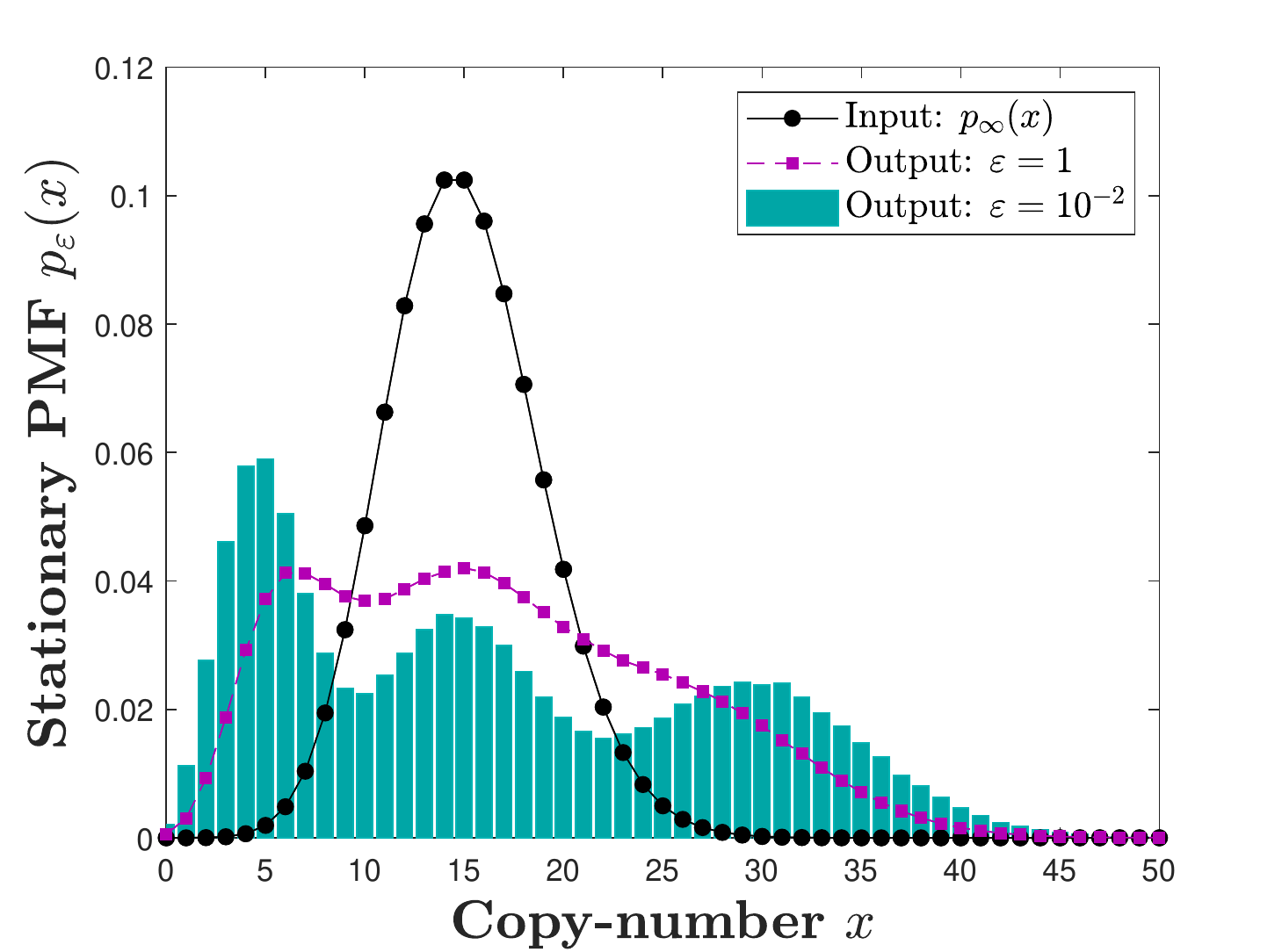}
\hskip 1.5cm
\includegraphics[width=0.35\columnwidth]{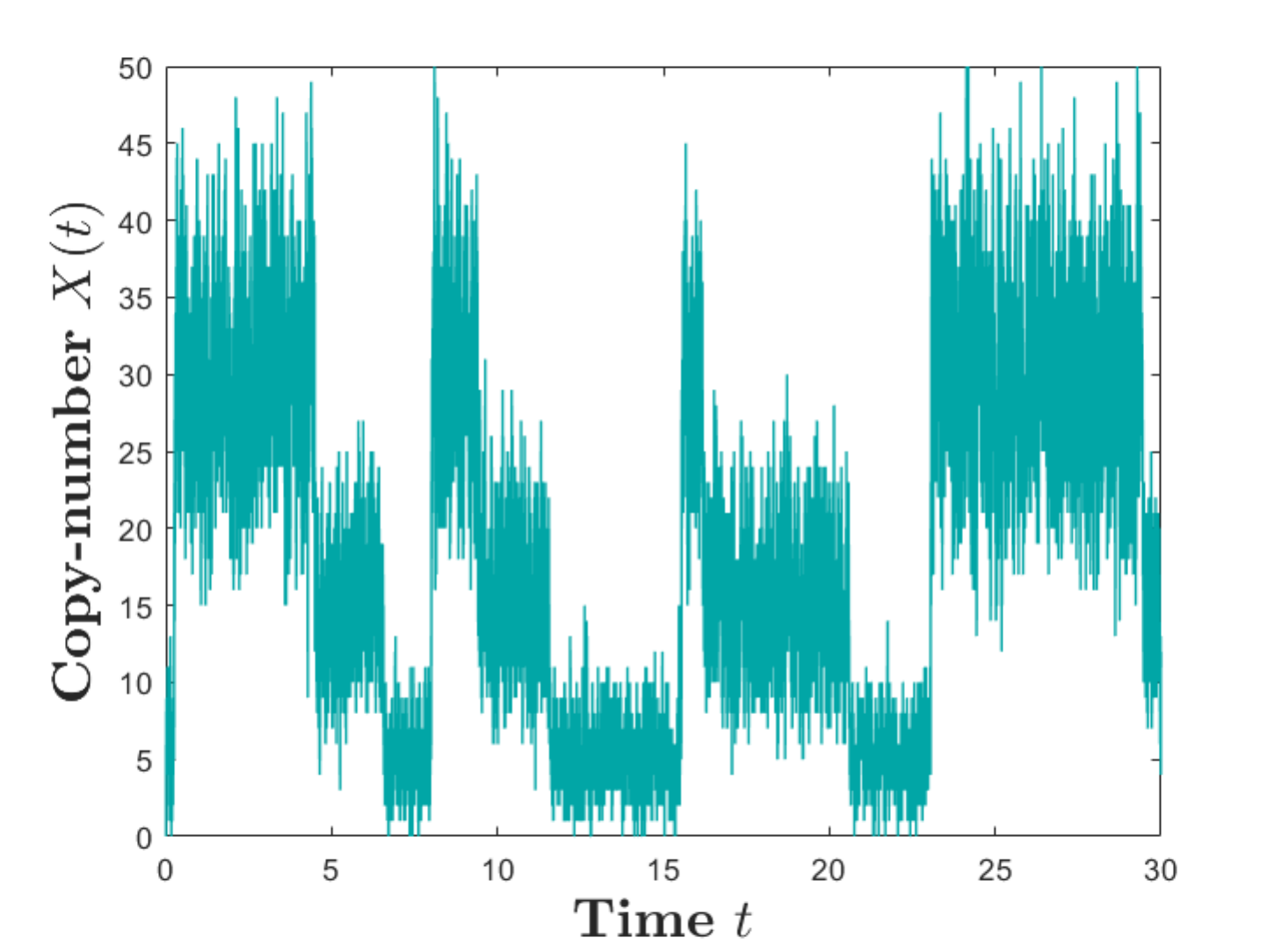}
}
\vskip -4.5cm
\leftline{\hskip 1.5cm (g) \hskip 6.7cm (h)} 
\vskip 3.8cm
\caption{ 
 Application of the lower-resolution control from 
{\rm Algorithm~\ref{alg:SM_algorithm}} on the input network~{\rm (\ref{eq:input_1})}
with $(\alpha_1, \alpha_2) = (1, 1/15)$. 
{\it The stationary {\rm PMF} of the input network is displayed
as the interpolated black dots. 
Panel {\rm (a)} shows the stationary $x$-marginal {\rm PMF}
of the output network~{\rm (\ref{eq:input_1})$\cup$(\ref{eq:R_gamma_Poisson_1})}, 
with $\beta_{1,1} = 1$ and $(\gamma_0, \gamma_1) = (1,30)$, 
obtained numerically for two different values of the asymptotic parameter $\varepsilon$, 
shown as the interpolated purple squares and the cyan histogram. 
Panel {\rm (b)} displays a representative sample path, obtained by applying the
Gillespie algorithm, corresponding to the cyan
histogram from panel {\rm (a)}.
Analogous plots are shown for the output 
network~{\rm (\ref{eq:input_1})$\cup$(\ref{eq:R_gamma_Poisson_2})}
with $(\gamma_0, \gamma_1,\gamma_2) = (1,5,30)$
and $(\beta_{1,1}, \beta_{1,2}, \beta_{2,1}) = (1, 1/2, 1/2)$ (as well as 
$(\beta_{1,1}, \beta_{1,2}, \beta_{2,1}) = (1, 1, 1)$) in panels {\rm (c)}--{\rm (d)}, 
while with $(\beta_{1,1}, \beta_{1,2}, \beta_{2,1}) = (1, 5/6, 1/6)$
in panels {\rm (e)}--{\rm (f)}.
Finally, panels {\rm (g)}--{\rm (h)} show the plots for the
output network~{\rm (\ref{eq:input_1})$\cup$(\ref{eq:R_gamma_Poisson_3})}
with $(\gamma_0, \gamma_1,\gamma_2,\gamma_3) = (1,5,30,15)$
and $(\beta_{1,1}, \beta_{1,2}, \beta_{2,3}, \beta_{3,1}) = (1, 1/3, 1/3,1/3)$. 
For simplicity, the sample paths have been generated with the controlling species
initially satisfying $\sum_{i = 1}^M Y_i(0) = 1$.
}
}
\label{fig:Poisson}
\end{figure}

\subsection{Lower-resolution control} \label{sec:low_resolution}
\emph{Uni-modality}.  
Consider the controller $\mathcal{R}_{\beta} \cup \mathcal{R}_{\gamma}^{\mathcal{P}} = 
\mathcal{R}_{\beta}(Y_1) \cup \mathcal{R}_{\gamma}^{\mathcal{P}}(X; \, Y_1)$, 
called a stochastic morpher, given by
\begin{align}
\mathcal{R}_{\beta}:
& & 2 Y_1 & \xrightarrow[]{\beta_{1,1}} Y_1, \nonumber \\
\mathcal{R}_{\gamma}^{\mathcal{P}}: 
& & \mathcal{R}_{\gamma_0}^{\varepsilon}: \; \; \; \; X & \xrightarrow[]{\gamma_{0}/\varepsilon} \varnothing, \nonumber \\
& & \mathcal{R}_{\gamma_1}^{\varepsilon}:  \; \; \; \; Y_1 & \xrightarrow[]{\gamma_{1}/\varepsilon} Y_1 + X, 
\; \; \; \; 0 < \varepsilon \ll 1,
\label{eq:R_gamma_Poisson_1}
\end{align}
where $X$ is the target species, while $Y_1$ is the controlling species. 
The controller~(\ref{eq:R_gamma_Poisson_1}) consists of two sub-networks:
$\mathcal{R}_{\beta}(Y_1)$, describing a bi-molecular
degradation of the controlling species $Y_1$, and
$\mathcal{R}_{\gamma}^{\mathcal{P}}(X; \, Y_1) = \mathcal{R}_{\gamma_0}^{\varepsilon}(X; \, \varnothing) 
\cup  \mathcal{R}_{\gamma_1}^{\varepsilon}(X; \, Y_1)$ 
describing a degradation of the target species $X$, and a 
production of $X$ catalyzed by $Y_1$. 
To emphasize the catalytic role of $Y_1$ in
$\mathcal{R}_{\gamma_1}^{\varepsilon}$, we write 
$\mathcal{R}_{\gamma_1}^{\varepsilon} = \mathcal{R}_{\gamma_1}^{\varepsilon}(X; \, Y_1)$ 
and, since $\mathcal{R}_{\gamma_0}^{\varepsilon}$ is not catalyzed by $Y_1$, 
we write $\mathcal{R}_{\gamma_0}^{\varepsilon} = 
\mathcal{R}_{\gamma_0}^{\varepsilon}(X; \, \varnothing)$. 
The super-script $\mathcal{P}$ appearing in $\mathcal{R}_{\gamma}^{\mathcal{P}}$ 
stands for the Poisson distribution, as motivated shortly. 
See also Figure~\ref{fig:Control_Theory}, and Appendices~\ref{app:biochemical_control} 
and~\ref{app:analysis}, for more details on the notation. 

In what follows, we analyze the output network 
$\mathcal{R}_{\alpha}^1 \cup \mathcal{R}_{\beta} 
\cup \mathcal{R}_{\gamma}^\mathcal{P}$, obtained by embedding the 
stochastic morpher~(\ref{eq:R_gamma_Poisson_1})
into the input network~(\ref{eq:input_1}), which we compactly denote
by~(\ref{eq:input_1})$\cup$(\ref{eq:R_gamma_Poisson_1}).
Assuming the copy-number of $Y_1$, denoted by $y_1 \in \mathbb{Z}_{\ge}$, is non-zero initially,
the sub-network $\mathcal{R}_{\beta}(Y_1)$ 
fires until the stationary value $y_1 = 1$ is reached.
On the other hand, the stationary marginal PMF of the target species $X$ from the output 
network~(\ref{eq:input_1})$\cup$(\ref{eq:R_gamma_Poisson_1}), denoted by
$p_{\varepsilon}(x)$, reads
$
p_{\varepsilon}(x) = \mathcal{P} \left(x; \, (\gamma_{1} + \varepsilon \alpha_1)/
(\gamma_{0} + \varepsilon \alpha_2) \right),
$
from which it follows that
\begin{align}
p_{\varepsilon}(x) =
\begin{cases}
\mathcal{P} \left(x; \, \frac{\alpha_1}{\alpha_2} \right), 
\; \; \textrm{as } \varepsilon \to \infty, \\[0.3cm]
\mathcal{P} \left(x; \, \frac{\gamma_{1}}{\gamma_{0}} \right), 
\; \; \textrm{as } \varepsilon \to 0.
\label{eq:R_gamma_Poisson_1_PMF}
\end{cases}
\end{align}
In words, as the sub-network $\mathcal{R}_{\gamma}^\mathcal{P}$ from the 
stochastic morpher $\mathcal{R}_{\beta} \cup \mathcal{R}_{\gamma}^\mathcal{P}$, 
given by~(\ref{eq:R_gamma_Poisson_1}),
fires faster, the input network $\mathcal{R}_{\alpha}^1$, 
given by~(\ref{eq:input_1}), is over-ridden, and
the stationary $x$-marginal PMF of the corresponding output network
$\mathcal{R}_{\alpha}^1 \cup \mathcal{R}_{\beta} \cup \mathcal{R}_{\gamma}^\mathcal{P}$
 is gradually transformed (morphed) from 
the Poisson PMF centered at $x = \alpha_1/\alpha_2$ to
the Poisson PMF centered at $x = \gamma_{1}/\gamma_{0}$.
Note that such a uni-modal morphing also controls
the first-moment (mean) of the output network. 
This is numerically confirmed in
 Figure~\ref{fig:Poisson}(a), where we display the stationary  $x$-marginal PMFs
of the output network~(\ref{eq:input_1})$\cup$(\ref{eq:R_gamma_Poisson_1}) for different values of $\varepsilon$, 
with the coefficients from 
$\mathcal{R}_{\beta}(Y_1)$ and $\mathcal{R}_{\gamma}^{\mathcal{P}}(X; \, Y_1)$
fixed to $\beta_{1,1} = 1$ and $\boldsymbol{\gamma}= (\gamma_{0}, \gamma_{1}) = (1,30)$, respectively. 
In particular, the stationary PMF is a Poisson distribution centered at $x = 24$ when $\varepsilon = 10$, shown 
as the purple squares, which, in accordance with~(\ref{eq:R_gamma_Poisson_1_PMF}), 
converges close to the Poisson distribution
centered at $x = \gamma_{1}/\gamma_{0} = 30$ when $\varepsilon = 10^{-2}$, 
shown as the cyan histogram in Figure~\ref{fig:Poisson}(a). 
A representative sample path, corresponding
to the histogram from Figure~\ref{fig:Poisson}(a), is displayed in Figure~\ref{fig:Poisson}(b), over a 
relatively short time-interval, allowing for the time-scale of the underlying fluctuations
around the mean to be more readily visually discernible.

We say that the stochastic morpher~(\ref{eq:R_gamma_Poisson_1}), 
when embedded into~(\ref{eq:input_1}), is a \emph{robust} controller, 
due to the fact that the stationary $x$-marginal PMF of the resulting output network, 
given by~(\ref{eq:R_gamma_Poisson_1_PMF}), 
satisfies the following two properties:
(a) it is independent of the initial conditions for $X$ and non-zero initial conditions for $Y_1$,  
and
(b) it is independent of the 
rate coefficients $\boldsymbol{\alpha}$ from the input network 
in the limit $\varepsilon \to 0$. 
See also Definition~\ref{app:biochemical_control}
from Appendix~\ref{app:biochemical_control}, 
as well as Theorem~\ref{theorem:Poisson_Kronecker} from 
Appendix~\ref{app:long_time} for
a more general result.
Note that, without the reaction $2 Y_1 \to Y_1$ from~(\ref{eq:R_gamma_Poisson_1}),
condition (a) would be violated, as
in this case the stationary $x$-marginal PMF would depend 
on the initial condition for $Y_1$. However, under suitable experimental implementations of
the stochastic morpher inside cell-like vesicles, 
one can achieve exactly one copy-number of $Y_1$ initially inside the vesicles ~\cite{Vesicles5},
analogous to having one copy-number of a gene inside a living cell,
hence eliminating the need for the reaction $2 Y_1 \to Y_1$, 
see also Section~\ref{sec:experiments}.
Note also that property (b) leads to asymptotic robust
perfect adaptation for \emph{all} of the underlying stationary statistics
of the target species $X$. 

The output network~{\rm (\ref{eq:input_1})$\cup$(\ref{eq:R_gamma_Poisson_1})}
is obtained as a particular application of
Algorithm~\ref{alg:SM_algorithm} on the input network~(\ref{eq:input_1}), 
with $\mathcal{R}_{\beta}(Y_1)$ and 
$\mathcal{R}_{\gamma}^\mathcal{P}(X; \, Y_1)$ obtained
by taking $M = 1$ controlling species $\mathcal{Y} = \{Y_1\}$, and $N = 1$ target species 
$\mathcal{X}_{\tau} = \{X \equiv X_1\}$, in~(\ref{eq:R_beta_alg})--(\ref{eq:Poisson_control_alg}).
Before proceeding to further applications of Algorithm~\ref{alg:SM_algorithm}, 
let us note that the degradation reaction $X \to\varnothing$,
introduced by the controller~(\ref{eq:R_gamma_Poisson_1}),
may be seen as an approximation of the reaction 
$Y_0 + X \to Y_0$, where $Y_0$ is a suitable additional 
controller (buffer) species, assumed to be maintained at a constant copy-number. 
One may also replace $X \to\varnothing$ from~(\ref{eq:R_gamma_Poisson_1}) with 
(a possibly experimentally less elegant, see Section~\ref{sec:experiments})
reaction $Y_1 + X \to Y_1$, without changing the conclusions made in this section.
More generally, in this paper, reactions present inside controllers, which 
depend explicitly only on the target species $\mathcal{X}_{\tau}$, are assumed to 
implicitly depend on suitable additional auxiliary (buffer) species, 
see also Appendix~\ref{app:analysis} for a further discussion.
Such simplifications have been employed for the purpose
of exposition, and do not limit experimental implementations of the stochastic morpher.
In fact, an introduction of suitable buffer species is a critical step in
 experimentally realizing biochemical reaction networks~\cite{DNAComputing1}.

\emph{Bi-modality}. 
Let us now apply Algorithm~\ref{alg:SM_algorithm} on the input network~(\ref{eq:input_1}),
with the stochastic morpher $\mathcal{R}_{\beta} \cup \mathcal{R}_{\gamma}^\mathcal{P} = 
\mathcal{R}_{\beta}(Y_1, Y_2) \cup \mathcal{R}_{\gamma}^\mathcal{P}(X; \, Y_1, Y_2)$, 
given by
\begin{align}
\mathcal{R}_{\beta}:
& & 2 Y_1 & \xrightarrow[]{\beta_{1,1}} Y_1 \xrightleftharpoons[\beta_{2,1}]{\beta_{1,2}} Y_2,
\nonumber \\
\mathcal{R}_{\gamma}^{\mathcal{P}}: 
& & \mathcal{R}_{\gamma_0}^{\varepsilon}:  \; \; \; \; X & \xrightarrow[]{\gamma_{0}/\varepsilon} \varnothing, \nonumber \\
& & \mathcal{R}_{\gamma_1}^{\varepsilon}:  \; \; \; \; Y_1 & \xrightarrow[]{\gamma_{1}/\varepsilon} Y_1 + X, \nonumber \\
& & \mathcal{R}_{\gamma_2}^{\varepsilon}:  \; \; \; \;  Y_2 & \xrightarrow[]{\gamma_{2}/\varepsilon} Y_2 + X, 
\; \; \; \; 0 < \varepsilon \ll 1.
\label{eq:R_gamma_Poisson_2}
\end{align}
Here, the sub-network $\mathcal{R}_{\beta}(Y_1,Y_2)$
describes first-order conversion between the two controlling species $Y_1$ and $Y_2$, 
with the reaction $2 Y_1 \to Y_1$ ensuring that the species $Y_1$ and $Y_2$
 satisfy the conservation law $(y_1 + y_2) = 1$ in the long-run,
independent of the non-zero initial conditions, 
i.e. the stationary (long-time) state-space is given by $(y_1,y_2) \in \{ (1,0), (0,1) \}$.
On the other hand, the sub-network $\mathcal{R}_{\gamma}^\mathcal{P}(X; \, Y_1, Y_2)$
 involves two production reactions for the species $X$, one catalyzed by
$Y_1$ and the other by $Y_2$. Ignoring the reaction $2 Y_1 \to Y_1$, 
note that~(\ref{eq:R_gamma_Poisson_2}) may be interpreted
as describing a gene, which switches between two different states $Y_1$ and $Y_2$, 
and produces an mRNA species $X$ at different rates, 
depending on the gene state~\cite{Kepler,Me_Mixing}.

When  $(y_1,y_2) = (1,0)$, reaction $\mathcal{R}_{\gamma_2}^{\varepsilon}$
from the sub-network $\mathcal{R}_{\gamma}^\mathcal{P}(X; \, Y_1, Y_2)$ cannot fire, 
and the remaining faster reactions
$\mathcal{R}_{\gamma_0}^{\varepsilon} \cup \mathcal{R}_{\gamma_1}^{\varepsilon}$ 
generate the Poisson PMF centered at $x = \gamma_{1}/\gamma_{0}$, 
while when $(y_1,y_2) = (0,1)$, the reaction $\mathcal{R}_{\gamma_1}^{\varepsilon}$
is switched off, and the active faster reactions 
$\mathcal{R}_{\gamma_0}^{\varepsilon} \cup \mathcal{R}_{\gamma_2}^{\varepsilon}$ 
 induce the Poisson PMF centered at $x = \gamma_{2}/\gamma_{0}$.
As the controlling species $Y_1$ and $Y_2$ convert between themselves, 
they mix the two Poisson PMFs from the faster network
$\mathcal{R}_{\gamma}^\mathcal{P}(X; \, Y_1, Y_2)$, 
which over-ride the PMF of the input network $\mathcal{R}_{\alpha}^1(X)$. 
In the limit $\varepsilon \to 0$, the resulting stationary $x$-marginal PMF of the 
output network~(\ref{eq:input_1})$\cup$(\ref{eq:R_gamma_Poisson_2}) 
is given by
\begin{align}
p_{0}(x) & = \left(1 + \frac{\beta_{1,2}}{\beta_{2,1}} \right)^{-1}
\mathcal{P} \left(x; \,  \frac{\gamma_{1}}{\gamma_{0}} \right) 
+ \left(1 + \frac{\beta_{2,1}}{\beta_{1,2}} \right)^{-1}
\mathcal{P} \left(x; \,  \frac{\gamma_{2}}{\gamma_{0}} \right), 
\label{eq:R_gamma_Poisson_2_PMF}
\end{align}
see also Theorem~\ref{theorem:Poisson_Kronecker} in Appendix~\ref{app:analysis}
for a general result.
Therefore, the controller~(\ref{eq:R_gamma_Poisson_2}) allows one to 
morph the input PMF into a bi-modal output one, which is a linear combination 
of two Poisson distributions, whose modes are controlled
with the rate coefficients $\boldsymbol{\gamma}$ from 
the faster sub-network $\mathcal{R}_{\gamma}^\mathcal{P}(X; \, Y_1, Y_2)$, 
while the PMF values at the modes (weights in~(\ref{eq:R_gamma_Poisson_2_PMF})) 
are determined by the rate coefficients $\boldsymbol{\beta}$ from 
the slower sub-network $\mathcal{R}_{\beta}(Y_1,Y_2)$. 

More precisely, the stationary PMF~(\ref{eq:R_gamma_Poisson_2_PMF}) 
is independent of the asymptotic parameter $\varepsilon$,
 and depends on $\beta_{1,2}$ and $\beta_{2,1}$ only via the ratio $\beta_{1,2}/\beta_{2,1}$,
which determines the PMF values at the two modes,
which in turn depend on the ratios $\gamma_1/\gamma_0$
and $\gamma_2/\gamma_0$.
However, note that the underlying sample paths do depend on
 $\varepsilon$, which determines the
 time-scale of the fluctuations near each of the two modes. 
Furthermore, the parameters $\beta_{1,2}$ and $\beta_{2,1}$ 
influence the sample paths independently, and not only via their ratio. 
In particular, for a fixed ratio $\beta_{1,2}/\beta_{2,1}$,
 the value of $\beta_{1,2}$ determines the time-scale of stochastic 
switching between the two modes.
More precisely, the time spent near $\gamma_1/\gamma_0$, 
given the system has started near the corresponding mode, 
is an exponentially distributed random variable 
with mean $1/\beta_{1,2}$,
after which the system deterministically moves to a neighborhood of
$\gamma_2/\gamma_0$, and vice-versa for the other mode.
These observations are instances of the fact that PMFs do not
uniquely capture time-parametrizations of the underlying sample paths, 
and, if desired, can be exploited for gaining further biochemical control.
In particular, one may control 
the stationary marginal-PMF of the target species via appropriate ratios of the underlying rate coefficients
(weak control). Furthermore, some of the properties of the underlying sample paths 
may also be controlled via a more-detailed fine-tuning of (the order of magnitude of)
 the rate coefficients (strong control).
Given a fixed ratio $\beta_{1,2}/\beta_{2,1}$, 
the precise values of the coefficients may be fixed
with the constraint $\beta_{1,2} + \beta_{2,1} = c > 0$.
In what follows, when we do not wish to explicitly control the mean switching time, 
we arbitrarily set $c = 1$.

Let us fix two modes of the output network~(\ref{eq:input_1})$\cup$(\ref{eq:R_gamma_Poisson_2})
 to $x = \gamma_1/\gamma_0 = 5$ and $x = \gamma_1/\gamma_0 = 30$, 
which may be achieved by choosing $\boldsymbol{\gamma} = 
 (\gamma_{0}, \gamma_{1}, \gamma_2) = (1,5,30)$. 
In Figure~\ref{fig:Poisson}(c), we display the corresponding stationary
$x$-marginal PMFs for the 
output network for different values of $\varepsilon$,
with $\boldsymbol{\beta} = (\beta_{1,1}, \beta_{1,2}, \beta_{2,1})
= (1, 1/2, 1/2)$ chosen so that the two Poisson distributions 
from~(\ref{eq:R_gamma_Poisson_2_PMF}) have equal weights,
i.e. we take $\beta_{1,2}/\beta_{2,1} = 1$.  
One can notice that the uni-modal input PMF 
is morphed into the bi-modal
output one, as predicted by~(\ref{eq:R_gamma_Poisson_2_PMF}), 
shown as the cyan histogram in Figure~\ref{fig:Poisson}(c).
A corresponding representative sample path is shown in 
the top sub-panel of Figure~\ref{fig:Poisson}(d), over a suitable time-interval, 
where the time-scale of the noise-induced switching between the two modes
is observable, with the mean time spent near each of the two modes
 given by $1/\beta_{1,2} = 1/\beta_{2,1} = 2$ time-units. 
Note that the time-scale of the fluctuations near 
each of the modes (determined by the parameter $\varepsilon$) matches
the one shown magnified in Figure~\ref{fig:Poisson}(b).
Also shown, in the bottom sub-panel of Figure~\ref{fig:Poisson}(d), 
is a sample path, over the same time-interval as in the top sub-panel, 
when $\boldsymbol{\beta} = (\beta_{1,1}, \beta_{1,2}, \beta_{2,1})
= (1, 1, 1)$, which also corresponds to the stationary PMF shown as the
 histogram in Figure~\ref{fig:Poisson}(c), but whose 
mean switching time is halved, $1/\beta_{1,2} = 1/\beta_{2,1} = 1$. 
More generally, instead of balancing the two Poisson PMFs
by choosing $\beta_{1,2}/\beta_{2,1} = 1$, as in 
Figure~\ref{fig:Poisson}(c)--(d), one may control the 
weights of each of the two
Poisson PMFs from~(\ref{eq:R_gamma_Poisson_2_PMF})
in a number of desirable ways. 
For example, in Figure~\ref{fig:Poisson}(e)--(f), we 
set $\beta_{1,2}/\beta_{2,1} = 2 \mathcal{P}(\gamma_1/\gamma_0; \, \gamma_1/\gamma_0)/
\mathcal{P}(\gamma_2/\gamma_0; \, \gamma_2/\gamma_0) \approx 5$, 
ensuring that the value of the stationary PMF
at the mode $x = \gamma_2/\gamma_0 = 30$ 
is approximately twice the value at the mode $x = \gamma_1/\gamma_0 = 5$, 
which may be achieved by taking $\boldsymbol{\beta} = 
(\beta_{1,1}, \beta_{1,2}, \beta_{2,1}) = (1, 5/6, 1/6)$.
Note that the intermediate PMFs from 
Figures~\ref{fig:Poisson}(c) and (e), shown as the interpolated purple squares, 
and obtained when $\varepsilon  = 1$,  
still partially achieve the goal of the control, demonstrating that the 
stochastic morpher may be useful even when not firing much faster than
the input network.

\emph{Tri-modality}. 
Algorithm~\ref{alg:SM_algorithm} may be utilized
to achieve multi-modality beyond bi-modality at the PMF level, 
and multi-stability and a controlled switching pattern at the underlying sample path level.
For example, let us morph the stationary PMF of
the input network~(\ref{eq:input_1}) into a tri-modal one,
with the modes $x \in \{5, 15, 30\}$. Furthermore, let
the underlying sample paths spend on average $3$ time-units
in the neighborhood of each of the modes, with the
switching order $5 \to 30 \to 15$, i.e.
after being close to the mode $x = 5$, the system
should jump near $x = 30$, then close to $x = 15$
and, finally, return back to $x = 5$. To this end,
consider embedding into the input network $\mathcal{R}_{\alpha}^1(X)$ 
the stochastic morpher
$\mathcal{R}_{\beta} \cup \mathcal{R}_{\gamma}^\mathcal{P} = 
\mathcal{R}_{\beta}(Y_1, Y_2,Y_3) \cup \mathcal{R}_{\gamma}^{\varepsilon}(X; \, Y_1, Y_2,Y_3)$,
given by
\begin{align}
\mathcal{R}_{\beta}:
& & 2 Y_1 & \xrightarrow[]{\beta_{1,1}} 
Y_1 \xrightarrow[]{\beta_{1,2}} 
Y_2 \xrightarrow[]{\beta_{2,3}}
Y_3 \xrightarrow[]{\beta_{3,1}} Y_1, 
\nonumber \\
\mathcal{R}_{\gamma}^{\mathcal{P}}: 
& & \mathcal{R}_{\gamma_0}^{\varepsilon}:  \; \; \; \; X & \xrightarrow[]{\gamma_{0}/\varepsilon} \varnothing, \nonumber \\
& & \mathcal{R}_{\gamma_1}^{\varepsilon}:  \; \; \; \; Y_1 & \xrightarrow[]{\gamma_{1}/\varepsilon} Y_1 + X, \nonumber \\
& & \mathcal{R}_{\gamma_2}^{\varepsilon}:  \; \; \; \;  Y_2 & \xrightarrow[]{\gamma_{2}/\varepsilon} Y_2 + X, \nonumber \\
& & \mathcal{R}_{\gamma_3}^{\varepsilon}:  \; \; \; \;  Y_3 & \xrightarrow[]{\gamma_{3}/\varepsilon} Y_3 + X, 
\; \; \; \; 0 < \varepsilon \ll 1.
\label{eq:R_gamma_Poisson_3}
\end{align}
Analogous to Figure~\ref{fig:Poisson}(a)--(f), in Figure~\ref{fig:Poisson}(g)--(h)
we display the stationary $x$-marginal PMF, and a representative sample path, of the 
output network~(\ref{eq:input_1})$\cup$(\ref{eq:R_gamma_Poisson_3}), 
with $\boldsymbol{\gamma} = (\gamma_0, \gamma_1, \gamma_2, \gamma_3)
= (1, 5, 30, 15)$ and $\boldsymbol{\beta} = (\beta_0, \beta_{1,2}, \beta_{2,3}, \beta_{3,1})
= (1, 1/3, 1/3, 1/3)$. In the asymptotic limit $\varepsilon \to 0$,
 the stationary PMF is a linear combination of the three Poisson
distributions centered at $x = \gamma_1/\gamma_0 = 5$, 
$x = \gamma_2/\gamma_0 = 30$ and $x = \gamma_3/\gamma_0 = 15$, 
each with equal weights (see also Theorem~\ref{theorem:Poisson_Kronecker} 
in Appendix~\ref{app:analysis}), which is in excellent agreement with the histogram 
from Figure~\ref{fig:Poisson}(g), where the asymptotic parameter $\varepsilon$ is two orders of magnitude
larger than the rate coefficients from the networks $\mathcal{R}_{\alpha}^1(X)$ and 
$\mathcal{R}_{\beta}(Y_1, Y_2,Y_3)$. Note that the switching order of the sample path from
 Figure~\ref{fig:Poisson}(h) mirrors the conversion
$Y_1 \to Y_2 \to Y_3 \to Y_1$ from~(\ref{eq:R_gamma_Poisson_3}).

\subsection{Higher-resolution control}
In Section~\ref{sec:low_resolution}, we have applied the lower-resolution
control from Algorithm~\ref{alg:SM_algorithm}, which consists of
the networks $\mathcal{R}_{\beta}$ and $\mathcal{R}_{\gamma}^\mathcal{P}$
given by~(\ref{eq:R_beta_alg}) and~(\ref{eq:Poisson_control_alg}), respectively, 
and may be used to achieve multi-modality/multi-stability. In this section, 
we replace the uni-molecular lower-resolution (Poisson-based) 
control network $\mathcal{R}_{\gamma}^\mathcal{P}$
with its bi-molecular higher-resolution (Kronecker-delta-based) 
counterpart $\mathcal{R}_{\gamma}^{\delta}$, given 
by~(\ref{eq:Kronecker_control_alg}) in Algorithm~\ref{alg:SM_algorithm}, 
which may be used to morph input PMFs to arbitrary probability distributions on
bounded state-spaces. 

\begin{figure}[!htbp]
\centerline{
\hskip 0mm
\includegraphics[width=0.35\columnwidth]{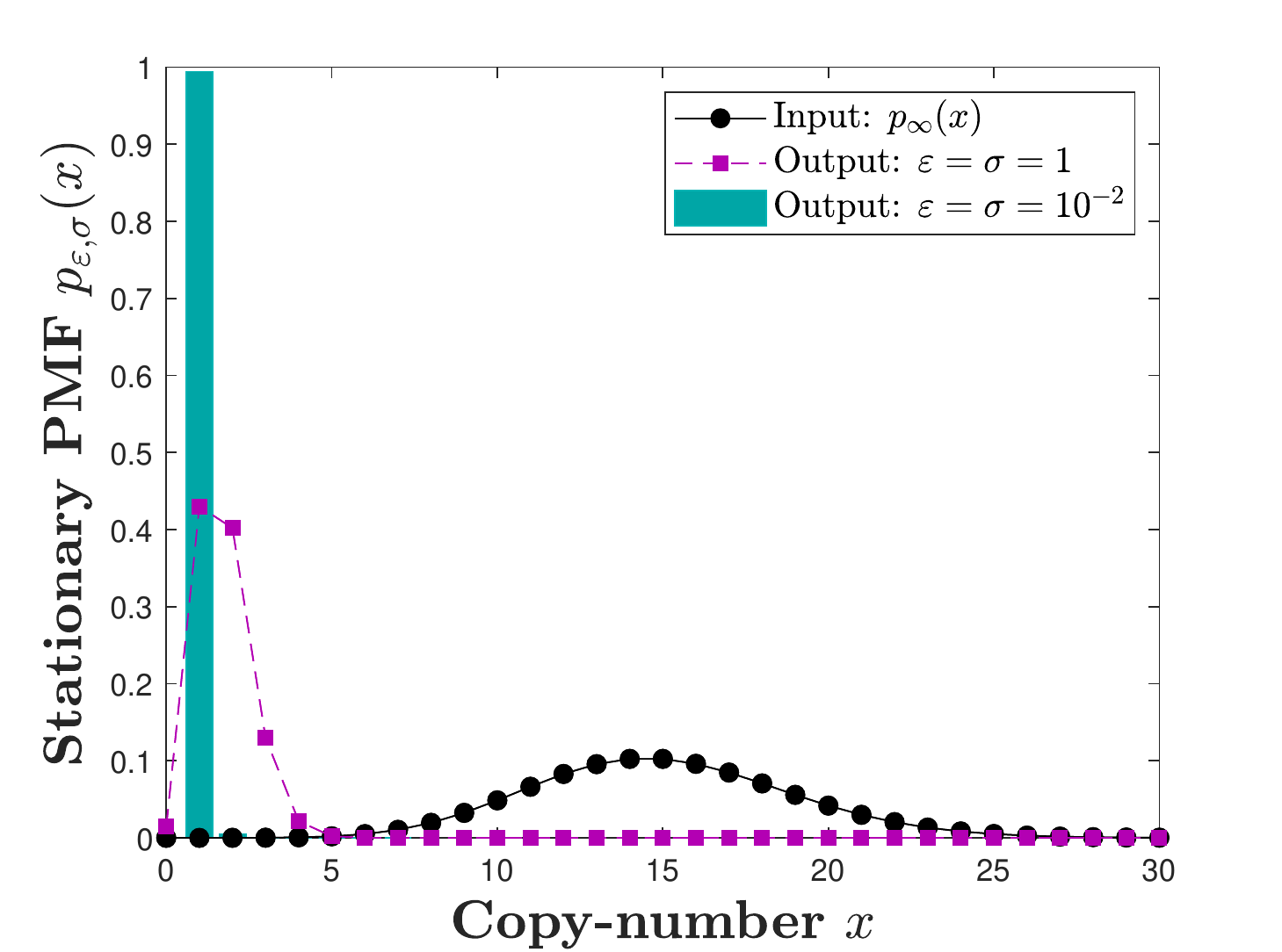}
\hskip 1.5cm
\includegraphics[width=0.35\columnwidth]{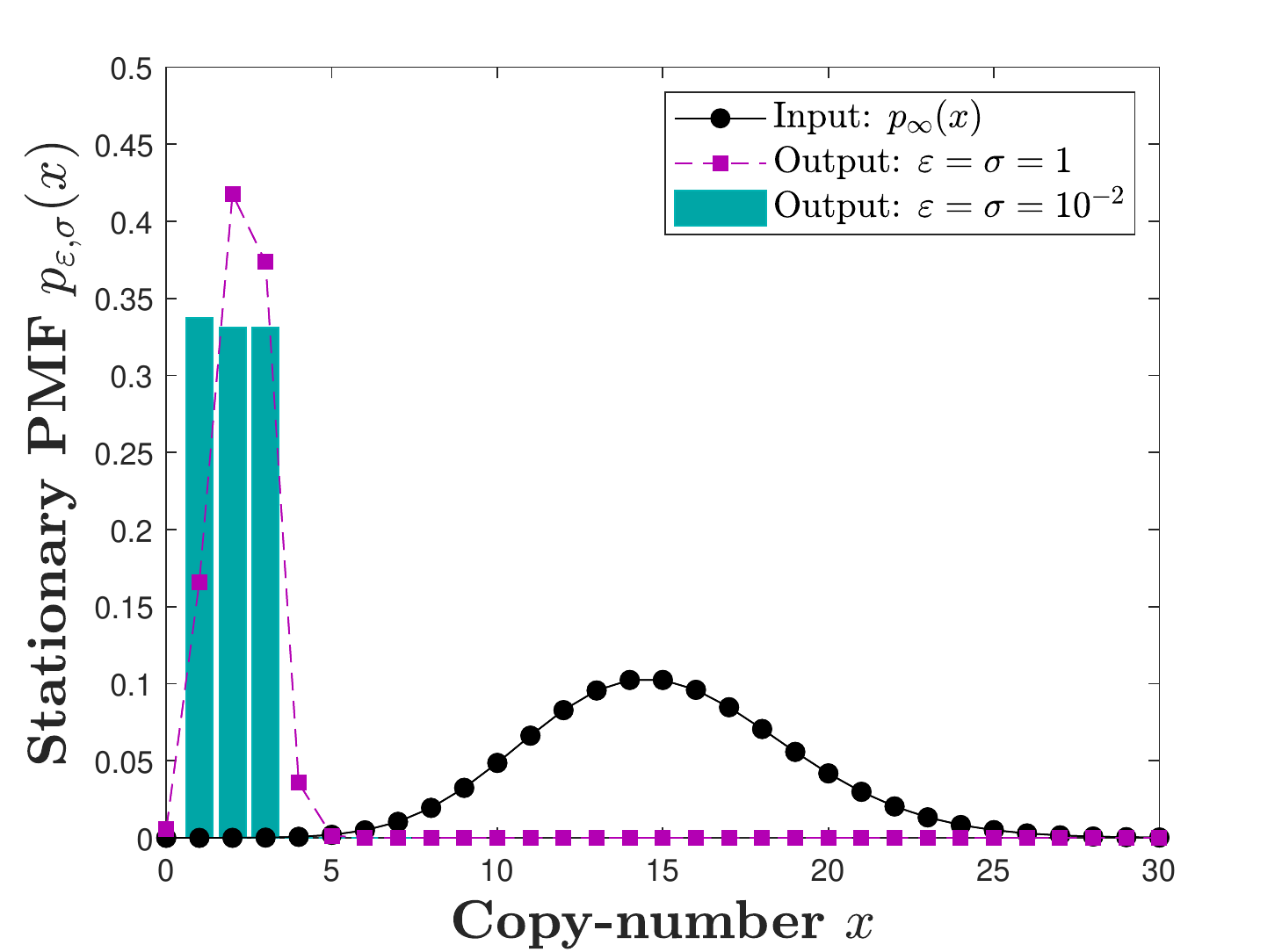}
}
\vskip -4.5cm
\leftline{\hskip 1.5cm (a) \hskip 6.9cm (b)} 
\vskip 4.0cm
\centerline{
\hskip 0mm
\includegraphics[width=0.35\columnwidth]{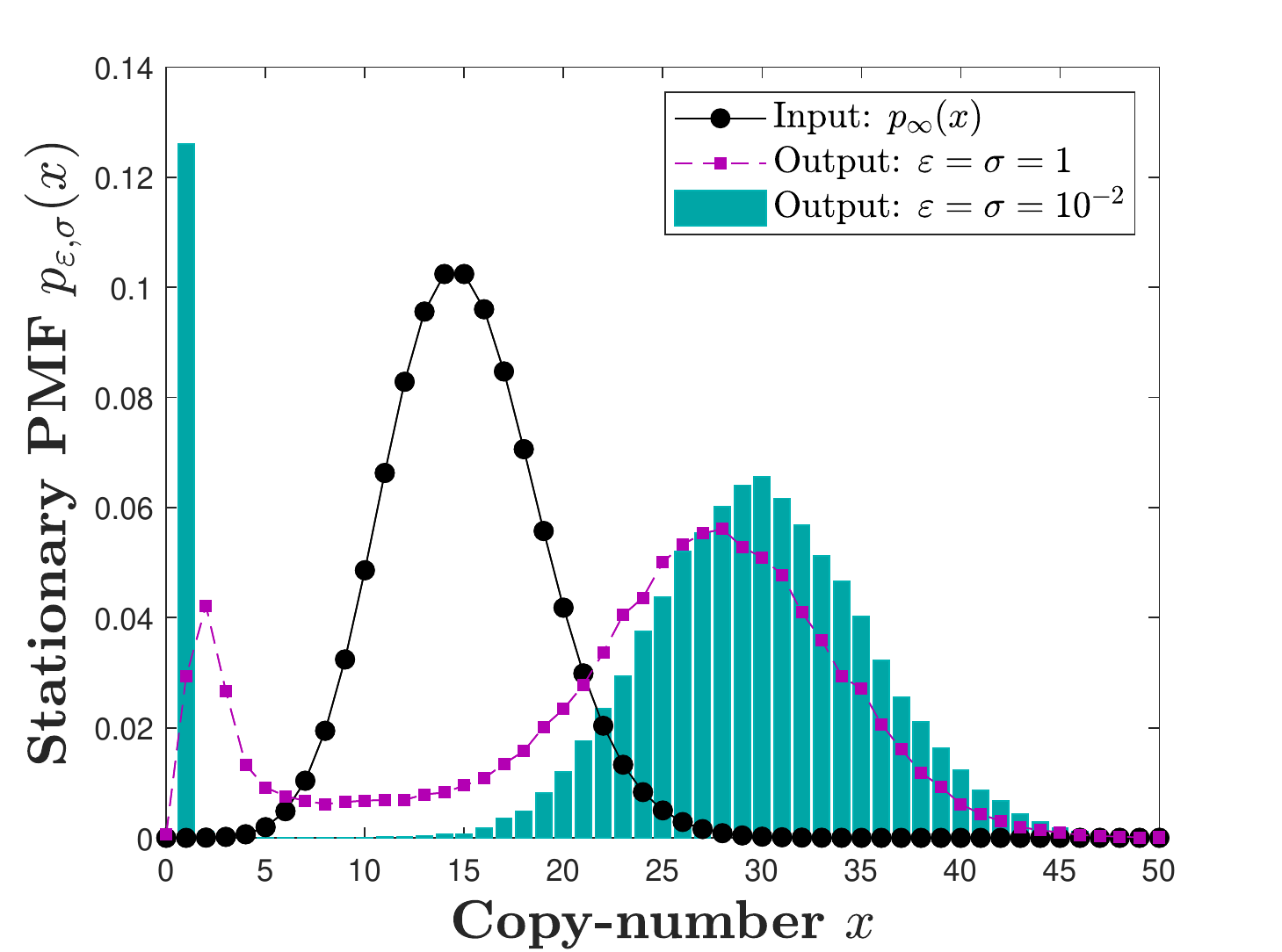}
\hskip 1.5cm
\includegraphics[width=0.35\columnwidth]{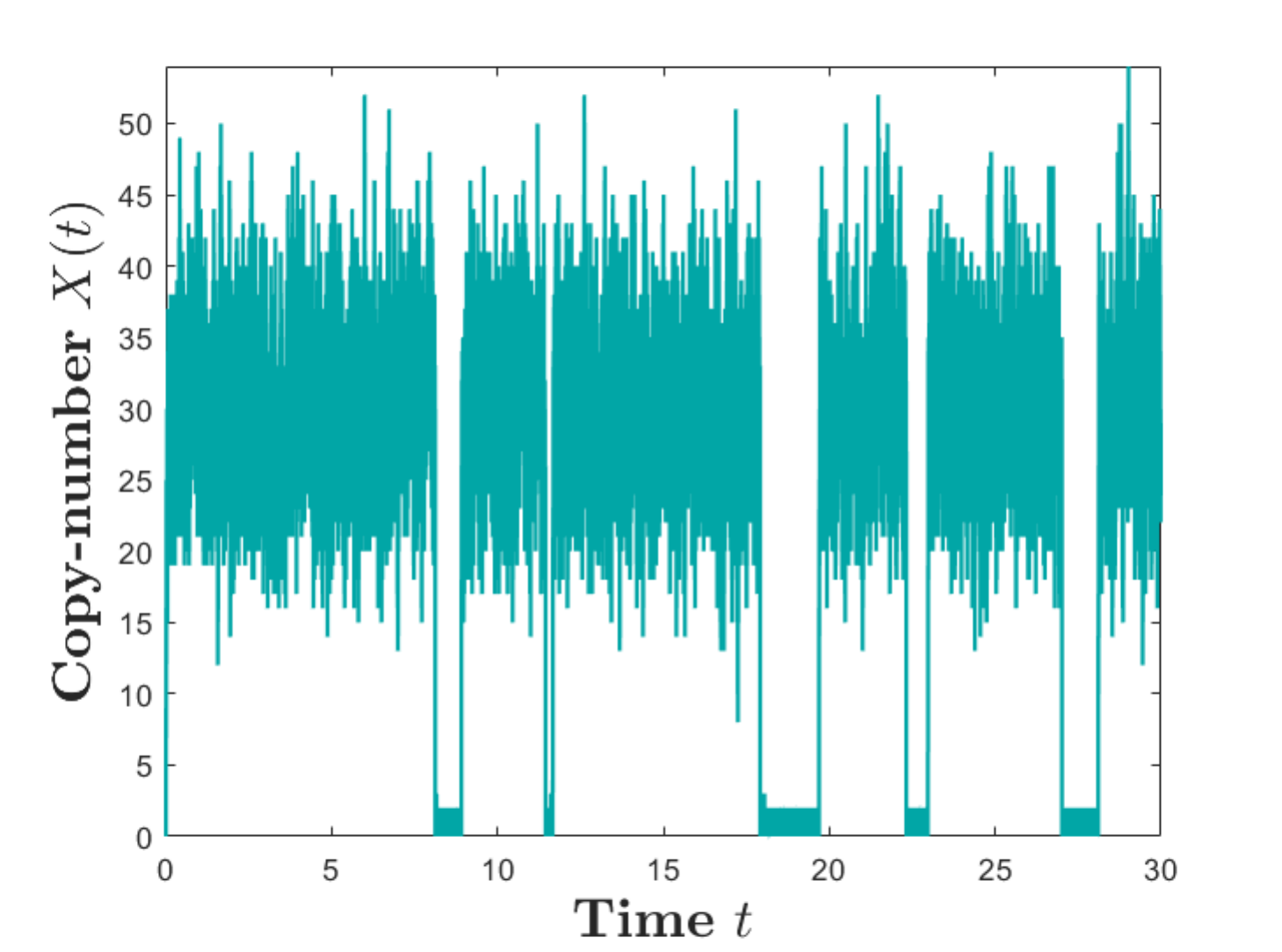}
}
\vskip -4.5cm
\leftline{\hskip 1.5cm (c) \hskip 6.9cm (d)} 
\vskip 3.8cm 
\caption{ 
Application of the higher-resolution control from 
{\rm Algorithm~\ref{alg:SM_algorithm}} on the input network~{\rm (\ref{eq:input_1})}
with $(\alpha_1, \alpha_2) = (1, 1/15)$. 
{\it The stationary {\rm PMF} of the input network is displayed
as the interpolated black dots. 
Panel {\rm (a)} shows the stationary $x$-marginal {\rm PMF}
of the output network~{\rm (\ref{eq:input_1})$\cup$(\ref{eq:R_gamma_Kronecker_1})}, 
taken in the limit $\mu \to 0$ for computational 
efficiency (see also {\rm Theorem~\ref{theorem:Kronecker_mu}} in
 {\rm Appendix~\ref{app:analysis}}), with $\beta_{1,1} = 1$ and 
two different values of the asymptotic parameters $(\varepsilon, \sigma)$, 
shown as the interpolated purple squares and the cyan histogram. 
Panel {\rm (b)} displays an analogous plot for the output 
network~{\rm (\ref{eq:input_1})$\cup$(\ref{eq:R_gamma_Kronecker_2})}
with $(\beta_{1,1}, \beta_{1,2}, \beta_{2,3}, \beta_{3,1}) = (1, 1/3, 1/3, 1/3)$. 
Panel {\rm (c)} shows the stationary $x$-marginal {\rm PMF}
of the output network~{\rm (\ref{eq:input_1})$\cup$(\ref{eq:R_gamma_Kronecker_3})}
with $(\beta_{1,1}, \beta_{1,2}, \beta_{2,1}) = (1, 1/8, 7/8)$, 
$(\gamma_{1}^{\mathcal{P}}, \tilde{\gamma}_{1}^{\mathcal{P}}) = (30,1)$, 
$(\gamma_{0,1}^{\delta}, \gamma_{0,2}^{\delta}, \gamma_2^{\delta}) 
= (\mu^2 \varepsilon \sigma)^{-1/3} (\mu^{1/3}, \mu^{-1/6}, \mu^{-1/6})$, 
$\mu = 10^{-10}$ and two different values for the pair $(\varepsilon, \sigma)$, 
while panel {\rm (d)} displays a representative sample path
 corresponding to the histogram from panel {\rm (c)}.
}
}
\label{fig:Kronecker}
\end{figure}

\emph{Kronecker-delta distribution}. 
Consider the stochastic morpher $\mathcal{R}_{\beta} \cup \mathcal{R}_{\gamma}^{\delta} = 
\mathcal{R}_{\beta}(Y_1) \cup \mathcal{R}_{\gamma}^{\delta}(X, Z_1, Z_2; \, Y_1)$, given by
\begin{align}
\mathcal{R}_{\beta}:
& & 2 Y_1 & \xrightarrow[]{\beta_{1,1}} Y_1, \nonumber \\
\mathcal{R}_{\gamma}^{\delta}: 
 & \hspace{1cm} \mathcal{R}_{\gamma_0}^{\mu,\varepsilon,\sigma}:  \hspace{-1.5cm}
&  \varnothing & \xrightarrow[]{1/\varepsilon} X, \nonumber \\
&&  X  &\xrightleftharpoons[1/\mu]{\gamma_{0,1}} Z_1, \nonumber \\
&& X + Z_1 & \xrightleftharpoons[1/\mu]{\gamma_{0,2}} Z_2, \nonumber \\
& \hspace{1cm} \mathcal{R}_{\gamma_1}^{\mu,\varepsilon,\sigma}: \hspace{-1.5cm}
& Y_1 + Z_2 & \xrightarrow[]{\gamma_{1}} Y_1 + Z_1.
\label{eq:R_gamma_Kronecker_1}
\end{align}
Network $\mathcal{R}_{\gamma}^{\delta} = \mathcal{R}_{\gamma_0}^{\mu,\varepsilon,\sigma} 
\cup \mathcal{R}_{\gamma_1}^{\mu,\varepsilon,\sigma}$ consists of two sub-networks: 
$\mathcal{R}_{\gamma_0}^{\mu,\varepsilon,\sigma}$ describes a production of
$X$, a reversible conversion of $X$ into an auxiliary species $Z_1$, 
and a reversible conversion of $X$ and $Z_1$ into another auxiliary species $Z_2$. 
On the other hand, $\mathcal{R}_{\gamma_1}^{\mu,\varepsilon,\sigma}$
describes an irreversible conversion of $Z_2$ into $Z_1$, catalyzed by $Y_1$. 
Note that the controlling species $Y_1$ does not react directly with the 
target species $X$. Instead, $Y_1$ acts on $X$ indrectly, via the species $Z_1$
and $Z_2$. For this reason, we call $\mathcal{Z} = \{Z_1, Z_2 \}$ the \emph{mediating} species, 
as they propagate the action of the controlling species $Y_1$ onto
the target species $X$. 

The dynamics of the mediating species are assumed to be sufficiently fast.
More precisely, it is assumed that the coefficients $\gamma_{0,1}$, $\gamma_{0,2}$
and $\gamma_{1}$ from~(\ref{eq:R_gamma_Kronecker_1}) 
satisfy the \emph{kinetic condition} $\mu^2 \gamma_{0,1} \gamma_{0,2} \gamma_1 = 
(\sigma \varepsilon)^{-1}$ with the asymptotic parameters $0 < \mu \ll \varepsilon, \sigma \ll 1$.
This ensures that the network $\mathcal{R}_{\gamma}^{\delta}$
fires sufficiently fast in a balanced way, see Theorem~\ref{theorem:Kronecker_mu} 
in Appendix~\ref{app:analysis}, 
and~\cite{Me_Bimolecular}. Under the kinetic condition, 
species $Z_1$ and $Z_2$ formally satisfy $Z_1 = X$ and $Z_2 = X + Z_1 = 2 X$ in the limit $\mu \to 0$,
and the network $\mathcal{R}_{\gamma}^{\delta}$ from~(\ref{eq:R_gamma_Kronecker_1}) 
 reduces to
\begin{align} 
 & \hspace{1cm} \mathcal{R}_{\gamma_0}^{\varepsilon}:  \hspace{-1.5cm}
&  \varnothing & \xrightarrow[]{1/\varepsilon} X, \nonumber \\
& \hspace{1cm} \mathcal{R}_{\gamma_1}^{\varepsilon,\sigma}: \hspace{-1.5cm}
& Y_1 + 2 X  & \xrightarrow[]{1/(\sigma \varepsilon)} Y_1 + X, 
\; \; \; \; 0 <\varepsilon, \sigma \ll 1.
\label{eq:R_Kronecker_reduced}
\end{align}
The first reaction from~(\ref{eq:R_Kronecker_reduced}) provides a strong positive drift, 
which is overpowered by an even stronger negative
drift, induced by the second reaction from~(\ref{eq:R_Kronecker_reduced}), 
when the copy-number of the target species
satisfies $x > 1$. As a consequence, the target species 
$X$ from the output network~(\ref{eq:input_1})$\cup$(\ref{eq:R_gamma_Kronecker_1})
spends most of the time at the single state $x = 1$, 
i.e. the stationary $x$-marginal PMF is a Kronecker-delta distribution centered at $x = 1$, 
which we denote by $\delta_{x,1}$. In Figure~\ref{fig:Kronecker}(a), 
we display the stationary $x$-marginal PMF of the output 
network~(\ref{eq:input_1})$\cup$(\ref{eq:R_gamma_Kronecker_1}),
taken in the limit $\mu \to 0$ for computational 
efficiency. The PMF is shown in purple when the remaining two asymptotic parameters
are fixed to $\varepsilon = \sigma = 1$, while as the cyan histogram
when $\varepsilon = \sigma = 10^{-2}$, which is in excellent agreement
with the Kronecker-delta distribution $\delta_{x,1}$. 

Before proceeding to further applications of the higher-resolution control, 
let us explain briefly why the network $\mathcal{R}_{\gamma}^{\delta}$ 
from~(\ref{eq:R_gamma_Kronecker_1}), 
involving the mediating species $Z_1$ and $Z_2$, has been put forward, 
as opposed to the dynamically similar network~(\ref{eq:R_Kronecker_reduced}), 
which has been put forward to implement 
Kronecker-delta distributions in~\cite{David2}. 
The former network is bi-molecular,
and hence experimentally implementable in principle~\cite{DNAComputing1}. 
On the other hand, network~(\ref{eq:R_Kronecker_reduced})
 contains a third-order (tri-molecular) reaction 
which may not be directly experimentally implementable. 
As exemplified shortly, encoding Kronecker-delta distributions
centered at higher values of $x$ is achieved in our framework
by simply adding more auxiliary species, which participate
in up-to second-order reactions. On the other hand, this
is achieved in networks of the form~(\ref{eq:R_Kronecker_reduced}) 
by further increasing the order of some of the underlying reactions, 
thus making such networks experimentally less desirable.

\emph{Uniform distribution}. 
Using multiple Kronecker-delta distributions, by applying 
the higher-resolution control from Algorithm~\ref{alg:SM_algorithm}, 
one may achieve arbitrary probability distributions on bounded
domains. For example, having achieved a probability distribution
concentrated at a single point, utilizing the 
controller~(\ref{eq:R_gamma_Kronecker_1}), 
let us now morph the stationary PMF
of the input network~(\ref{eq:input_1}) into a 
uniform distribution on the state-space $x \in \{1,2,3\}$, 
via the controller
\begin{align}
\mathcal{R}_{\beta}:
& & 2 Y_1 & \xrightarrow[]{\beta_{1,1}} 
Y_1 \xrightarrow[]{\beta_{1,2}} 
Y_2 \xrightarrow[]{\beta_{2,3}}
Y_3 \xrightarrow[]{\beta_{3,1}} Y_1, 
\nonumber \\
\mathcal{R}_{\gamma}^{\delta}: 
 & \hspace{1cm} \mathcal{R}_{\gamma_0}^{\mu,\varepsilon,\sigma}:  \hspace{-1.5cm}
&  \varnothing & \xrightarrow[]{1/\varepsilon} X, \nonumber \\
&&  X  &\xrightleftharpoons[1/\mu]{\gamma_{0,1}} Z_1, \nonumber \\
&& X + Z_1 & \xrightleftharpoons[1/\mu]{\gamma_{0,2}} Z_2, \hspace{0.3cm}
      X + Z_2     \xrightleftharpoons[1/\mu]{\gamma_{0,3}} Z_3, \hspace{0.3cm}
      X + Z_3     \xrightleftharpoons[1/\mu]{\gamma_{0,4}} Z_4,
\nonumber \\
& \hspace{1cm} \mathcal{R}_{\gamma_1}^{\mu,\varepsilon,\sigma}: \hspace{-1.5cm}
& Y_1 + Z_2 & \xrightarrow[]{\gamma_{1}} Y_1 + Z_1, \nonumber \\
& \hspace{1cm} \mathcal{R}_{\gamma_2}^{\mu,\varepsilon,\sigma}: \hspace{-1.5cm}
& Y_2 + Z_3 & \xrightarrow[]{\gamma_{2}} Y_2 + Z_2, \nonumber \\
& \hspace{1cm} \mathcal{R}_{\gamma_3}^{\mu,\varepsilon,\sigma}: \hspace{-1.5cm}
& Y_3 + Z_4 & \xrightarrow[]{\gamma_{3}} Y_3 + Z_3, 
\; \; \; \; 0 < \mu \ll \varepsilon, \sigma \ll 1.
\label{eq:R_gamma_Kronecker_2}
\end{align}
Network~(\ref{eq:R_gamma_Kronecker_2}) involves four mediating species, which, 
under suitable kinetic conditions (see Theorem~\ref{theorem:Kronecker_mu} 
in Appendix~\ref{app:analysis}), 
 formally satisfy $Z_1 = X$, $Z_2 = 2 X$, $Z_3 = 3 X$ and $Z_4 = 4 X$. 
Consequently, the fast production reaction  
$\varnothing \xrightarrow[]{1/\varepsilon} X$ and
$\mathcal{R}_{\gamma_i}^{\mu,\varepsilon,\sigma}$
generate the Kronecker-delta distributions centered at $x = i$, 
for $i \in \{1, 2, 3\}$, and the weight of each of the three
Kronecker-delta distributions is controlled with 
the rate coefficients from the sub-network 
$\mathcal{R}_{\beta}$, in the same manner as in 
network~(\ref{eq:R_gamma_Poisson_3}). 
In particular, a uniform distribution may be achieved 
by taking $\boldsymbol{\beta} = (\beta_{1,1}, 
\beta_{1,2}, \beta_{2,3}, \beta_{3,1}) = (1, 1/3, 1/3, 1/3)$. 
Analogous to Figure~\ref{fig:Kronecker}(a), 
 in Figure~\ref{fig:Kronecker}(b) we display the $x$-marginal PMF of the output 
network~(\ref{eq:input_1})$\cup$(\ref{eq:R_gamma_Kronecker_2}).
Let us note that, while the weights of the Kronecker-delta distributions
are encoded kinetically, in the rate coefficients from the sub-network 
$\mathcal{R}_{\beta}$, the centers of the distributions are encoded
stoichiometrically, i.e. they are determined by which mediating species
is catalyzed by the controlling species. 
 
\emph{Hybrid control}. 
One may also wish to combine the lower- and higher-resolution
networks $\mathcal{R}_{\gamma}^{\mathcal{P}}$ and $\mathcal{R}_{\gamma}^{\delta}$ 
from Algorithm~\ref{alg:SM_algorithm}, respectively, 
into a composite hybrid scheme for biochemical control.
For example, one may wish to obtain a more-detailed control 
over regions of the state-space where the target species are
in lower copy-numbers, while a less-detailed control may be sought over the state-space where 
the target species are in higher copy-numbers. 
Such a hybrid approach may be experimentally desirable, as
biochemical realizations of the Kronecker-delta PMFs 
centered at lower copy-numbers of the species
are less expensive to engineer, since a smaller number
of the mediating species $\mathcal{Z}$ is required.
For example, in Figure~\ref{fig:Kronecker}(c), we
embed a hybrid controller into the input network~(\ref{eq:input_1}),
and morph the PMF into a mixture of a Kronecker-delta distribution
at $x = 1$ and a Poisson distribution at $x = 30$, 
with the underlying sample path shown 
in Figure~\ref{fig:Kronecker}(d). The hybrid controller is 
given as the network~(\ref{eq:R_gamma_Kronecker_3})
in Appendix~\ref{app:hybrid}. 

\section{Bi-stable input network} \label{sec:example2}
In Section~\ref{sec:example1}, we have applied Algorithm~\ref{alg:SM_algorithm}
in order to control the one-species uni-molecular 
input network~(\ref{eq:input_1}).
In this section, we apply Algorithm~\ref{alg:SM_algorithm}
to a more complicated reaction network, involving 
bi-molecular reactions and multiple biochemical species.
In particular, we highlight how
 Algorithm~\ref{alg:SM_algorithm} may be applied to simultaneously
control multiple species, instead of only one species at a time. 
Furthermore, we analyze the dynamics
of the species which are not explicitly controlled.

To this end, we put forward the three-species bi-molecular
 input network $\mathcal{R}_{\alpha}^2 = \mathcal{R}_{\alpha}^2(X_1, X_2, X_3)$,
under mass-action kinetics, given by
\begin{align}
\mathcal{R}_{\alpha}^2:
& & \varnothing & \xrightleftharpoons[\alpha_{2}]{\alpha_{1}} X_1, \hspace{0.3cm}
\varnothing \xrightarrow[]{\alpha_{3}} X_2, \hspace{0.3cm}
\varnothing \xrightarrow[]{\alpha_{4}} X_3, \hspace{0.3cm}
X_3 \xrightarrow[]{\alpha_{5}} X_1, \nonumber \\
& & 2 X_1 &\xrightarrow[]{\alpha_{6}} 2 X_1 + X_3, \hspace{0.3cm}
X_1 + X_2 \xrightarrow[]{\alpha_{7}} 2 X_1, \hspace{0.3cm}
X_1 + X_3 \xrightarrow[]{\alpha_{8}} X_3.
\label{eq:input_2}
\end{align}
 For illustrative purposes, in what follows, we focus on controlling the target species
$\mathcal{X}_{\tau} = \{X_1, X_2\}$, while the remaining species, 
called the residual species and denoted by
$\mathcal{X}_{\rho} = \{X_3\}$, is implicitly influenced, but not explicitly controlled
(see also Figure~\ref{fig:Control_Theory} in Section~\ref{sec:intro} 
for a visualization of the target and residual species in a general setting).
For a particular choice of the rate coefficients $\boldsymbol{\alpha}$, 
the stationary $(x_1,x_2)$-marginal PMF of the input network
$\mathcal{R}_{\alpha}^2$ is shown in Figure~\ref{fig:Schlogl}(a), 
while the underlying sample paths for $X_1$ and $X_2$ 
are shown in cyan and red in Figure~\ref{fig:Schlogl}(b), 
respectively. The $(x_1,x_2)$-marginal PMF is bi-modal, 
with the modes approximately given by $(x_1,x_2) = (10,40)$
and $(x_1,x_2) = (40,10)$, and with the species $X_1$ and $X_2$
being negatively correlated. 

\begin{figure}[!htbp]
\centerline{
\hskip 0mm
\includegraphics[width=0.35\columnwidth]{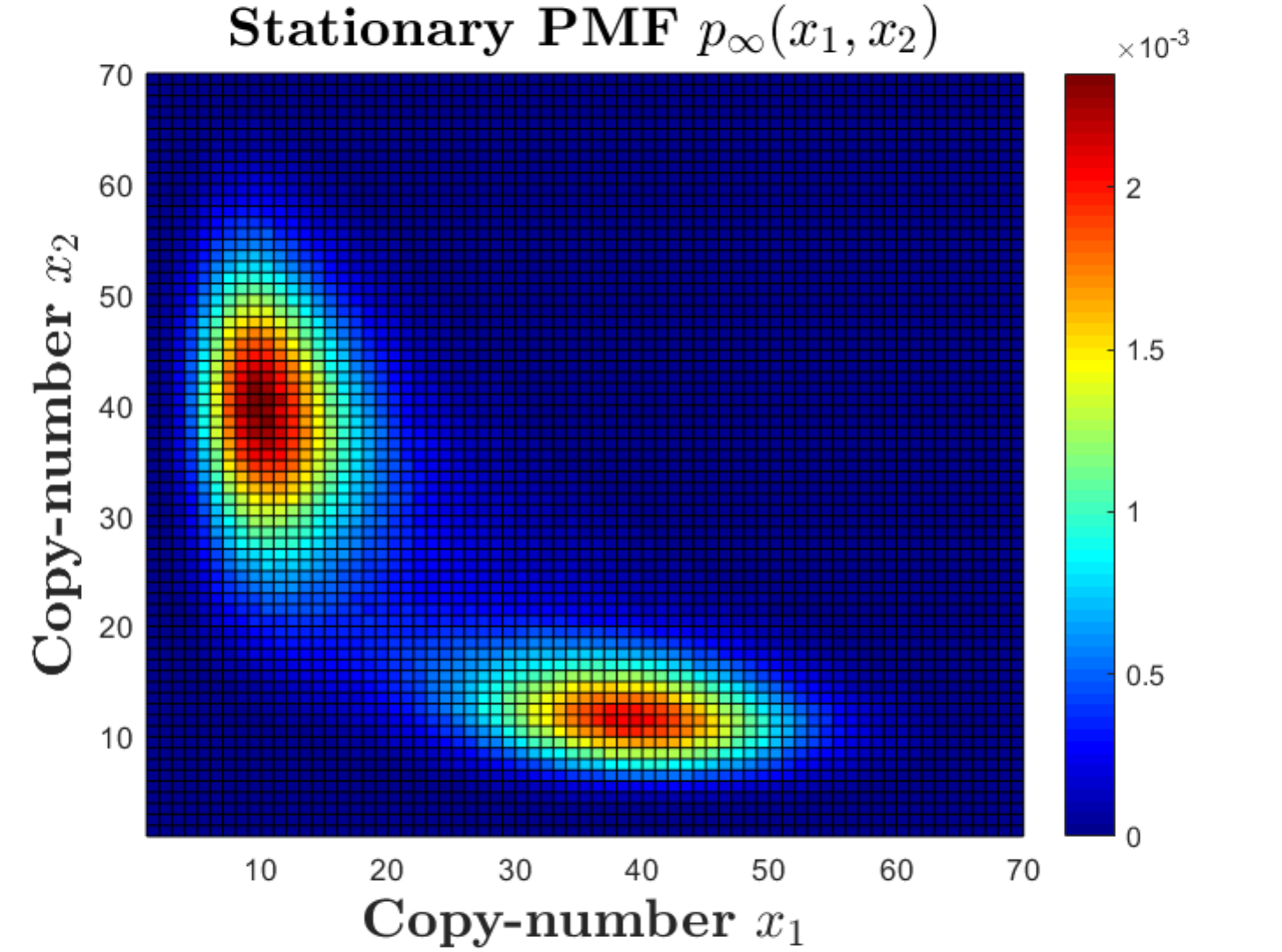}
\hskip 0.05cm
\includegraphics[width=0.35\columnwidth]{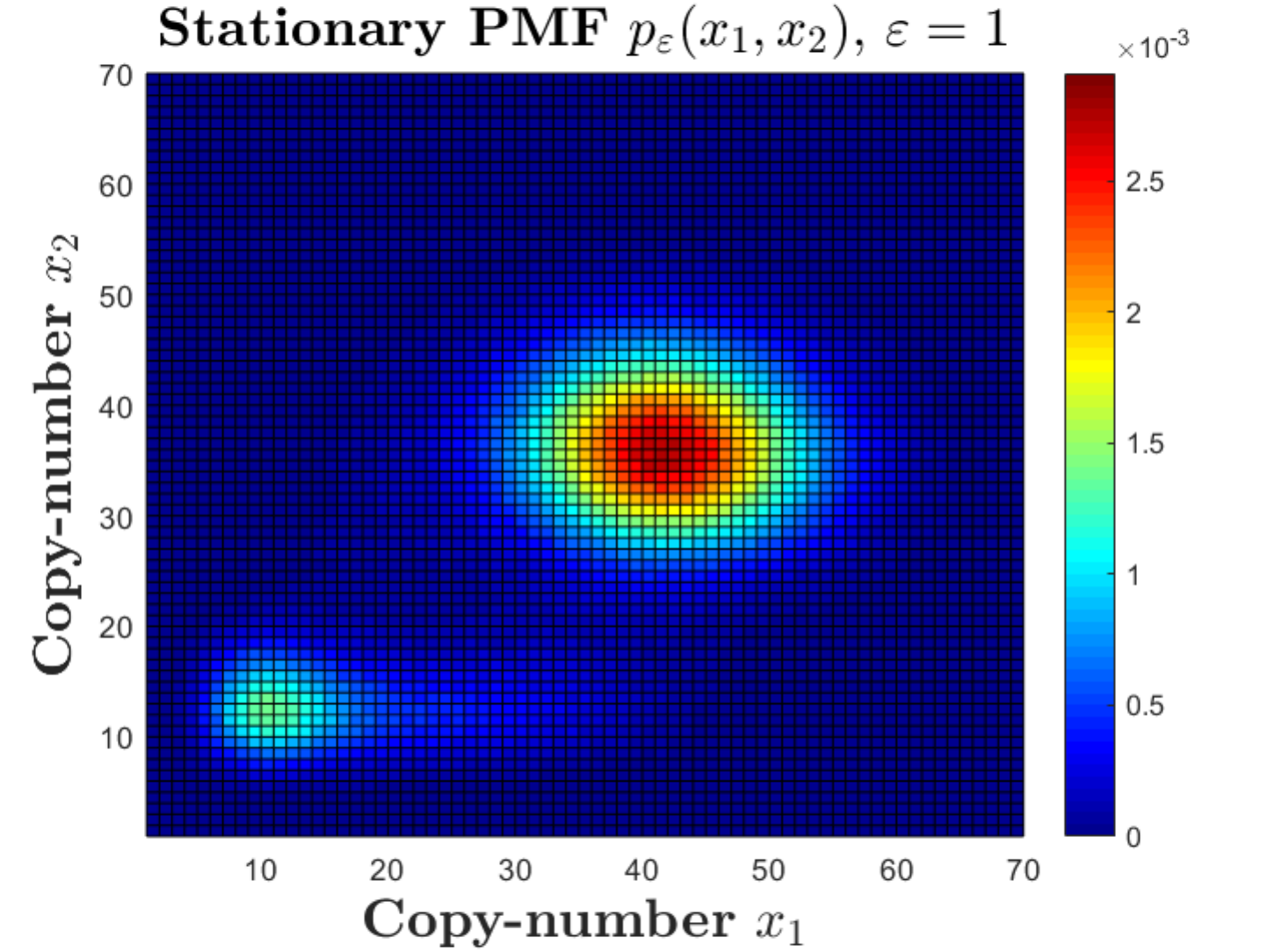}
\hskip 0.05cm
\includegraphics[width=0.35\columnwidth]{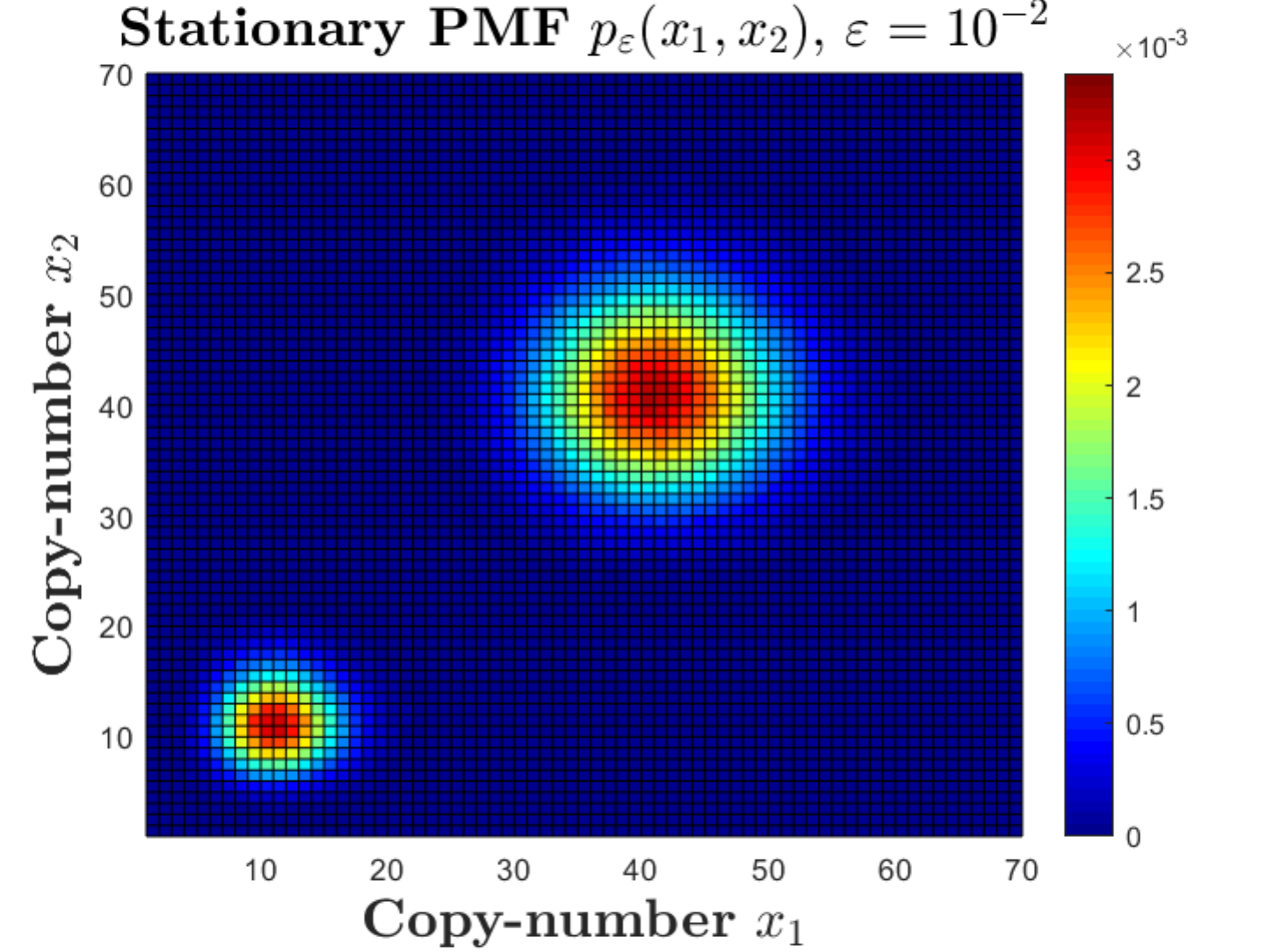}
}
\vskip -4.5cm
\leftline{\hskip -0.8cm (a) \hskip 5.3cm (c) \hskip 5.4cm (e)} 
\vskip 4.0cm
\centerline{
\hskip 0mm
\includegraphics[width=0.35\columnwidth]{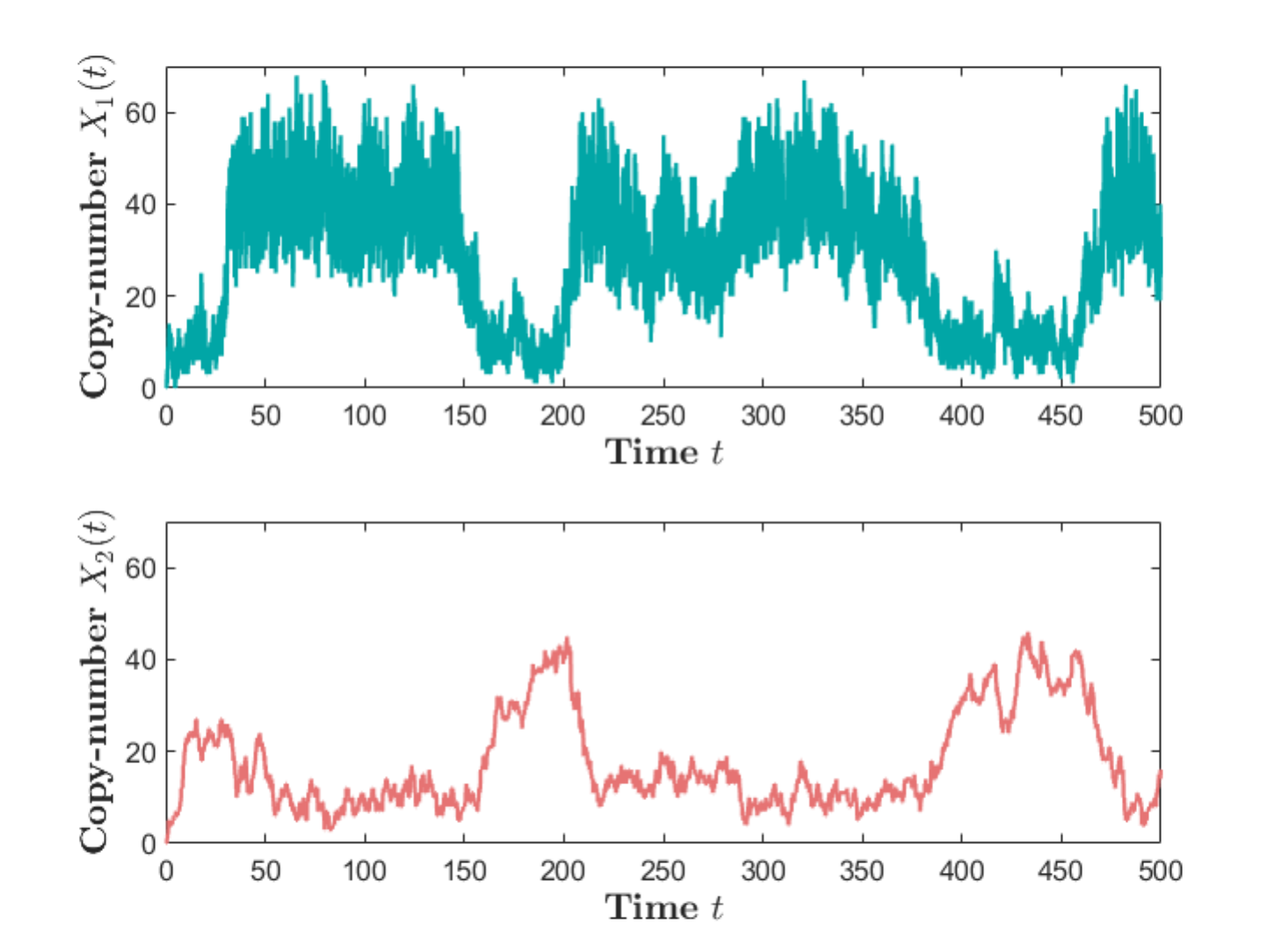}
\hskip 0.05cm
\includegraphics[width=0.35\columnwidth]{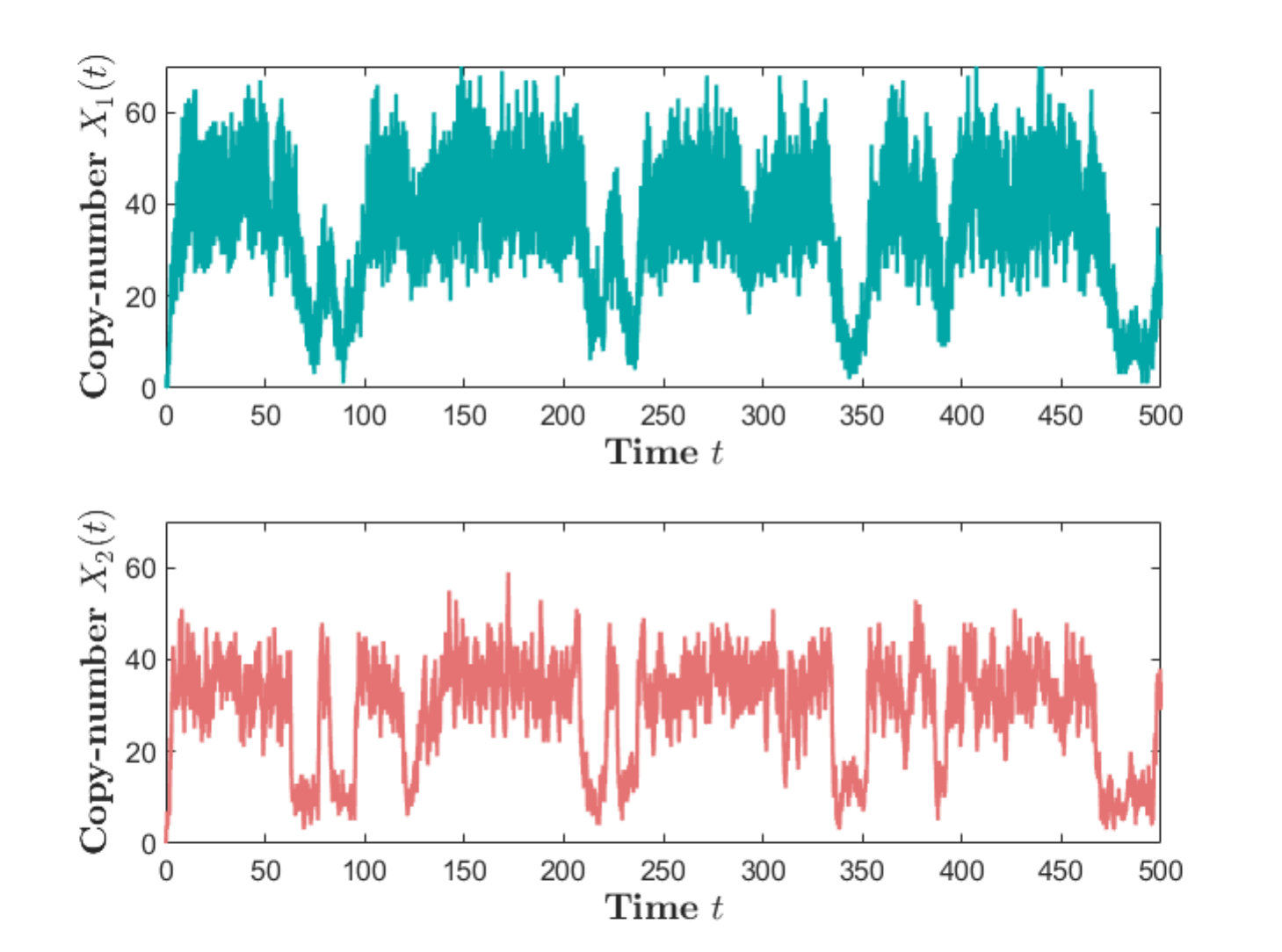}
\hskip 0.05cm
\includegraphics[width=0.35\columnwidth]{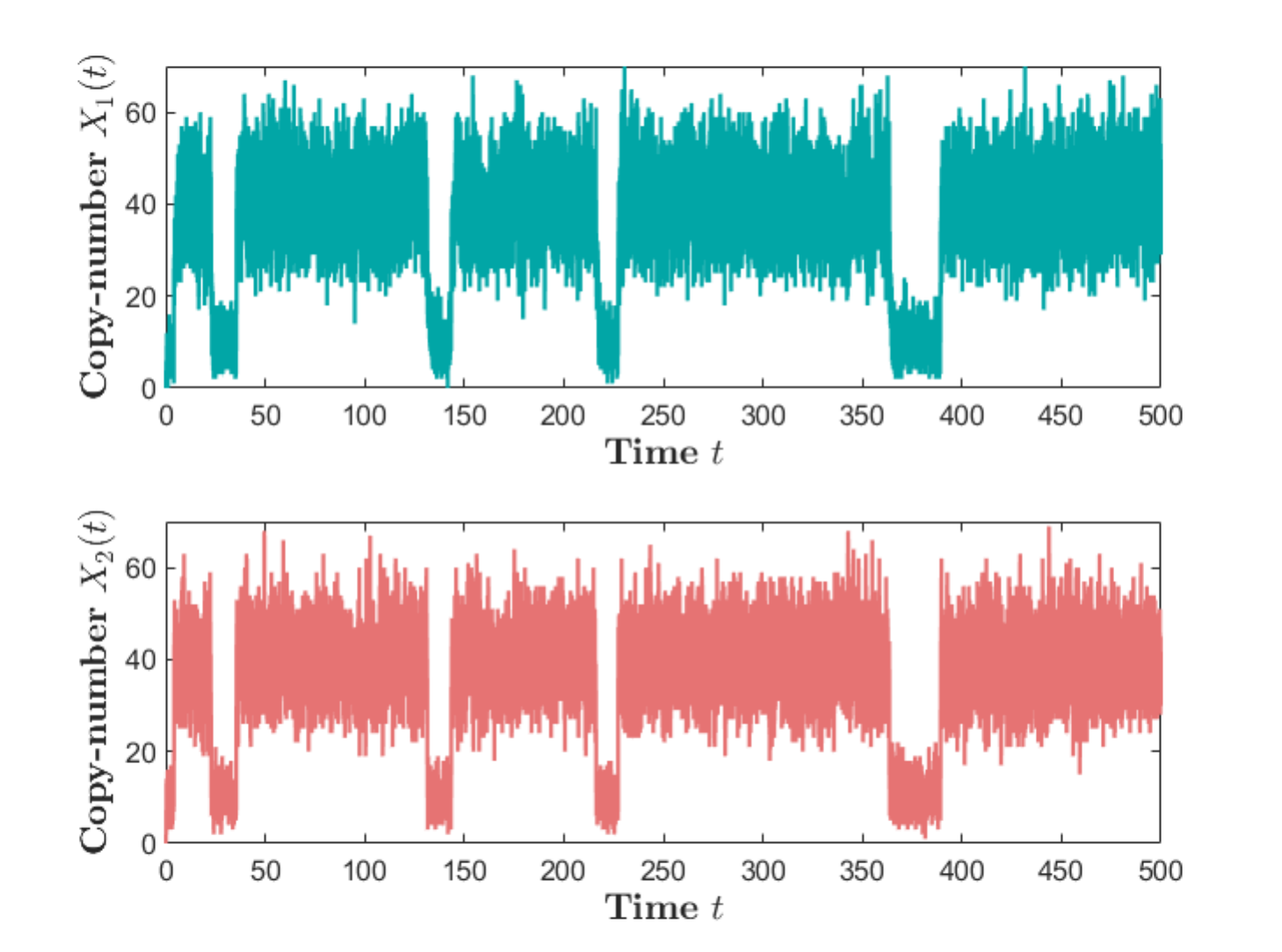}
}
\vskip -4.5cm
\leftline{\hskip -0.8cm (b) \hskip 5.3cm (d) \hskip 5.4cm (f)} 
\vskip 4.2cm
\centerline{
\hskip 0mm
\includegraphics[width=0.35\columnwidth]{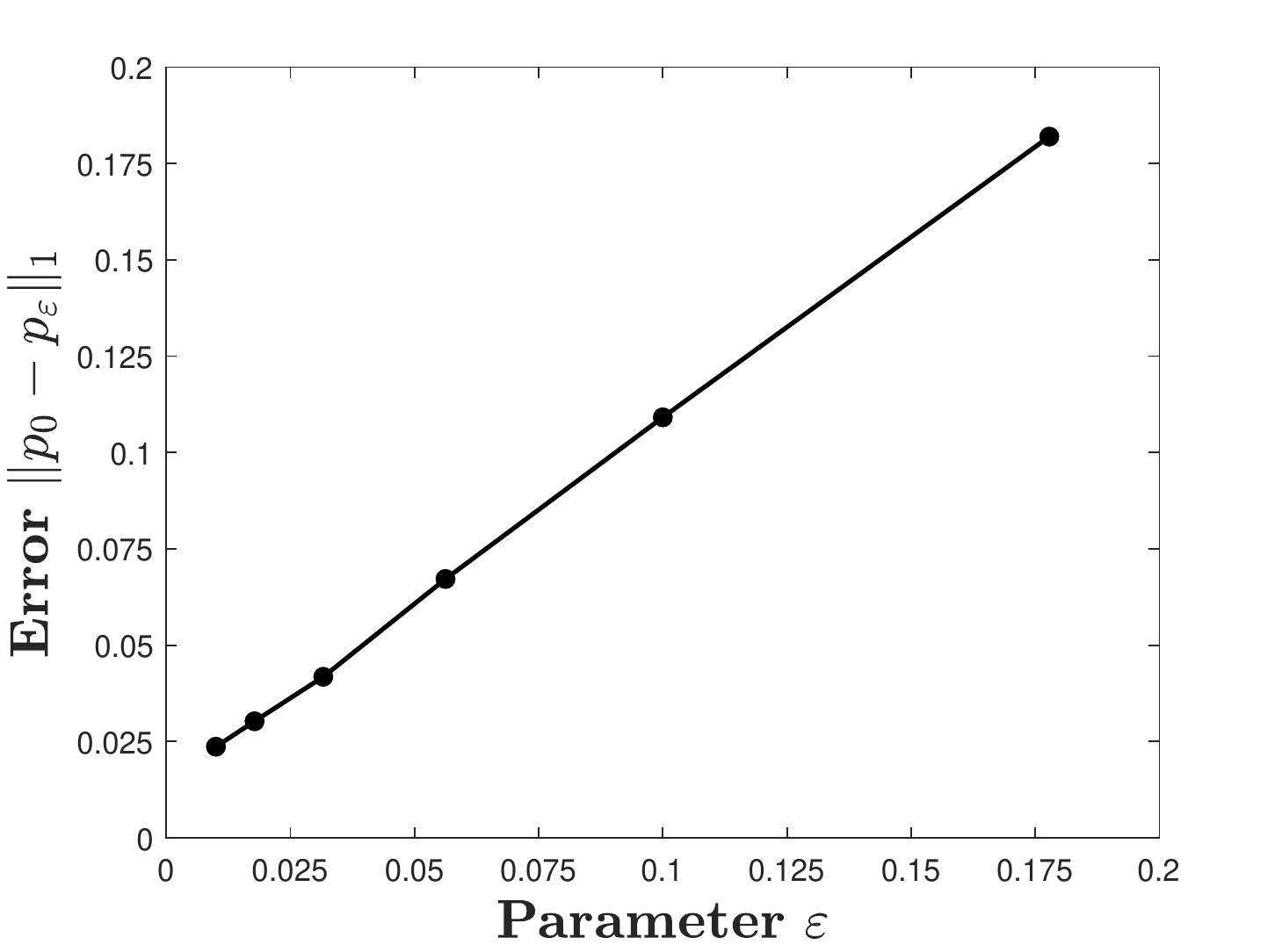}
\hskip 0.1cm
\includegraphics[width=0.35\columnwidth]{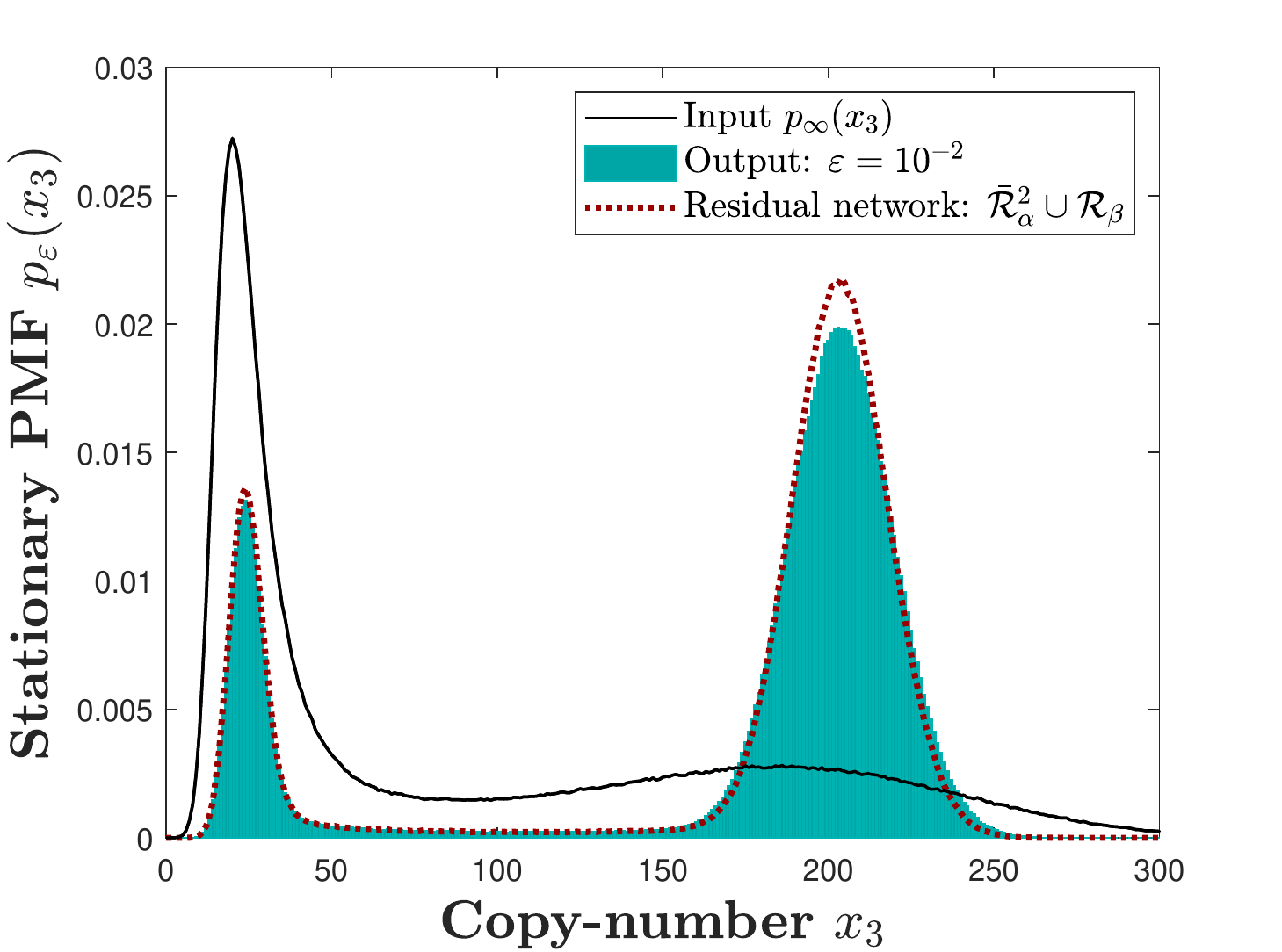}
}
\vskip -4.5cm
\leftline{\hskip 2.2cm (g) \hskip 5.4cm (h)} 
\vskip 3.8cm 
\caption{ 
Application of the lower-resolution control from 
{\rm Algorithm~\ref{alg:SM_algorithm}} on the input network~{\rm (\ref{eq:input_2})}
with $\boldsymbol{\alpha} = 
(\alpha_1, \alpha_2,\alpha_3, \alpha_4,\alpha_5, \alpha_6,\alpha_7, \alpha_8) = 
(2,7/2,2,18,3/2,9/50,1/200,1/48)$.
{\it Panels {\rm (a)}, and {\rm (b)}, show the stationary $(x_1, x_2)$-marginal {\rm PMF}
of the input network~{\rm (\ref{eq:input_2})}, and the underlying
representative sample paths for target species $X_1$ and $X_2$, respectively. 
Analogous plots are shown in panels {\rm (c)}--{\rm (d)}, and 
{\rm (e)}--{\rm (f)}, for the output 
network~{\rm (\ref{eq:input_2})$\cup$(\ref{eq:R_gamma_Poisson_2_2})}
with $\boldsymbol{\beta} = (\beta_{1,1}, \beta_{1,2}, \beta_{2,1}) = 
(1,4/50,1/50)$, $\boldsymbol{\gamma} = (\boldsymbol{\gamma}_1, \boldsymbol{\gamma}_2)
= ((\gamma_{0,1}, \gamma_{1,1}, \gamma_{2,1}), 
(\gamma_{0,2}, \gamma_{1,2}, \gamma_{2,2})) = 
((1,10,40), (1,10,40))$, and different values of the 
asymptotic parameter $\varepsilon$, as indicated in the plots.
Panel {\rm (g)} displays as the black dots, interpolated with the black lines, a
plot of the $l^1$-distance between the target {\rm PMF}~{\rm (\ref{eq:R_gamma_Poisson_2_PMF_2})}
and the long-time {\rm PMF} of output 
network~{\rm (\ref{eq:input_2})$\cup$(\ref{eq:R_gamma_Poisson_2_2})}
as a function of the asymptotic parameter $\varepsilon$. 
Panel {\rm (h)} shows the stationary $x_3$-marginal {\rm PMFs}
of the input network~{\rm (\ref{eq:input_2})}, and the output 
network~{\rm (\ref{eq:input_2})$\cup$(\ref{eq:R_gamma_Poisson_2_2})}
with $\varepsilon = 10^{-2}$, as the black solid curve, and the cyan 
histogram, respectively. Also shown, as the dotted red curve,
 is the stationary $x_3$-marginal 
{\rm PMF}of the residual network~{\rm (\ref{eq:input_2_residual})}.
}
}
\label{fig:Schlogl}
\end{figure}

\emph{Target species}. Let us now apply Algorithm~\ref{alg:SM_algorithm} in order 
to morph the input PMF from Figure~\ref{fig:Schlogl}(a)
into a bi-modal one, with the species $X_1$ and $X_2$
being positively correlated. More precisely, let us morph
the input modes into the target modes given by $(x_1,x_2) = (10,10)$
and $(x_1,x_2) = (40,40)$, where the $(x_1,x_2)$-marginal PMF
takes approximately the same values, and with the switching time
between the two new modes being of the order $\mathcal{O}(10)$ time-units. 
To this end, consider the 
stochastic morpher $\mathcal{R}_{\beta} \cup \mathcal{R}_{\gamma}^\mathcal{P} = 
\mathcal{R}_{\beta}(Y_1, Y_2) \cup \mathcal{R}_{\gamma}^\mathcal{P}(X_1, X_2; \, Y_1, Y_2)$, 
given by
\begin{align}
\mathcal{R}_{\beta}:
& & 2 Y_1 & \xrightarrow[]{\beta_{1,1}} Y_1 \xrightleftharpoons[\beta_{2,1}]{\beta_{1,2}} Y_2,
\nonumber \\
\mathcal{R}_{\gamma}^{\mathcal{P}}: 
& & \ \mathcal{R}_{\gamma_0}^{\varepsilon}:  
\; \; \; \;                        X_1 & \xrightarrow[]{\gamma_{0,1}/\varepsilon} \varnothing, 
\hspace{1.3cm}         X_2  \xrightarrow[]{\gamma_{0,2}/\varepsilon} \varnothing, \nonumber \\
& & \mathcal{R}_{\gamma_1}^{\varepsilon}:  
\; \; \; \;                 Y_1 & \xrightarrow[]{\gamma_{1,1}/\varepsilon} Y_1 + X_1, 
\hspace{0.3cm}  Y_1 \xrightarrow[]{\gamma_{1,2}/\varepsilon} Y_1 + X_2, \nonumber \\
& & \mathcal{R}_{\gamma_2}^{\varepsilon}:  
\; \; \; \;                Y_2 & \xrightarrow[]{\gamma_{2,1}/\varepsilon} Y_2 + X_1, 
\hspace{0.3cm}  Y_2 \xrightarrow[]{\gamma_{2,2}/\varepsilon} Y_2 + X_2, 
\; \; \; \; 0 < \varepsilon \ll 1.
\label{eq:R_gamma_Poisson_2_2}
\end{align}
Controller~(\ref{eq:R_gamma_Poisson_2_2}) is a two-target-species analogue
of the network~(\ref{eq:R_gamma_Poisson_2}), with each of the controlling species
$\mathcal{Y} = \{Y_1, Y_2\}$ now producing both of the target species $\mathcal{X}_{\tau} = \{X_1, X_2\}$. 
The stationary $(x_1, x_2)$-marginal PMF of the output 
network~(\ref{eq:input_2})$\cup$(\ref{eq:R_gamma_Poisson_2_2})
is given, in the limit $\varepsilon \to 0$, by
\begin{align}
p_{0}(x_1, x_2) & = \left(1 + \frac{\beta_{1,2}}{\beta_{2,1}} \right)^{-1}
\mathcal{P} \left(x_1; \,  \frac{\gamma_{1,1}}{\gamma_{0,1}} \right) 
\mathcal{P} \left(x_2; \,  \frac{\gamma_{1,2}}{\gamma_{0,2}} \right) 
+ \left(1 + \frac{\beta_{2,1}}{\beta_{1,2}} \right)^{-1}
\mathcal{P} \left(x_1; \,  \frac{\gamma_{2,1}}{\gamma_{0,1}} \right) 
\mathcal{P} \left(x_2; \,  \frac{\gamma_{2,2}}{\gamma_{0,2}} \right), \label{eq:R_gamma_Poisson_2_PMF_2}
\end{align}
see also Theorem~\ref{theorem:Poisson_Kronecker} from 
Appendix~\ref{app:long_time}.
Note that~(\ref{eq:R_gamma_Poisson_2_PMF_2}) is a linear combination of 
a product of two Poisson distributions, which is
a two-dimensional analogue of~(\ref{eq:R_gamma_Poisson_2_PMF}).
 
In order to achieve the desired modes, we fix
$\boldsymbol{\gamma} = (\boldsymbol{\gamma}_1, \boldsymbol{\gamma}_2)
= ((\gamma_{0,1}, \gamma_{1,1}, \gamma_{2,1}), 
(\gamma_{0,2}, \gamma_{1,2}, \gamma_{2,2})) = 
((1,10,40), (1,10,40))$. One the other hand, 
in order to ensure that the PMF takes 
approximately equal values at the two modes,  and
that the switching time is of the order $\mathcal{O}(10)$ time-units, 
we set $\beta_{1,2}/\beta_{2,1} = 4$, and 
$(\beta_{1,2} + \beta_{2,1}) = 1/10$, respectively, which is achieved
by taking $\boldsymbol{\beta} = (\beta_{1,1}, \beta_{1,2}, \beta_{2,1}) = 
(1,4/50,1/50)$.
In Figure~\ref{fig:Schlogl}(c)--(d), we display the stationary 
$(x_1, x_2)$-marginal PMF of the output 
network~(\ref{eq:input_2})$\cup$(\ref{eq:R_gamma_Poisson_2_2}), 
and the underlying representative sample paths, when 
the asymptotic parameter is fixed to $\varepsilon = 1$.  
One can notice that the two modes from the input PMF, shown
in Figure~\ref{fig:Schlogl}(a), have already largely redistributed across the 
two target modes, concentrating more
near $(x_1, x_2) = (40,40)$. 
Broadly speaking, as the parameter $\varepsilon$ is decreased, 
the input PMF is at first more attracted towards 
the target mode $(x_1, x_2) = (40,40)$, than to
$(x_1, x_2) = (10,10)$, due to the fact that the former mode 
contains significantly more probability mass in the limit
$\varepsilon \to 0$, under the particular choice of $\boldsymbol{\beta}$.
Note that the PMF already displays bi-modality and positive correlation
between the target species for $\varepsilon = 1$.
As the parameter $\varepsilon$ is further decreased, the PMF 
from Figure~\ref{fig:Schlogl}(c) gradually reshapes into the desired form. 
In Figure~\ref{fig:Schlogl}(e)--(f), we take the asymptotic parameter
$\varepsilon = 10^{-2}$, so that the largest rate coefficients 
from the input network~(\ref{eq:input_2}) and 
the controller~(\ref{eq:R_gamma_Poisson_2_2}) are separated by two orders of magnitude,
and one can notice that the 
stationary PMF has converged close to the target~(\ref{eq:R_gamma_Poisson_2_PMF_2}) 
when $\varepsilon = 10^{-2}$.
Comparing Figures~\ref{fig:Schlogl}(a)--(b) and (e)--(f), 
one can also notice that the marginal modes for the 
target species $\mathcal{X}_{\tau} = \{X_1, X_2\}$ have been approximately preserved under the 
stochastic bifurcation induced by controller~(\ref{eq:R_gamma_Poisson_2_2}), 
while the correlation has been reversed from negative to positive, respectively. 
To gain a more quantitative information about the convergence,
in Figure~\ref{fig:Schlogl}(g) we display the distance (error)
between the target stationary $(x_1, x_2)$-marginal 
PMF~\ref{eq:R_gamma_Poisson_2_PMF_2}
and the long-time output PMF for non-zero $\varepsilon$, denoted by $p_{\varepsilon}(x_1, x_2)$,
as a function of the asymptotic parameter $\varepsilon$.
Measuring the error using the $l^1$-norm: $\|p_0 - p_{\varepsilon}\|_1 = 
\sum_{x_1, x_2} |p_0(x_1, x_2) - p_{\varepsilon}(x_1, x_2)|$, 
one can notice that $\|p_0 - p_{\varepsilon}\| = \mathcal{O}(\varepsilon)$
 for sufficiently small $\varepsilon$, 
i.e. the error decreases linearly as a function of $\varepsilon$, 
which also holds for more general input networks, see 
Theorem~\ref{theorem:convergence} in Appendix~\ref{sec:convergence}.

\emph{Residual species}. 
The dynamics of the
target and residual species, $\mathcal{X}_{\tau} = \{X_1, X_2\}$ 
and $\mathcal{X}_{\rho} = \{X_3\}$, respectively, are coupled. 
As a consequence, explicit control of the target species 
implicitly influences the dynamics of the residual species.
In Figure~\ref{fig:Schlogl}(h), we display as the solid black curve, 
and the cyan histogram, the stationary $x_3$-marginal PMFs from 
the input network~(\ref{eq:input_2}), and the output
network~(\ref{eq:input_2})$\cup$(\ref{eq:R_gamma_Poisson_2_2}) with 
$\varepsilon = 10^{-2}$, respectively. 
One can notice that, under the controller~(\ref{eq:R_gamma_Poisson_2_2}), the
PMF of the residual species is redistributed across the two modes, 
which approximately remain fixed in this particular example. 

The dynamics of the residual species $X_3$,
in the limit $\varepsilon \to 0$, is governed by the so-called
\emph{residual} network, denoted by $\bar{\mathcal{R}}_{\alpha}^2 = 
\bar{\mathcal{R}}_{\alpha}^2(X_3; \, Y_1, Y_2)$, which is given by
\begin{align}
\bar{\mathcal{R}}_{\alpha}^2:
& & \varnothing & \xrightleftharpoons[\alpha_{5}]{\alpha_{4}} X_3, \nonumber \\
& & Y_1 & \xrightarrow[]{\alpha_{6} \left(\gamma_{1,1}/\gamma_{0,1} \right)^2} Y_ 1 + X_3, \hspace{0.3cm}
Y_2 \xrightarrow[]{\alpha_{6} \left(\gamma_{2,1}/\gamma_{0,1} \right)^2} Y_2 + X_3.
\label{eq:input_2_residual}
\end{align}
Network~(\ref{eq:input_2_residual}) is obtained by suitably eliminating 
the target species $\mathcal{X}_{\tau} = \{X_1, X_2\}$ from the input network~(\ref{eq:input_2}),
see Appendix~\ref{sec:residual_networks}, and equation~(\ref{eq:residual_Poisson}) in
particular, for more details. Note that the controlling species $\mathcal{Y} = \{Y_1, Y_2\}$, which play
a catalytic role in the sub-network $\mathcal{R}_{\gamma}^{\mathcal{P}}$
from~(\ref{eq:R_gamma_Poisson_2_2}), also play a catalytic role in the
residual network~(\ref{eq:input_2_residual}). 
Note also that, as the black-box input network~(\ref{eq:input_2}) 
is assumed to have an unknown structure, the residual network~(\ref{eq:input_2_residual})
is unknown from the control perspective. 

In Figure~\ref{fig:Schlogl}(h), we show, as the dotted red curve, 
the stationary $x_3$-marginal PMF of the composite residual network
$\bar{\mathcal{R}}_{\alpha}^2 \cup \mathcal{R}_{\beta}$, where
$\mathcal{R}_{\beta}$ is given in~(\ref{eq:R_gamma_Poisson_2_2}), 
which is in good agreement with the cyan histogram, 
verifying the validity of the network~(\ref{eq:input_2_residual}).
In fact, under the parameter choice in this paper, the conversion reactions 
from the network $\mathcal{R}_{\beta}$ fire significantly slower than the residual
network $\bar{\mathcal{R}}_{\alpha}^2$ and, as a consequence, 
the corresponding $x_3$-marginal PMFs is approximately 
given by a linear combination of two Poisson distributions~\cite{Me_Mixing}
centered at $x_3 =(\alpha_4 + \alpha_6 (\gamma_{1,1}/\gamma_{0,1})^2)/\alpha_5 = 24$ and 
$x_3 =(\alpha_4 + \alpha_6 (\gamma_{2,1}/\gamma_{0,1})^2)/\alpha_5 = 204$, 
with weights identical to those from~(\ref{eq:R_gamma_Poisson_2_PMF_2}). 
Let us note that, more generally, 
the residual species do not necessarily inherit multi-modality, nor
Poisson-based PMFs, from the target species. 

\section{Implicit control: Gene-expression input network} \label{sec:example3}
In Section~\ref{sec:example2}, 
we have focused on controlling the target species, while
ignoring the induced implicit effects on the underlying residual species. 
In this section, we shift our focus to an implicit control of the residual species, 
via appropriate manipulations of the target species.
In particular, we exploit the time-scale separation between 
the residual network and the networking governing the dynamics
of the controlling species in order to obtain a control over the residual species.
To this end, consider the two-species uni-molecular
reaction network, denoted by $\mathcal{R}_{\alpha}^3 = 
\mathcal{R}_{\alpha}^3(X_1, X_2)$, and given by
\begin{align}
\mathcal{R}_{\alpha}^3:
& & \varnothing & \xrightleftharpoons[\alpha_{2}]{\alpha_{1}} X_1, \hspace{0.3cm}
X_1 \xrightarrow[]{\alpha_{3}} X_1 + X_2, \hspace{0.3cm}
X_2 \xrightarrow[]{\alpha_{4}} \varnothing. 
\label{eq:input_3}
\end{align}
Network~(\ref{eq:input_3}) may be interpreted 
as a simplified model for gene transcription and translation within 
a biological cell: $X_1$ represents an mRNA species, 
transcribed from a gene, and translated into a suitable protein species $X_2$, with each
of the two species being degradable. For simplicity, 
many biochemical steps necessary for 
gene expression have been omitted~\cite{Kepler}
(e.g. a fixed abundance of genes and ribosomes is assumed, and incorporated as effective rate coefficients).
Let us note that, while we have assumed that the black-box networks~(\ref{eq:input_1}) 
and~(\ref{eq:input_2}) have an unknown structure from the perspective
of control, in this section we assume that the structure of the 
 black-box input network~(\ref{eq:input_3}) is partially known.
More precisely, we assume that the only reactions which 
change the copy-number of $X_2$ are 
$X_1 \to X_1 + X_2$ and $X_2 \to \varnothing$ (whose rate coefficients
may be unknown), which is satisfied if e.g. $X_2$ is involved only as 
a catalyst in the remaining biochemical reactions within the cell, 
while the rest of the structure of~(\ref{eq:input_3}), which may be embedded into a larger 
ambient network, is allowed to be unknown.

Assume one desires to control the protein species $X_2$ 
(and thereby the phenotype of the underlying cells), 
by utilizing a controller which is experimentally realized with RNA 
molecules~\cite{RNAComputing,PostTranscription}. 
In this case, interfacing 
RNA controlling species with the protein $X_2$ may
be a difficult task, as the two types of molecules have different biophysical properties. 
A more natural target for the RNA controlling species
is the mRNA species $X_1$, as they may 
interact via the highly programmable 
toehold-mediated strand-displacement mechanism.
This motivates one to consider the problem of 
explicitly influencing the target species $X_1$, in order to implicitly control the 
residual species $X_2$.
To this end, let us induce bi-modality into the probability distribution
of $X_2$, by considering the stochastic morpher acting on $X_1$, given by
\begin{align}
\mathcal{R}_{\beta}:
& & 2 Y_1 & \xrightarrow[]{\beta_{1,1}} Y_1 \xrightleftharpoons[\beta_{2,1}]{\beta_{1,2}} Y_2,
\nonumber \\
\mathcal{R}_{\gamma}^{\mathcal{P}}: 
& & \mathcal{R}_{\gamma_0}^{\varepsilon}:  \; \; \; \; 
X_1 & \xrightarrow[]{\gamma_{0,1}/\varepsilon} \varnothing, \nonumber \\
& & \mathcal{R}_{\gamma_1}^{\varepsilon}:  \; \; \; \; 
Y_1 & \xrightarrow[]{\gamma_{1,1}/\varepsilon} Y_1 + X_1, \nonumber \\
& & \mathcal{R}_{\gamma_2}^{\varepsilon}:  \; \; \; \;  
Y_2 & \xrightarrow[]{\gamma_{2,1}/\varepsilon} Y_2 + X_1, 
\; \; \; \; 0 < \varepsilon \ll 1.
\label{eq:ex3}
\end{align}
In the limit $\varepsilon \to 0$, the stationary $x_1$-marginal PMF from the output
network~(\ref{eq:input_3})$\cup$(\ref{eq:ex3}), denoted by $p_0(x_1)$, 
has the form~(\ref{eq:R_gamma_Poisson_2_PMF}). 
On the other hand, the residual network, governing the dynamics of the
species $X_2$, denoted by 
$\bar{\mathcal{R}}_{\alpha}^3 = 
\bar{\mathcal{R}}_{\alpha}^3(X_2; \, Y_1, Y_2)$, is given by
\begin{align}
\bar{\mathcal{R}}_{\alpha}^3:
& & X_2 & \xrightarrow[]{\alpha_{4}} \varnothing, \nonumber \\
& & Y_1 & \xrightarrow[]{\alpha_{3} \left(\gamma_{1,1}/\gamma_{0,1} \right)} Y_ 1 + X_2, \hspace{0.3cm}
Y_2 \xrightarrow[]{\alpha_{3} \left(\gamma_{2,1}/\gamma_{0,1} \right)} Y_2 + X_2.
\label{eq:input_3_residual}
\end{align}
Bi-modality in the probability distribution of $X_2$
may be achieved by taking sufficiently slow conversion reactions from the network 
$\mathcal{R}_{\beta}(Y_1, Y_2)$~\cite{Me_Mixing}.
In particular, the stationary $x_2$-marginal PMF, in the limit 
$(\beta_{1,2} + \beta_{2,1}) = \varepsilon_{\beta} \to 0$ with $\beta_{1,2}/\beta_{2,1}$ fixed,
is given by
\begin{align}
p_{0}(x_2) & = \left(1 + \frac{\beta_{1,2}}{\beta_{2,1}} \right)^{-1}
\mathcal{P} \left(x_2; \,  \frac{\alpha_3}{\alpha_4} \left( \frac{\gamma_{1,1}}{\gamma_{0,1}} \right) \right) 
+ \left(1 + \frac{\beta_{2,1}}{\beta_{1,2}} \right)^{-1}
\mathcal{P} \left(x_2; \,  \frac{\alpha_3}{\alpha_4} \left( \frac{\gamma_{2,1}}{\gamma_{0,1}}  \right) \right).
\label{eq:PMF2_ex3}
\end{align}
Note that the assumption that the reactions which change the copy-numbers of $X_2$ are known, 
implies that the structure of the residual network~(\ref{eq:input_3_residual}), and 
the form of the $x_2$-marginal PMF~(\ref{eq:PMF2_ex3}), are also known.

The stationary PMF~(\ref{eq:PMF2_ex3}), describing the long-time dynamics
of the residual species $X_2$, depends on the (generally) unknown input rate coefficients $\alpha_3$ 
and $\alpha_4$, in contrast to the stationary PMF
of the target species, which does not depend on the input coefficients as $\varepsilon \to 0$. 
Despite dependence on the input parameters, 
controllable bi-modality in species $X_2$ may be achieved.  
In particular, the ratio between the centers of the two Poisson distributions
 from~(\ref{eq:PMF2_ex3}) is given by $\gamma_{2,1}/\gamma_{1,1}$, 
which is independent of the rate coefficients $\alpha_3$ and $\alpha_4$. 
Hence, relative distance between the two modes of the species $X_2$ is controllable 
with the ratio of the production rate coefficients of the species $X_1$ 
from the stochastic morpher~(\ref{eq:ex3}). Taking the ratio 
$\gamma_{2,1}/\gamma_{1,1}$ sufficiently large ensures that the
two Poisson distributions from~(\ref{eq:PMF2_ex3}) are well-separated, 
and the probability mass at the two modes
 is controlled by the ratio $\beta_{1,2}/\beta_{2,1}$, as before. 
On the other hand, as opposed to before, in Sections~\ref{sec:example1} and~\ref{sec:example2},
 where $(\beta_{1,2} + \beta_{2,1})$ was chosen to control
the switching time of the underlying sample paths of the target species, in the current setting
we exploit this degree of freedom by sufficiently slowing down the switching time by 
setting $(\beta_{1,2} + \beta_{2,1}) = \varepsilon_{\beta} \ll 1$, 
ensuring validity of~(\ref{eq:PMF2_ex3}).

In Figure~\ref{fig:Genetic}(a) and (b), we display, as the black interpolated dots,  
the stationary PMFs of the target and residual species, $X_1$ and $X_2$, 
respectively, of the input network~(\ref{eq:input_3}) with the rate coefficients fixed to 
$\boldsymbol{\alpha} = (\alpha_1, \alpha_2, \alpha_3, \alpha_4) = (2, 1, 10, 1)$.
Let us morph the input $x_2$-stationary PMF into a bi-modal one, 
of the form~(\ref{eq:PMF2_ex3}), with  the larger mode
being three times further away than the smaller mode, $\gamma_2/\gamma_1 = 3$, 
and with $\beta_{1,2}/\beta_{2,1} = 1$. 
To this end, we consider the output network~(\ref{eq:input_3})$\cup$(\ref{eq:ex3})
with $\boldsymbol{\gamma} = (\gamma_{0,1}, \gamma_{1,1}, \gamma_{2,1}) 
= (1, 1, 3)$, and $\beta_{1,2} + \beta_{2,1} = \varepsilon_{\beta} \ll 1$. 
In Figure~\ref{fig:Genetic}(a), we display the stationary $x_1$-marginal PMF
of the output network~(\ref{eq:input_3})$\cup$(\ref{eq:ex3}) when 
$\varepsilon = 10^{-2}$ as the cyan histogram, which is in an excellent agreement with 
the theoretical prediction~(\ref{eq:R_gamma_Poisson_2_PMF}). 
Note that the $x_1$-marginal PMF is uni-modal, as the two Poisson
distributions from~(\ref{eq:R_gamma_Poisson_2_PMF}) are not sufficiently
well-separated for the chosen parameters, and that it is approximately
independent of the value of $\varepsilon_{\beta}$ (which only influences
appropriate dynamical time-scales of the underlying sample paths). 
On the other hand, in Figure~\ref{fig:Genetic}(b), we display
the stationary $x_2$-marginal PMF
of the output network~(\ref{eq:input_3})$\cup$(\ref{eq:ex3})
as the interpolated purple squares, and the cyan histogram, 
for $\varepsilon = 10^{-2}$, and two different values of
$\varepsilon_{\beta}$, namely $\varepsilon_{\beta} = 1$ and $\varepsilon_{\beta} = 10^{-2}$, 
respectively. One can notice that, 
as the conversion reactions in the sub-network $\mathcal{R}_{\beta}$
fire slower, the stationary $x_2$-marginal PMF converges to the desired
bi-modal form, and an implicit control of the residual species is achieved. 

\begin{figure}[!htbp]
\centerline{
\hskip 0mm
\includegraphics[width=0.35\columnwidth]{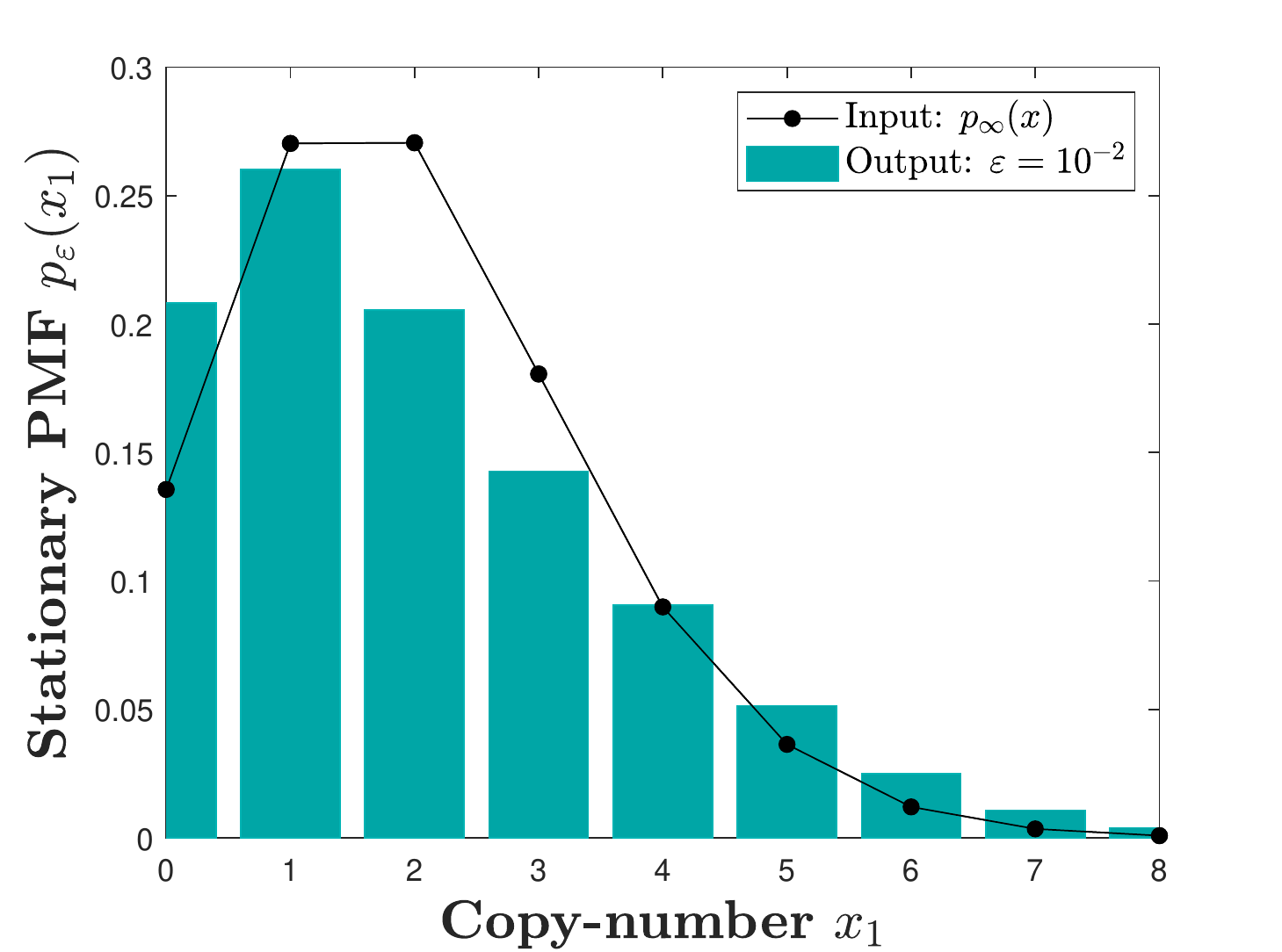}
\hskip 1.5cm
\includegraphics[width=0.35\columnwidth]{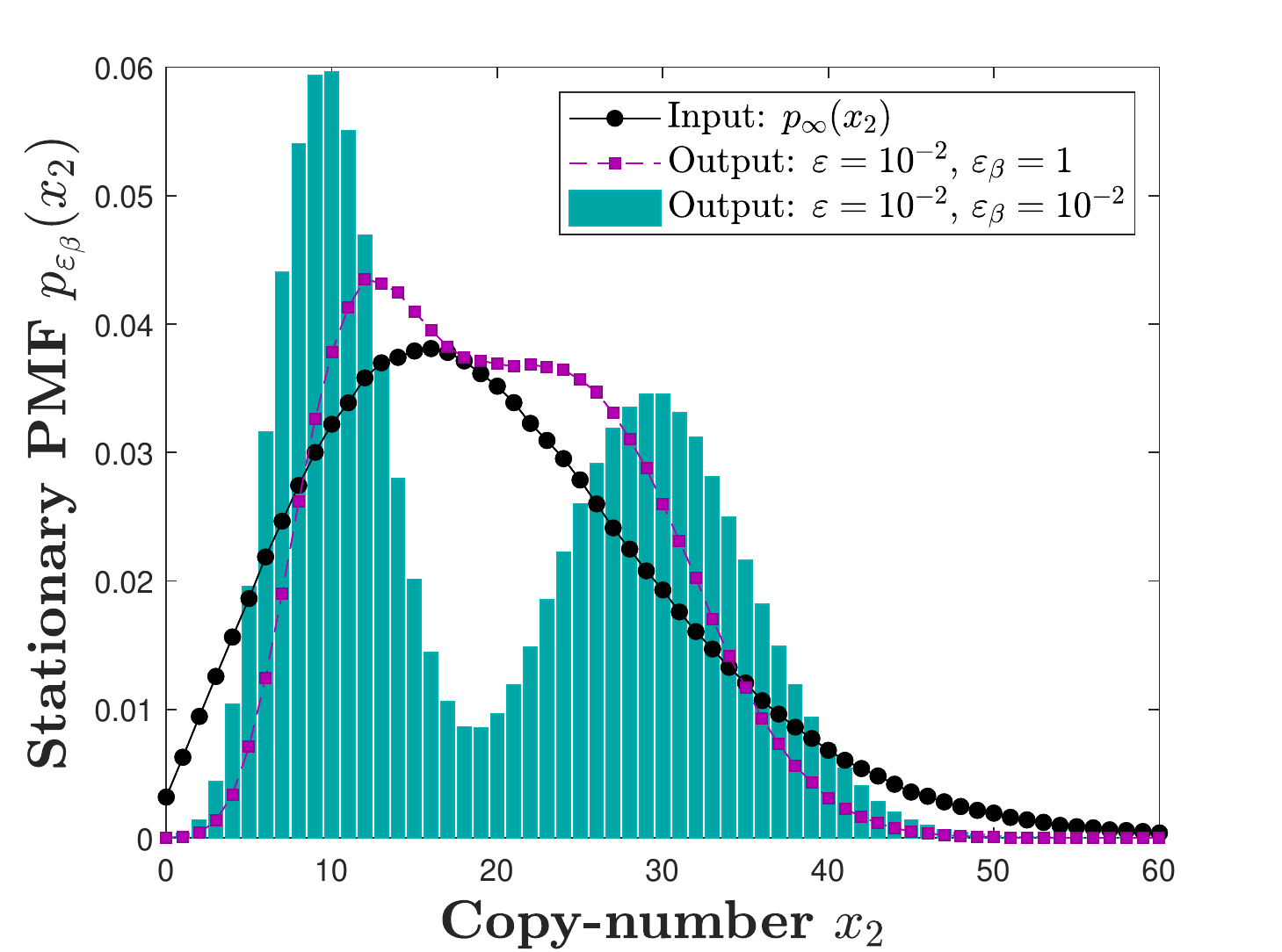}
}
\vskip -4.5cm
\leftline{\hskip 1.5cm (a) \hskip 6.8cm (b)} 
\vskip 3.8cm 
\caption{ 
Application of the lower-resolution control from 
{\rm Algorithm~\ref{alg:SM_algorithm}} on the input network~{\rm (\ref{eq:input_3})}
with $(\alpha_1, \alpha_2, \alpha_3, \alpha_4) = (2, 1, 10, 1)$.
{\it Panel {\rm (a)} shows the stationary $x_1$-marginal {\rm PMF}
of the input network~{\rm (\ref{eq:input_3})} as the interpolated
black dots, and of the output network~{\rm (\ref{eq:input_3})$\cup$(\ref{eq:ex3})},
with $\beta_{1,1} = 1$, $\beta_{1,2}/\beta_{2,1} = 1$, $\beta_{1,2} + \beta_{2,1} = \varepsilon_{\beta} = 1$,
$(\gamma_{0,1}, \gamma_{1,1}, \gamma_{2,1}) = (1, 1, 3)$ and $\varepsilon = 10^{-2}$, 
as the cyan histogram. 
Panel {\rm (b)} shows the stationary $x_2$-marginal {\rm PMF}
of the input network~{\rm (\ref{eq:input_3})} as the interpolated
black dots. Also shown as the interpolated purple squares, 
and the cyan histogram, are the $x_2$-marginal {\rm PMF}s
of the output network~{\rm (\ref{eq:input_3})$\cup$(\ref{eq:ex3})},
when $\varepsilon_{\beta} = 1$ and $\varepsilon_{\beta} = 10^{-2}$, 
respectively, with the rest of the parameters as in panel {\rm (a)}.
}
}
\label{fig:Genetic}
\end{figure}

\section{Proposed experimental implementation} \label{sec:experiments}
The stochastic morpher has been 
constructed to be experimentally implementable.
At the structural level, both the lower- and higher-resolution controller 
networks from Algorithm~\ref{alg:SM_algorithm}
consist of up-to second-order (and not higher-order)
reactions, with any reaction which does not involve an auxiliary species
being up-to first-order, allowing one to readily map the controller 
into nucleic-acid-based physical networks~\cite{DNAComputing1}. At the dynamical level,
as demonstrated in Sections~\ref{sec:example1}--\ref{sec:example3}, 
and established in Appendix~\ref{app:analysis}, 
the operational precision and robustness of the stochastic morpher depend on 
the appropriate orders of magnitude (time-scale separation) and ratios
of the rate coefficients of the underlying reactions, 
rather than specific values. This is suitable for implementation via the 
strand-displacement mechanism, 
where the rate coefficients may be varied over at least 
six orders of magnitude~\cite{Experiment5,Mismatches,RNAComputing},
and where the speed of the overall biochemical dynamics may also be further increased
with the addition of appropriate enzymes~\cite{Enzymes}.

In this section, we put forward a blueprint for an experimental implementation 
of the stochastic morpher from Algorithm~\ref{alg:SM_algorithm}. 
More specifically, as a proof-of-concept, 
we focus on physically realizing the 
controller~(\ref{eq:R_gamma_Poisson_2}) from Section~\ref{sec:example1}, which is 
capable of morphing the probability distribution of an
input network into a desired bi-modal form. 
One way to realize the stochastic morpher~(\ref{eq:R_gamma_Poisson_2}) 
is via a physical network which satisfies the following two properties:
(i) it contains two isomeric molecular species which may convert between themselves 
(realizing the controlling species $Y_1$ and $Y_2$, and the reaction $Y_1 \xrightleftharpoons[]{} Y_2$), 
each triggering a catalytic production of the target biochemical species $X$ at generally different rates 
(realizing the reactions $Y_1 \to Y_1 + X$ and $Y_2 \to Y_2 + X$), 
and
(ii) the network is integrated into an environment ensuring the presence of exactly one copy-number of the two isomeric
 species at a time (realizing $y_1 + y_2 = 1$, and eliminating the need for
the reaction $2 Y_1 \to Y_1$). 
Such conditions may be experimentally achieved via suitable DNA-strand-displacement-based physical 
networks, enclosed in appropriate compartments~\cite{Vesicles1,Vesicles2,Vesicles3,Vesicles5,Vesicles4}, 
allowing for an experimental observation and validation of the stochastic morpher.

More specifically, we put forward a DNA complex, known as a Holliday junction molecule, 
encapsulated in a nano-scale chamber, known as a small unilamellar vesicle (SUV),
 as a realization of the stochastic morpher~(\ref{eq:R_gamma_Poisson_2}), 
schematically displayed in Figure~\ref{fig:Experiment}(a). 
The SUV encapsulation is an experimentally demonstrated method for isolating and observing the 
dynamics of individual molecules, such as a single Holiday junction molecule, 
with minimal effect from the external environment~\cite{Vesicles5}.
The DNA Holliday junction complex consists of four double-stranded arms crossing
at a branch point, which is designed to be fixed (non-migratory) for the purpose of our implementation.
Let us note that the Holliday junction molecule with a migratory branch point is a central intermediate
during genetic recombination process~\cite{Experiment2}.
In the presence of magnesium ions, the Holliday junction can adopt two distinct orientations,
known as stacked conformational isomers~\cite{Experiment3}, allowing one to physically realize the reversible reaction 
$Y_1 \xrightleftharpoons[]{} Y_2$.
In order to control the rate coefficient of the interconversion reactions
$Y_1 \xrightleftharpoons[]{} Y_2$, and the catalytic production reactions
 $Y_1 \to Y_1 + X$ and $Y_2 \to Y_2 + X$, we put forward suitable DNA overhangs
involved in the associative (rather than dissociative)
 toehold activation~\cite{Experiment4}. 
More precisely, we tag three arms of the four-armed Holliday junction 
by extending them with distinct single-stranded DNA molecules (overhangs),
which pairwise associate with each other and activate distinct toeholds, 
shown as the grey regions attached to $Y_1$ and $Y_2$ in Figure~\ref{fig:Experiment}(b). 
The overhangs also partially hybridize with each other, 
forming duplexes next to the toeholds.
By controlling the length, and therefore the binding energy, of the 
duplexes in the associated DNA overhangs,
one can experimentally tune the interconversion rate between the 
two stacked conformational isomers. On the other hand, by controlling 
the length of the association-activated toeholds,
one can independently tune the rates of the two subsequent strand-displacement reactions
which produce the target species $X$ from a suitable precusor molecule, denoted
by $\bar{X}$ in Figure~\ref{fig:Experiment}(b).
Once a molecule of $X$ is produced, 
the isomeric species $Y_1$ and $Y_2$ can be recovered via 
appropriate strand-displacement reactions, 
ensuring an effective catalytic role of $Y_1$ and $Y_2$ 
in the overall reaction cascade.

\begin{figure}[!htbp]
\vskip 0.5cm
\centerline{
\hskip 0mm
\includegraphics[width=0.55\columnwidth]{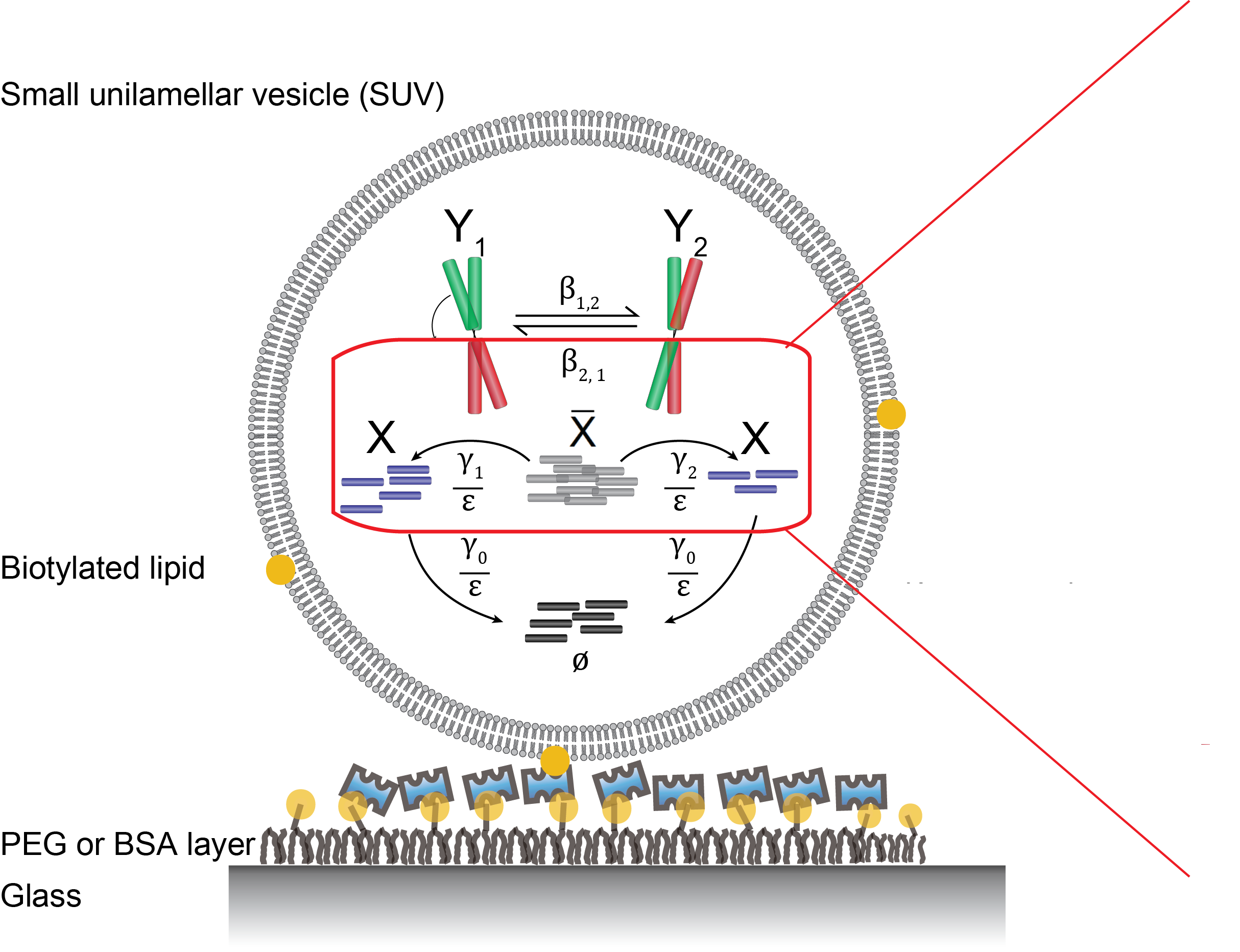}
\hskip 0cm
\includegraphics[width=0.55\columnwidth]{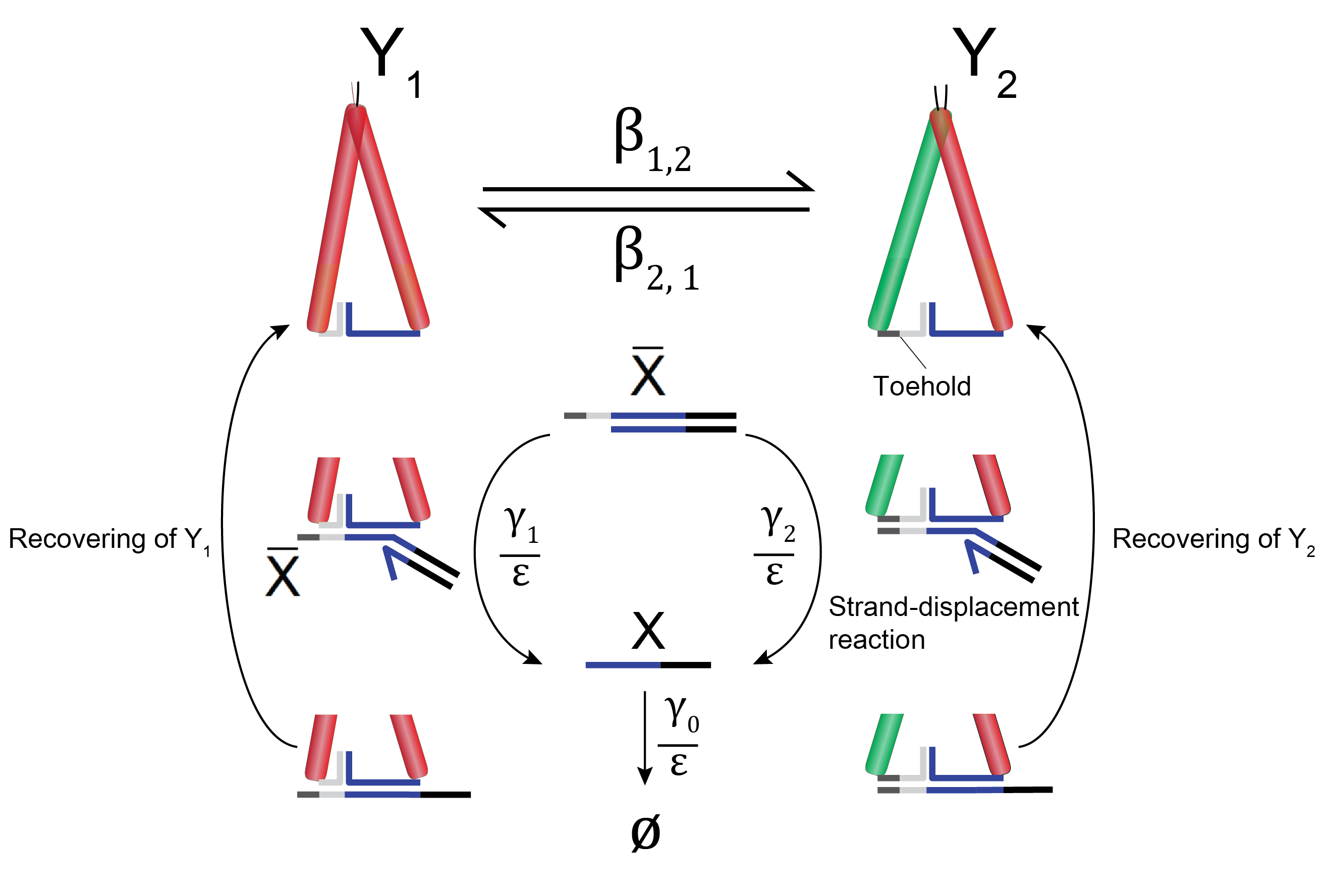}
}
\vskip -7.2cm
\leftline{\hskip 0.3cm (a) \hskip 8.2cm (b)} 
\vskip 6.8cm 
\caption{ 
Proposed experimental scheme for the stochastic morpher~(\ref{eq:R_gamma_Poisson_2}).
{\it Panel {\rm (a)} displays a small unilamellar vesicle 
({\rm SUV}), immobilized onto a {\rm PEG} or {\rm BSA} passivated surface.
The {\rm SUV} encloses a single {\rm DNA} Holliday junction molecule,
which switches between two distinct orientations, denoted by $Y_1$ and $Y_2$,
 and catalytically produces the target species $X$. 
Panel {\rm (b)} displays the underlying strand-displacement reactions, 
enclosed in the red quadrilateral in panel {\rm (a)}, in a greater detail. 
Here, three arms of the four-armed Holliday junction molecule are extended with 
single-stranded {\rm DNA} overhangs (grey and blue strands on $Y_1$ and $Y_2$). 
The DNA overhangs pairwise connect with each other, forming associated DNA overhangs, 
which consist of a toehold and a branch-migration domain, shown 
in grey and blue on $Y_1$ and $Y_2$, respectively, which are separated 
by a duplex. The association-activated toeholds
bind to an auxiliary double-stranded {\rm DNA}, called the precursor species
and denoted by $\bar{X}$, thus initiating the release of 
the target molecule $X$ via suitable strand-displacement reactions, 
whose rates are controlled by the toehold lengths.
Species $Y_1$ and $Y_2$ are then recovered 
using additional suitable strand-displacement reactions.
}
}
\label{fig:Experiment}
\end{figure}

\section{Discussion} \label{sec:discussion}
In this paper, we have introduced a biochemical controller, called 
\emph{stochastic morpher} and presented in Algorithm~\ref{alg:SM_algorithm} in Section~\ref{sec:intro}, 
which, when embedded into a black-box input reaction network with stochastic dynamics, 
overrides the input reactions and gradually transforms (morphs) 
the long-time probability mass function (PMF) of the target 
species into a desired predefined form. 
Morphing an input PMF consists of a sequence of intermediate probability 
distributions which increasingly resemble the desired output form, 
and which are parametrized by suitable time-scale separations between
 the stochastic morpher and the input network.
We have put forward two forms of the stochastic morpher:
the lower- and higher-resolution controllers, given 
by~(\ref{eq:R_beta_alg})$\cup$(\ref{eq:Poisson_control_alg})
and~(\ref{eq:R_beta_alg})$\cup$(\ref{eq:Kronecker_control_alg})
in Algorithm~\ref{alg:SM_algorithm}, respectively. 
The lower-resolution controller morphs a given input PMF into the space spanned by 
non-negative linear combinations of the Poisson distribution basis,
allowing one to achieve multi-modal PMFs (weak control), and
to manipulate average timing and the mode-switching pattern
of the underlying multi-stable sample paths (strong control). 
On the other hand, the higher-resolution controller
allows one to achieve arbitrary PMFs defined on bounded-domains.
General properties of the controller, including asymptotic robust perfect adaptation 
and convergence, are rigorously established in Appendix~\ref{app:analysis}
using singular perturbation theory. 
The stochastic morpher is envisaged for experimental
implementations involving cell-like vesicle in vitro,
and biological cells in vivo, where the lower species copy-numbers
can be exploited for gaining control over the dynamics
of desired biochemical species. 

In Section~\ref{sec:example1}, the lower-resolution stochastic morpher
has been applied on the one-species production-degradation 
test network~(\ref{eq:input_1}), in order to achieve a desired uni-, bi- and tri-modality, 
as well as to control the timing and pattern of the stochastic switching in the underlying sample paths, 
as shown in Figure~\ref{fig:Poisson}. We have also applied the higher-resolution
controller, in order to achieve a PMF concentrated at a single state (Kronecker-delta distribution), 
a uniform PMF, and a hybrid combination of Poisson and Kronecker-delta distributions, 
which is displayed in Figure~\ref{fig:Kronecker}.
In Section~\ref{sec:example2}, we have focused on the lower-resolution control
in greater detail, by considering the three-species test network~(\ref{eq:input_2}),
whose stationary PMF is bi-modal. The stochastic morpher has been utilized
to jointly explicitly control two, out of three, input species. In particular, the   
input marginal-PMF of the target species has been suitably redistributed, 
ensuring that the correlation between the two target species reverses 
from negative to positive, as shown in Figure~\ref{fig:Schlogl}(a)--(f). The error between the output
PMF and its target form has been numerically shown to decrease linearly with 
the underlying asymptotic parameter in Figure~\ref{fig:Schlogl}(g), 
in agreement with the theoretical convergence result established in Appendix~\ref{app:analysis}.
Finally, the dynamics of the remaining, residual, species in the output network
have been shown to match well with the theoretically derived
residual network~(\ref{eq:input_2_residual}), as demonstrated in Figure~\ref{fig:Schlogl}(h).
In Section~\ref{sec:example3}, we have applied Algorithm~\ref{alg:SM_algorithm}
on the two-species test network~(\ref{eq:input_3}), inspired by 
the process of gene expression, in order to obtain an implicit, rather than
explicit, control. More specifically, the stochastic morpher has been
used to explicitly influence a target species (interpreted as a transcribed mRNA),
 in order to implicitly control a residual species (interpreted as a translated protein).
It has been shown that such an approach can be used to induce multi-modality
into the residual species, with controllable relative separation between the modes, 
as displayed in Figure~\ref{fig:Genetic}. 
Finally, in Section~\ref{sec:experiments}, as a proof-of-concept,
we have put forward a blueprint for a DNA-strand-displacement-based experimental
realization of the stochastic morpher~(\ref{eq:R_gamma_Poisson_2}),
involving encapsulation of a suitable DNA complex inside
nano-scale vesicles~\cite{Experiment2,Experiment3,Vesicles5}. The experimental scheme, 
which may be seen as designing a synthetic cell, 
is outlined in Figure~\ref{fig:Experiment}, and will be pursued in a future publication.  
\newpage
Let us complete this section with four remarks. 
Firstly, the stochastic morpher has been inspired by a design principle from systems biology, called
noise-induced mixing~\cite{Me_Mixing}, and its operation requires that the 
interfacing network $\mathcal{R}_{\gamma}$, shown
in red in Figure~\ref{fig:Control_Theory} in Section~\ref{sec:intro},
fires sufficiently fast, with the controlling species, present at a single copy-number, 
being catalysts in $\mathcal{R}_{\gamma}$. 
Such fast-slow dynamics (time-scale separations) are ubiquitous and central in many natural biochemical 
networks from systems biology~\cite{Multiscale,CellCycle,Circadian,Bistability,Kepler,
Me_Homoclinic,Me_Limitcycles,Me_NAA,Me_Mixing}. 
It is then no surprise that time-scale separations
are necessary when abstract biochemical networks
are physically realized using molecules~\cite{DNAComputing1}, 
and that fast-slow dynamics also play a role in other biochemical controllers
developed in the literature~\cite{LeakyControl1,Khammash}.
Furthermore, control achieved via single (or, more generally, low and tightly regulated) 
molecular copy-numbers is also utilized by some natural biochemical systems, 
such as gene-regulatory networks involving expression of a single copy-number of a gene,
and it is then no surprise that such a property is also desirable in stochastic synthetic controllers.
Secondly, the lower- and higher-resolution 
stochastic morphers from Algorithm~\ref{alg:SM_algorithm} 
achieve PMFs that are linear combinations involving suitable bases. 
The bases are determined
 purely by the networks $\mathcal{R}_{\gamma}^{\mathcal{P}}$ and 
$\mathcal{R}_{\gamma}^{\delta}$, while the weights
in the underlying linear combinations are determined purely by
 $\mathcal{R}_{\beta}(\mathcal{Y})$, see also Appendix~\ref{app:long_time}.
The latter network, which is given by~(\ref{eq:R_beta_alg}), 
consists of the reaction $2 Y_1 \to Y_1$, and a periodic chain of
irreversible first-order conversion reactions between the controlling 
species $\mathcal{Y} = \{Y_1, Y_2, \ldots, Y_M\}$.
However, one can replace~(\ref{eq:R_beta_alg}) with 
 more general weakly reversible chains of first-order conversion reactions
between the species $\mathcal{Y}$, such as $2 Y_1 \to Y_1 
\xrightleftharpoons[]{} Y_2 \xrightleftharpoons[]{} \ldots \xrightleftharpoons[]{} Y_M$.
Such different choices of $\mathcal{R}_{\beta}$ only modify the functional form of the weights
in the PMFs achieved by Algorithm~\ref{alg:SM_algorithm}, and not the bases, 
and hence do not qualitatively change the weak control put forward in this paper.
An advantage of the particular choice~(\ref{eq:R_beta_alg}) is when strong control
is desired, since it allows one to deterministically control the mode-switching pattern
of the multi-stable sample paths underlying the achieved PMFs, 
see also Section~\ref{sec:example1} and Appendix~\ref{app:long_time}. 
Thirdly, the interfacing networks 
$\mathcal{R}_{\gamma}^{\mathcal{P}}(\mathcal{X}_{\tau}; \, \mathcal{Y})$ and
$\mathcal{R}_{\gamma}^{\delta}(\mathcal{X}_{\tau}, \mathcal{Z}; \, \mathcal{Y})$ 
from Algorithm~\ref{alg:SM_algorithm}
do not contain a feedback loop between the target species $\mathcal{X}_{\tau}$
and the controlling species $\mathcal{Y}$, making the stochastic morpher
an open-loop controller in the control theory language~\cite{ControlTheoryBook}. 
One can straightforwardly create a feedback loop between $\mathcal{X}_{\tau}$
and $\mathcal{Y}$ by e.g. introducing some of the target species
into the network $\mathcal{R}_{\gamma}$, thus resulting in a closed-loop controller. 
 Such feedback loops change only the weights of the PMFs achieved by the 
stochastic morpher, and not the bases in which the PMFs are expressed. 
However, in this case, the weights do not only depend on the rate 
coefficients from the network $\mathcal{R}_{\beta}$, but also on 
the coefficients from the networks $\mathcal{R}_{\gamma}^{\mathcal{P}}$
and  $\mathcal{R}_{\gamma}^{\delta}$. Such mixed dependence
may be seen as a disadvantage, since then e.g. the distribution
of the modes and the values of the achieved PMF at the modes 
cannot be controlled independently.
Finally, in the absence of an input network,
$\mathcal{R}_{\alpha} = \varnothing$, Algorithm~\ref{alg:SM_algorithm}
may be utilized to design, rather than control, biochemical reaction networks with predefined PMFs. 
Furthermore, noise-induced mixing can also be 
exploited for achieving other dynamical features, such multi-cyclicity (coexistence
of multiple stable oscillations), see~\cite{Me_Mixing}.

\section{Acknowledgements}
This work was supported by the EPSRC grant EP/P02596X/1.
Guy-Bart Stan also gratefully acknowledges the support of the 
UK EPSRC through the EPSRC Fellowship for Growth EP/M002187/1, 
and of the Royal Academy of Engineering 
through the Chair in Emerging Technology programme.
Thomas E. Ouldridge would also like to thank the Royal Society for
a University Research Fellowship.
\appendix

\section{Appendix: Background} \label{app:background}
\emph{Notation}. 
Set $\mathbb{R}$ is the space of real numbers, 
$\mathbb{R}_{\ge}$ the space of nonnegative real numbers,
 and $\mathbb{R}_{>}$ the space of positive real numbers. 
Similarly, $\mathbb{Z}$ is the space of integer numbers, 
$\mathbb{Z}_{\ge}$ the space of nonnegative integer numbers, 
and $\mathbb{Z}_{>}$ the space of positive integer numbers. 
Given two appropriate sequences $p(\cdot) : \mathbb{Z}_{\ge}^N \to \mathbb{R}$
and $u(\cdot) : \mathbb{Z}_{\ge}^N \to \mathbb{R}$,
the $l^1$-norm of $p(\mathbf{x})$ is
given by $ \| p \|_1 = \sum_{\mathbf{x}} |p(\mathbf{x})|$, 
while the $l^2$ (Hilbert sequence space) inner-product
of $p(\mathbf{x})$ and $u(\mathbf{x})$ is given by 
$\langle p, u \rangle = \sum_{\mathbf{x}} p(\mathbf{x}) u(\mathbf{x})$.
Euclidean row-vectors are denoted in boldface, 
$\mathbf{x} = (x_1, x_2, \ldots, x_N) \in \mathbb{R}^{N} = \mathbb{R}^{1 \times N}$.
Given a function $f(\cdot) : \mathbb{Z}_{\ge} \to \mathbb{R}$,
we define the product $\prod_{i = a}^b f(i) 
= f(a) f(a+1) \ldots f(b) \equiv 1$ if $a > b$. 
We also define $0^0 \equiv 1$. 
Given sets $\mathcal{A}_1$ and $\mathcal{A}_2$, 
their union is denoted by $\mathcal{A}_1 \cup \mathcal{A}_2$,
 their difference by $\mathcal{A}_1 \setminus \mathcal{A}_2$,
while their Cartesian-product, abusing the notation
slightly, by $\prod_{j = 1}^2 \mathcal{A}_j \equiv \mathcal{A}_1 \times \mathcal{A}_2$.
The empty set is denoted by $\emptyset$.

\subsection{Biochemical reaction networks} \label{app:CRNs}
In this paper, we consider reaction networks $\mathcal{R}_{\alpha}$
firing in well-mixed unit-volume reactors under mass-action kinetics, involving $N$ biochemical species 
$\mathcal{X} = \{X_1, X_2, \ldots, X_N\}$, and $A$ reactions, given by~\cite{Feinberg}
\begin{align}
\mathcal{R}_{\alpha}(\mathcal{X}): \; \; 
& & \sum_{l = 1}^N \nu_{j, l} X_l & \xrightarrow[]{\alpha_{j}} \sum_{l = 1}^N \bar{\nu}_{j, l}X_l,
\; \; \; \;
j \in \{1, 2, \ldots, A\}.
\label{eq:general_networks}
\end{align}
Here, $\alpha_j \in \mathbb{R}_{\ge}$ 
 is the \emph{rate coefficient} of the $j$th reaction, 
and $\nu_{j, l}, \bar{\nu}_{j, l} \in \mathbb{Z}_{\ge}$ 
are the \emph{reactant} and \emph{product stoichiometric coefficients}
of the species $X_l$ in the $j$th reaction, respectively.
When all of the reactant (product) stoichiometric coefficients
are equal to zero in a reaction, the reactant (product)
is the zero-species, denoted by $\varnothing$, 
which represents species which are not explicitly modelled. 

When convenient, we indicate dependence of a reaction network on
species of interest, e.g. to emphasize that $\mathcal{R}_{\alpha}$
 involves species $\mathcal{X}$, we have written $\mathcal{R}_{\alpha} = 
\mathcal{R}_{\alpha}(\mathcal{X})$ in~(\ref{eq:general_networks}).
We collect all of the rate coefficients into the vector 
$\boldsymbol{\alpha} = (\alpha_1, \alpha_2, \ldots, \alpha_A) \in \mathbb{R}_{\ge}^A$.
We follow the convention of denoting the (vector of the) rate coefficients 
of the reactions underlying a reaction network using the same letter as the network subscript. 
In addition, fixing a reaction coefficient to zero is
defined as deleting the corresponding reaction from
the underlying network. 
Furthermore, we define the \emph{reactant} and \emph{product complexes} 
of the $j$th reaction from~(\ref{eq:general_networks}) as the vectors 
$\boldsymbol{\nu}_j = (\nu_{j,1}, \nu_{j,2}, \ldots, \nu_{j,N}) \in \mathbb{Z}_{\ge}^N$
and $\bar{\boldsymbol{\nu}}_j = (\bar{\nu}_{j,1}, \bar{\nu}_{j,2}, \ldots, \bar{\nu}_{j,N}) 
\in \mathbb{Z}_{\ge}^N$, respectively, and, abusing the notation slightly, 
denote the $j$th reaction by $(\boldsymbol{\nu}_j \to \bar{\boldsymbol{\nu}}_j) 
\in \mathcal{R}_{\alpha}$, when convenient. 
Reactions with the same reactant and product complexes, 
$(\boldsymbol{\nu}_j \to \boldsymbol{\nu}_j)$
(such as the one obtained by taking $M = 1$ in~(\ref{eq:R_beta_alg}) 
from Algorithm~\ref{alg:SM_algorithm}) are redundant, and
 are deleted from reaction networks.
We denote two irreversible reactions $(\boldsymbol{\nu} \to \bar{\boldsymbol{\nu}}) 
\in \mathcal{R}_{\alpha}$
and
$(\bar{\boldsymbol{\nu}} \to \boldsymbol{\nu}) \in \mathcal{R}_{\alpha}$
 jointly as the single reversible reaction
$(\boldsymbol{\nu} \xrightleftharpoons[]{} \bar{\boldsymbol{\nu}}) \in\mathcal{R}_{\alpha}$, 
when convenient.
\emph{Reaction vector} of the $j$th reaction is defined as 
$\Delta \mathbf{x}_j = (\bar{\boldsymbol{\nu}}_j - \boldsymbol{\nu}_j) \in \mathbb{Z}^N$.
The \emph{order of reaction} $(\boldsymbol{\nu}_j \to \bar{\boldsymbol{\nu}}_j) \in \mathcal{R}$
is given by $\langle \mathbf{1}, \boldsymbol{\nu}_j\rangle \in \mathbb{Z}_{\ge}$, 
with $\mathbf{1} = (1, 1, \ldots, 1) \in \mathbb{Z}^N$. 
The \emph{order of reaction network} $\mathcal{R}_{\alpha}$ is given by the order of
its highest-order reaction.

\subsection{The stochastic model of reaction networks} \label{app:stochastic_model}
We consider reaction networks with discrete species counts, and stochastic dynamics. 
Let $\mathbf{x} = (x_1, x_2, \ldots, x_N) \in \mathbb{Z}_{\ge}^N$
 denote the discrete state-vector of the species 
$\mathcal{X} = \{X_1, X_2, \ldots, X_N\}$ from~(\ref{eq:general_networks}), 
where element $x_l \in \mathbb{Z}_{\ge}$ denotes the 
copy-number values of the species $X_l$. Abusing the notation slightly, 
we denote the copy-numbers of the biochemical species $X_l$ as a function of time
using the same symbol, $X_l(t)$, where $t \in \mathbb{R}_{\ge}$ is the time-variable.
A suitable stochastic description models the time-evolution of the species copy-number
vector as a continuous-time discrete-space Markov chain~\cite{GillespieDerivation}.
The underlying \emph{probability-mass function} (PMF)
satisfies the partial difference-differential equation, 
called the \emph{chemical master equation} (CME)~\cite{VanKampen,Gardiner,David1}, given by
\begin{align}
\frac{\partial}{\partial t} p(\mathbf{x},t)   =  \mathcal{L}_{\alpha} p(\mathbf{x},t) & = 
\sum_{j = 1}^A (E_{\mathbf{x}}^{-\Delta \mathbf{x}_j} - 1)
 \big(\lambda_j(\mathbf{x}) p(\mathbf{x},t) \big), \label{eq:CME_general}
\end{align}
where $p(\mathbf{x},t)$ is the PMF,
i.e. the probability that the copy-number vector 
at time $t > 0$ is given by $\mathbf{x}  \in \mathbb{Z}_{\ge}^N$.
Here, the \emph{step operator} $E_{\mathbf{x}}^{-\Delta \mathbf{\mathbf{x}}} = 
\prod_{l = 1}^N E_{x_l}^{-\Delta x_l}$ is such that 
$E_{\mathbf{x}}^{-\Delta \mathbf{\mathbf{x}}} p(\mathbf{x},t)
 = p(\mathbf{x} - \Delta \mathbf{x},t)$.
 The function $\lambda_j(\mathbf{x})$ is the propensity (rate) function of the $j$-th reaction, 
and is given by
\begin{align}
\lambda_{j}(\mathbf{x}) & = \alpha_{j} \mathbf{x}^{\underline{\boldsymbol{\nu}_j}} 
\equiv \alpha_j \prod_{l = 1}^N x_l^{\underline{\nu_{j, l}}}, \; \; \; \; \mathbf{x} \in \mathbb{Z}_{\ge}^N, 
\label{eq:propensity_general}
\end{align}
where $\alpha_{j} \in \mathbb{R}_{\ge}$ is the rate coefficient of the reaction 
$(\boldsymbol{\nu}_j \to \bar{\boldsymbol{\nu}}_j) \in \mathcal{R}_{\alpha}$.
Here,
$x^{\underline{\nu}}  = x (x - 1) (x - 2) \ldots (x - \nu + 1) \in \mathbb{Z}_{\ge}$
for $x, \nu \in \mathbb{Z}_{\ge}$, denotes the $\nu$-th factorial power of $x$,
 with the convention that $x^{\underline{0}} \equiv 1$ for all $x \in \mathbb{Z}_{\ge}$.

The linear operator $\mathcal{L}_{\alpha}$ from~(\ref{eq:CME_general}) 
is called the \emph{forward operator} of the network $\mathcal{R}_{\alpha}(\mathcal{X})$, 
given by~(\ref{eq:general_networks}). 
The $l^2$-adjoint operator of $\mathcal{L}_{\alpha}$,
denoted by $\mathcal{L}_{\alpha}^*$ and called the \emph{backward operator}~\cite{Pavliotis}, 
is given by
\begin{align}
\mathcal{L}_{\alpha}^{*} u(\mathbf{x}) & =  
\sum_{j = 1}^A \lambda_j(\mathbf{x}) (E_{\mathbf{x}}^{+\Delta \mathbf{\mathbf{x}}_j} - 1) 
u(\mathbf{x}). \label{eq:backwardoperator}
\end{align}

A function $p(\mathbf{x})$ satisfying $\mathcal{L}_{\alpha} p(\mathbf{x}) = 0$, 
i.e. $p(\mathbf{x}) \in \mathcal{N}(\mathcal{L}_{\alpha})$, where
$\mathcal{N}(\cdot)$ denotes the null-space of an operator, 
is called a \emph{stationary} PMF of the network $\mathcal{R}_{\alpha}(\mathcal{X})$.

\subsection{Biochemical control} \label{app:biochemical_control}
The aim of biochemical control theory~\cite{Control1,Control2,Control3,
Me_NAA,Me_Homoclinic,Khammash}
 is to suitably modify a given reaction network in order to desirably influence the 
dynamics of a subset of the underlying species.
It is implicitly assumed that the given network, which we wish to control, has at most partially
known structure, and some of its rudimentary 
dynamical features, such as an averaged (mean) behavior or the time-scales at which 
the underlying reactions fire, may also be known.
\begin{definition} [\textbf{Black-box}]  \label{def:blackbox} 
Reaction network $\mathcal{R}_{\alpha}(\mathcal{X})$,
whose fixed structure (the set of reactions underlying the network)
 and the induced dynamics are at most partially known, 
is called a \emph{black-box} network.
\end{definition}
In this paper, it is assumed we are given a 
black-box reaction network under mass-action kinetics,
denoted by $\mathcal{R}_{\alpha} = \mathcal{R}_{\alpha}(\mathcal{X})$, 
called an \emph{input} (uncontrolled) network,
which depends on $N$ biochemical species $\mathcal{X} = \{X_1, X_2, \ldots, X_N\}$, 
and has the form~(\ref{eq:general_networks}).
The input species are partitioned into $\mathcal{X} = \mathcal{X}_{\tau} \cup \mathcal{X}_{\rho}$, 
where $\mathcal{X}_{\tau}= \{X_1, X_2, \ldots, X_n\}$, $1 < n \le N$, are the \emph{target} species,
 whose dynamics we wish to explicitly control.
On the other hand, the remaining species $\mathcal{X}_{\rho} = \mathcal{X} 
\setminus \mathcal{X}_{\tau} = 
\{X_{n+1}, X_{n+2}, \ldots, X_N\} $,
are the \emph{residual} species, whose dynamics may only be implicitly, but
not explicitly, controlled.

In order to control a black-box 
input network $\mathcal{R}_{\alpha}(\mathcal{X})$, 
an auxiliary mass-action reaction network is embedded, 
called a \emph{controller} network,
and denoted by $\mathcal{R}_{\beta,\gamma} = \mathcal{R}_{\beta,\gamma}
(\mathcal{X}_{\tau},\mathcal{Y}, \mathcal{Z})$.
The resulting composite network is denoted by $\mathcal{R}_{\alpha,\beta,\gamma} = 
\mathcal{R}_{\alpha} \cup \mathcal{R}_{\beta,\gamma}$, 
and called an \emph{output} (controlled) network.
Here, generally two sets of auxiliary species are introduced 
by the controller:  $\mathcal{Y} = \{Y_1, Y_2, \ldots, Y_M\}$, called the \emph{controlling} species, 
and $\mathcal{Z}$, called the \emph{mediating} species.
The controller can be decomposed into two
sub-networks, $\mathcal{R}_{\beta,\gamma}(\mathcal{X}_{\tau},\mathcal{Y}, \mathcal{Z})
= \mathcal{R}_{\beta}(\mathcal{Y}) \cup \mathcal{R}_{\gamma}(\mathcal{X}_{\tau},\mathcal{Y}, \mathcal{Z})$, 
where $\mathcal{R}_{\beta} = \mathcal{R}_{\beta}(\mathcal{Y})$ specifies how
the controlling species interact among themselves, while $\mathcal{R}_{\gamma}
= \mathcal{R}_{\gamma}(\mathcal{X}_{\tau},\mathcal{Y}, \mathcal{Z})$ specifies how the controller
is interfaced with the input network. More precisely, the interfacing
network $\mathcal{R}_{\gamma}(\mathcal{X}_{\tau},\mathcal{Y}, \mathcal{Z})$
describes how the controlling species $\mathcal{Y}$
interact with the target species $\mathcal{X}_{\tau}$, either directly ($\mathcal{Z} = \emptyset$),
or indirectly via the mediating species ($\mathcal{Z} \ne \emptyset$). 
Control of an input network in the absence
of the mediating species is schematically depicted
 in Figure~\ref{fig:Control_Theory} in the main text. 

We denote the vectors of the rate coefficients from the networks 
$\mathcal{R}_{\alpha}$, $\mathcal{R}_{\beta}$ and $\mathcal{R}_{\gamma}$
 by $\boldsymbol{\alpha} \in \mathbb{R}_{\ge}^A$,  
$\boldsymbol{\beta} \in \mathbb{R}_{\ge}^B$ and $\boldsymbol{\gamma} \in \mathbb{R}_{\ge}^C$, 
respectively.
It is assumed that $\boldsymbol{\alpha}$ is a given constant (fixed) vector (since the input network
is a black-box), while $\boldsymbol{\beta}$ and $\boldsymbol{\gamma}$
are (variable) parameters (since we assume the kinetics of the controller are tunable). 
Furthermore, for fixed initial conditions, we denote the stationary marginal-PMF
of the target species from the output network $\mathcal{R}_{\alpha,\beta,\gamma}$, 
assumed to exist, by $p(\mathbf{x}_{\tau}) =  
p(\mathbf{x}_{\tau}, \boldsymbol{\beta}, \boldsymbol{\gamma};
 \, \boldsymbol{\alpha})$, where
 $\mathbf{x}_{\tau} = (x_1, x_2, \ldots, x_n) \in \mathbb{Z}_{\ge}^n$ 
are the copy-numbers of the target species $\mathcal{X}_{\tau}$.

\subsubsection*{Controllability and robustness}
The objective of a controller network $\mathcal{R}_{\beta,\gamma}$, which is 
embedded into an input network $\mathcal{R}_{\alpha}$, is to ensure
 that the target species $\mathcal{X}_{\tau}$ 
from the resulting output network $\mathcal{R}_{\alpha,\beta,\gamma}$
have suitably controlled stochastic dynamics. 
Control may be sought at the level of the PMF (which we call \emph{weak control}), 
or at the level of the underlying sample paths (which we call \emph{strong control}).
In this paper, we focus predominantly on the weak control over the stationary (long-time) dynamics, 
which is often of most practical importance, 
and is achieved by manipulating the properties of 
the stationary marginal PMF of the target species,
$p(\mathbf{x}_{\tau}, \boldsymbol{\beta}, \boldsymbol{\gamma}; 
\, \boldsymbol{\alpha})$, 
such as the underlying means and modes.
In particular, we consider linear functionals of the form 
$\mathbb{E} f_{\mathbf{x}_{\tau}} = 
\mathbb{E} f_{\mathbf{x}_{\tau}} (\boldsymbol{\beta}, \boldsymbol{\gamma}; \, \boldsymbol{\alpha})
 = \sum_{\mathbf{x}_{\tau}} f_{\mathbf{x}_{\tau}}(\mathbf{x}_{\tau}) p(\mathbf{x}_{\tau}, \boldsymbol{\beta}, \boldsymbol{\gamma};
 \, \boldsymbol{\alpha})$, where $f_{\mathbf{x}_{\tau}} : \mathbb{Z}_{\ge}^n \to \mathbb{R}$
is a suitable function of the target species, and $\mathbb{E} \cdot$ is the expectation operator
with respect to the PMF $p(\mathbf{x}_{\tau}, \boldsymbol{\beta}, \boldsymbol{\gamma};
 \, \boldsymbol{\alpha})$.
\begin{definition} [\textbf{Controllability}]  \label{def:control} 
Consider an input network $\mathcal{R}_{\alpha}(\mathcal{X})$, 
 a controller $\mathcal{R}_{\beta,\gamma}(\mathcal{X}_{\tau},\mathcal{Y}, \mathcal{Z})$, 
and the corresponding output network 
$\mathcal{R}_{\alpha,\beta,\gamma}(\mathcal{X},\mathcal{Y}, \mathcal{Z})
= \mathcal{R}_{\alpha}(\mathcal{X}) \cup 
\mathcal{R}_{\beta,\gamma}(\mathcal{X}_{\tau},\mathcal{Y}, \mathcal{Z})$, where 
$\mathcal{X}_{\tau} = \{X_1, X_2, \ldots, X_n\} \subseteq \mathcal{X}$ are the target species. 
The range of the stationary
statistic $\mathbb{E} f_{\mathbf{x}_{\tau}} 
(\boldsymbol{\beta}, \boldsymbol{\gamma}; \, \boldsymbol{\alpha})$ for each fixed 
$\boldsymbol{\alpha} \in \mathbb{R}_{\ge}^A$ is denoted by
$\mathcal{S}_{f}^{\alpha} \subseteq \mathbb{R}$, i.e.
$\mathbb{E} f_{\mathbf{x}_{\tau}}( \cdot, \cdot \, ; \, \boldsymbol{\alpha})
 : \mathbb{R}_{\ge}^B \times \mathbb{R}_{\ge}^C
\to \mathcal{S}_{f}^{\alpha} \subseteq \mathbb{R}$, and 
is called the \emph{set of admissible values} of $\mathbb{E} f_{\mathbf{x}_{\tau}} 
(\boldsymbol{\beta}, \boldsymbol{\gamma}; \, \boldsymbol{\alpha})$. 
For a fixed $\boldsymbol{\alpha} \in \mathbb{R}_{\ge}^A$, given a target value $f^* \in \mathbb{R}$, 
the statistic $\mathbb{E} f_{\mathbf{x}_{\tau}}
(\boldsymbol{\beta}, \boldsymbol{\gamma}; \, \boldsymbol{\alpha})$ is said to be \emph{controllable}
if $f^* \in \mathcal{S}_{f}^{\alpha}$.
\end{definition}
\noindent The set of admissible values $\mathcal{S}_{f}^{\alpha}$ 
depends on both the structure and rate coefficients $\boldsymbol{\alpha}$ 
of the input network $\mathcal{R}_{\alpha}$. 
Given a suitable function $f_{\mathbf{x}_{\tau}}(\mathbf{x}_{\tau})$, the goal is to find a controller with an appropriate structure, 
and suitably tuned rate coefficients, which manipulates the stationary $\mathbf{x}_{\tau}$-marginal PMF
so that  the target value $f^*$ lies within the range of the underlying statistic of interest.

We are interested in the output networks which are experimentally
implementable, which 
imposes a set of constraints on the controllers, some of which are captured in the following definition.
\begin{definition} [\textbf{Robustness}]  \label{def:robust} 
A controller $\mathcal{R}_{\beta,\gamma}(\mathcal{X}_{\tau},\mathcal{Y}, \mathcal{Z})$
is said to be \emph{robust}, when embedded into an input network $\mathcal{R}_{\alpha}(\mathcal{X})$,
 if both of the following two conditions are satisfied:
\begin{enumerate}
\item[{\rm (a)}] \textbf{Robustness with respect to the initial conditions}. 
The stationary marginal-{\rm PMF} of the target species $\mathcal{X}_{\tau}$,
from the output network $\mathcal{R}_{\alpha,\beta,\gamma}(\mathcal{X},\mathcal{Y}, \mathcal{Z})$, 
exists and is unique, i.e. it is independent of the initial conditions
for the species $\mathcal{X}$, $\mathcal{Y}$ and $\mathcal{Z}$, 
for a given fixed $\boldsymbol{\alpha}$.
\item[{\rm (b)}] \textbf{Robustness with respect to the input coefficients}. 
The controlled stationary statistic $\mathbb{E} f_{\mathbf{x}_{\tau}}$ of the target species $\mathcal{X}_{\tau}$, 
from the output network $\mathcal{R}_{\alpha,\beta,\gamma}(\mathcal{X},\mathcal{Y}, \mathcal{Z})$,
 does not explicitly depend on the parameters $\boldsymbol{\alpha}$ from 
the input network $\mathcal{R}_{\alpha}(\mathcal{X})$ (possibly only in an asymptotic limit 
of some of the rate coefficients from the controller), i.e. 
$\mathbb{E} f_{\mathbf{x}_{\tau}} = \mathbb{E} f_{\mathbf{x}_{\tau}}
(\boldsymbol{\beta}, \boldsymbol{\gamma})$.
\end{enumerate}
\end{definition}
Definition~\ref{def:robust}(a) demands that the marginal stochastic process, 
underlying the dynamics of the target species, is ergodic.
On the other hand, Definition~\ref{def:robust}(b) ensures that the 
desired stationary statistics of the species $\mathcal{X}_{\tau}$ depend parametrically 
only on the rate coefficients appearing in the controller, which 
are experimentally tunable, allowing one to treat the underlying input network as a black-box. 
Let us note that if the stationary $\mathbf{x}_{\tau}$-marginal PMF 
$p(\mathbf{x}_{\tau}, \boldsymbol{\beta}, \boldsymbol{\gamma};
 \, \boldsymbol{\alpha})$ is independent of
the initial conditions, then the same is true for the stationary statistic 
$\mathbb{E} f_{\mathbf{x}_{\tau}} = \sum_{\mathbf{x}_{\tau}} 
f_{\mathbf{x}_{\tau}}(\mathbf{x}_{\tau}) p(\mathbf{x}_{\tau}, \boldsymbol{\beta}, \boldsymbol{\gamma};
 \, \boldsymbol{\alpha})$, i.e. the stability
condition (a) from Definition~\ref{def:robust} also ensures
stability of the underlying statistics. We allow non-uniqueness of the 
stationary marginal-PMF for the residual species $\mathcal{X}_{\rho}$,
 thereby including a larger class of input networks $\mathcal{R}_{\alpha}$ 
into considerations, see also Example~\ref{ex:ergodicity} 
in Appendix~\ref{sec:residual_networks}.

If condition (b) from Definition~\ref{def:robust} is satisfied independently 
of the values of the rate coefficients from the controller, 
then the corresponding statistic is said to 
display \emph{robust perfect adaptation}~\cite{Adaptation,CellSignal}. 
On the other hand, if condition (b) from Definition~\ref{def:robust}
is satisfied only in an asymptotic limit of some of the rate coefficients from the controller,
then we say that the statistic displays \emph{asymptotic robust perfect adaptation}. 
Note that (asymptotic) robust perfect adaptation is experimentally
implementable, and allows one to treat the input network as a black-box.
This is in contrast to non-robust perfect adaptation, which
requires fine-tuning of the rate coefficients $\boldsymbol{\beta}$ and $\boldsymbol{\gamma}$ of the
controller to specific values, which depend on the unknown rate coefficients $\boldsymbol{\alpha}$ 
of the black-box input network $\mathcal{R}_{\alpha}$.

\section{Appendix: Dynamical analysis of the stochastic morpher} \label{app:analysis}
In this paper, we consider controllers of the form
$\mathcal{R}_{\beta,\gamma} = \mathcal{R}_{\beta,\gamma}
(\mathcal{X}_{\tau},\mathcal{Y}) = \mathcal{R}_{\beta,\gamma}(\mathcal{Y}) \cup 
\mathcal{R}_{\gamma}^{\varepsilon}(\mathcal{X}_{\tau}; \, \mathcal{Y})$, giving
rise the the output networks
\begin{align}
\mathcal{R}_{\alpha,\beta,\gamma}(\mathcal{X}, \mathcal{Y}) & = 
\mathcal{R}_{\alpha}(\mathcal{X})
\cup 
\mathcal{R}_{\beta}(\mathcal{Y})
\cup 
\mathcal{R}_{\gamma}^{\varepsilon}(\mathcal{X}_{\tau}, \mathcal{Y}). \label{eq:R_output}
\end{align}
The interfacing network $\mathcal{R}_{\gamma}^{\varepsilon} = 
\mathcal{R}_{\gamma}^{\varepsilon}(\mathcal{X}_{\tau}; \, \mathcal{Y})$ from~{\rm \ref{eq:R_output}} 
is assumed to take the following separable form
\begin{align}
\mathcal{R}_{\gamma}^{\varepsilon}(\mathcal{X}_{\tau}; \, \mathcal{Y}) & = 
\mathcal{R}_{\gamma_0}^{\varepsilon}(\mathcal{X}_{\tau}; \, \varnothing) 
\mathop{\bigcup}_{i = 1}^M 
\mathcal{R}_{\gamma_i}^{\varepsilon}(\mathcal{X}_{\tau}; \, Y_i),
\label{eq:R_gamma}
\end{align}
where the sub-network $\mathcal{R}_{\gamma_0}^{\varepsilon} = 
\mathcal{R}_{\gamma_0}^{\varepsilon}(\mathcal{X}_{\tau}; \, \varnothing)$ 
depends on the species $\mathcal{X}_{\tau}$ and is independent of $\mathcal{Y}$, while 
each factor $\{\mathcal{R}_{\gamma_i}^{\varepsilon} = 
\mathcal{R}_{\gamma_i}^{\varepsilon}(\mathcal{X}_{\tau}; \, \mathcal{Y}_i)\}_{i = 1}^M$
consists exclusively of reactions which are catalyzed by
the species $Y_i$. More precisely, the $j$th reaction from the sub-network
$\mathcal{R}_{\gamma_i}^{\varepsilon}$, denoted by $r_{i, j}$, is given by
\begin{align}
r_{0, j}: \; & & \sum_{l = 1}^n \nu_{0, j, l} X_l & 
\xrightarrow[]{\gamma_{0, j}/\varepsilon} \sum_{l = 1}^n \bar{\nu}_{0, j, l}X_l,
\nonumber \\
r_{i, j}: \; & & Y_i 
+ \sum_{l = 1}^n \nu_{i, j, l} X_l & 
\xrightarrow[]{\gamma_{i,j}/\varepsilon}  Y_i 
+ \sum_{l = 1}^n \bar{\nu}_{i, j, l} X_l, 
\; \; \; \; 
0 < \varepsilon \ll 1,
\; \; \; \; 
\textrm{for } i \in \{1, 2, \ldots, M\}.
\label{eq:R_gamma_detail}  
\end{align} 
The rate coefficients $\boldsymbol{\alpha}$, 
$\boldsymbol{\beta}$,  and $\varepsilon^{-1} \boldsymbol{\gamma}$ of the sub-networks 
$\mathcal{R}_{\alpha}$, $\mathcal{R}_{\beta}$, 
and $\mathcal{R}_{\gamma}^{\varepsilon}$, respectively, 
are assumed to be of order one, $\boldsymbol{\alpha}, \boldsymbol{\beta}, \boldsymbol{\gamma}
= \mathcal{O}(1)$, with respect to the small asymptotic parameter $0 < \varepsilon \ll 1$.
In other words, we assume that the network $\mathcal{R}_{\gamma}^{\varepsilon}$,
which interfaces the controlling and the target species,
fires much faster than the input network and the network governing the controlling species,
$\mathcal{R}_{\alpha}$ and $\mathcal{R}_{\beta}$, respectively. 

Note that the target species $\mathcal{X}_{\tau}$ interact directly
with each other in the sub-network 
$\mathcal{R}_{\gamma_0}^{\varepsilon}(\mathcal{X}_{\tau}; \, \varnothing)$, 
and that the auxiliary species $\mathcal{Y}$ are interfaced directly with $\mathcal{X}_{\tau}$, 
as specified by the sub-network $\mathop{\bigcup}_{i = 1}^M 
\mathcal{R}_{\gamma_i}^{\varepsilon}(\mathcal{X}_{\tau}; \, Y_i)$. 
As outlined in Appendix~\ref{app:biochemical_control}, we also consider a generalized case
when the mediating species $\mathcal{Z}$ are present.
Such species may play a role of a buffer for indirect interactions between the species $\mathcal{X}_{\tau}$, 
or may serve as intermediate species, propagating the action of the controlling species 
$\mathcal{Y}$ onto the target species $\mathcal{X}_{\tau}$. 
In this paper, we assume that the mediating species, 
when present, are sufficiently fast, and characterized by the dimensionless time-scale $\mu$, 
with $0 < \mu \ll \varepsilon \ll 1$, and with the interfacing network given by
$\mathcal{R}_{\gamma}^{\mu, \varepsilon}
= \mathcal{R}_{\gamma}^{\mu,\varepsilon}(\mathcal{X}_{\tau}, \mathcal{Z}; \, \mathcal{Y})
= \mathcal{R}_{\gamma_0}^{\mu,\varepsilon}(\mathcal{X}_{\tau}, \mathcal{Z}; \, \varnothing) 
\mathop{\bigcup}_{i = 1}^M 
\mathcal{R}_{\gamma_i}^{\mu, \varepsilon}(\mathcal{X}_{\tau}, \mathcal{Z}; \, Y_i)$.
More specifically, we consider the interfacing network~(\ref{eq:Kronecker_control_alg})
from Algorithm~\ref{alg:SM_algorithm}, for which $\mathcal{R}_{\gamma}^{\mu, \varepsilon} \to 
\mathcal{R}_{\gamma}^{\varepsilon}$ as $\mu \to 0$, i.e.
the indirect coupling reduces to 
an effective direct coupling in the limit $\mu \to 0$, 
 as we now establish. 

\begin{theorem} \label{theorem:Kronecker_mu} 
\textit{Consider the network 
$\mathcal{R}_{\gamma}^{\delta}(\mathcal{X}_{\tau},  \mathcal{Z}; \, \mathcal{Y})
= \mathcal{R}_{\gamma_0}^{\mu,\varepsilon,\sigma}(\mathcal{X}_{\tau},  \mathcal{Z}; \, \varnothing) 
\cup_{i = 1}^M \mathcal{R}_{\gamma_i}^{\mu,\varepsilon,\sigma}(\mathcal{Z}; \, Y_i)$
 given by~{\rm \ref{eq:Kronecker_control_alg}} from {\rm Algorithm~\ref{alg:SM_algorithm}}.
Assume that the rate coefficients 
from~{\rm \ref{eq:Kronecker_control_alg}} satisfy the 
\emph{kinetic conditions}, given by 
\begin{align}
\mu^{x_{i, j} + 1} \left( \prod_{m = 1}^{x_{i, j} + 1} \gamma_{0,j,m} \right) \gamma_{i, j} & = (\varepsilon \sigma)^{-1}, 
\, \, \, 
\mu \gamma_{0,j,x_{i, j}}, \, \mu \gamma_{0,j,c_j}, \, \mu \gamma_{i, j} \ll 1, \nonumber \\
&
\textrm{for } i \in \{1, 2, \ldots, M\},  
\; 
j \in \{1, 2, \ldots, n\}, 
\;
 \{x_{i,j}\}_{i = 1}^M \in \{1, 2, \ldots, c_j-1\},
\label{eq:kinetic_conditions_alg}
\end{align} 
with $\mathbf{c} = (c_1, c_2, \ldots, c_n) \in \mathbb{Z}_{>}^{n}$.
Then, as $\mu \to 0$, with $\varepsilon, \sigma = \mathcal{O}(1)$,
the {\rm PMF} of the network 
$\mathcal{R}_{\gamma}^{\delta}(\mathcal{X}_{\tau},  \mathcal{Z}; \, \mathcal{Y})$
 converges to the {\rm PMF} of the network
 $\mathcal{R}_{\gamma_0}^{\varepsilon}(\mathcal{X}_{\tau}; \, \varnothing) 
\cup_{i = 1}^M \mathcal{R}_{\gamma_i}^{\varepsilon,\sigma}(\mathcal{X}_{\tau}; \, Y_i)$, 
given by
\begin{align}
\mathcal{R}_{\gamma_0}^{\varepsilon}: \; 
& & \varnothing & \xrightarrow[]{1/\varepsilon}  X_j,
\; \; \; \; \; \; \; \; \; \; \; \; \; \; \; \; \; \; \; \; \; 
  \textrm{for } j \in \{1, 2, \ldots, n\}, \nonumber \\
\mathcal{R}_{\gamma_i}^{\varepsilon,\sigma}: \; 
& & Y_i + (x_{i, j} + 1) X_j &  \xrightarrow[]{1/( \sigma \varepsilon)}  Y_i + x_{i, j} X_j, 
\; \; \; \;
\textrm{for } i \in \{1, 2, \ldots, M\},  
\, \, j \in \{1, 2, \ldots, n\}, \nonumber \\
&&&
\; \; \; \; \; \; \; \; \; \; \; 
\; \; \; \; \; \; \; \; \; \; \; 
\; \; \; \; \; \; \; \; \; \; \;  
\; \; \; \; \; \; \; \; \; \; \; 
\;  \; \,
 \{x_{i,j}\}_{i = 1}^M \in \{0, 1,\ldots, c_j-1\}.
\label{eq:Kronecker_control_reduced}
\end{align} 
}
\end{theorem}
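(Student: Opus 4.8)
The plan is to establish the $\mu \to 0$ reduction as a multiscale (singular-perturbation) limit of the chemical master equation, in which the fast reversible reactions equilibrate the mediating species $\mathcal{Z}$ and slave them to the target species $\mathcal{X}_{\tau}$. First I would isolate the fast/slow structure together with the relevant conserved quantities. For each $j \in \{1,\dots,n\}$, assigning the weight $l$ to $Z_{j,l}$ (consistent with the formal identity $Z_{j,l} = l X_j$), the reversible reactions $X_j \rightleftharpoons Z_{j,1}$ and $X_j + Z_{j,l} \rightleftharpoons Z_{j,l+1}$ from $\mathcal{R}_{\gamma_0}^{\mu,\varepsilon,\sigma}$ leave invariant the total target content $T_j = x_j + \sum_{l=1}^{c_j} l\, z_{j,l}$. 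Because the backward rate $1/\mu$ dominates every forward rate $\gamma_{0,j,l}$ by the kinetic condition $\mu\,\gamma_{0,j,l} \ll 1$, these reversible reactions constitute the fast dynamics, and they preserve each $T_j$; the production $\varnothing \to X_j$ (rate $1/\varepsilon$) and the catalytic reactions $Y_i + Z_{j,x_{i,j}+1} \to Y_i + Z_{j,x_{i,j}}$ are the slow reactions, changing $T_j$ by $+1$ and $-1$, respectively. Thus $\mathbf{T} = (T_1,\dots,T_n)$ together with $\mathcal{Y}$ form the slow variables, while the internal configuration of $\mathcal{Z}$ at fixed $\mathbf{T}$ is fast.

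Next I would compute the quasi-stationary distribution of the fast subsystem conditioned on $\mathbf{T}$. Since the reversible block is detailed-balanced, its conditional stationary law is explicit, and as $\mu\,\gamma_{0,j,l} \to 0$ the mass concentrates on the fully disassembled configuration $z_{j,l}=0$, $x_j = T_j$. A leading-order detailed-balance computation along the assembly path $z=0 \to z_{j,1}=1 \to \cdots \to z_{j,x_{i,j}+1}=1$, which consumes $x_{i,j}+1$ units of target content one at a time, yields
\begin{align}
\langle z_{j,\,x_{i,j}+1} \rangle_{\mathrm{QSS}}
&= \mu^{\,x_{i,j}+1} \left( \prod_{m=1}^{x_{i,j}+1} \gamma_{0,j,m} \right)
T_j^{\underline{x_{i,j}+1}} \, \big( 1 + \mathcal{O}(\mu\,\gamma_{0,j,l}) \big),
\end{align}
where the factorial power $T_j^{\underline{x_{i,j}+1}}$ arises from the combinatorics of selecting the consumed $X_j$ molecules. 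Averaging the catalytic propensity $\gamma_{i,j}\, y_i\, z_{j,x_{i,j}+1}$ over this quasi-stationary law and invoking the kinetic condition $\mu^{x_{i,j}+1}\big(\prod_{m=1}^{x_{i,j}+1}\gamma_{0,j,m}\big)\gamma_{i,j} = (\varepsilon\sigma)^{-1}$ gives the effective propensity $(\varepsilon\sigma)^{-1} y_i\, T_j^{\underline{x_{i,j}+1}}$, which is exactly the mass-action propensity of the reduced reaction $Y_i + (x_{i,j}+1)X_j \to Y_i + x_{i,j} X_j$ with $T_j$ in place of $x_j$; this identification is legitimate because $z_{j,l}\to 0$, so $T_j = x_j$ in the limit.

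Finally, I would invoke a fast-slow averaging theorem for the CME (of the type established in \cite{Me_Bimolecular}) to conclude that the slow marginal PMF over $(\mathbf{T},\mathcal{Y})$ --- equivalently, over $(\mathbf{x}_{\tau},\mathcal{Y})$, since the mediating species vanish --- converges as $\mu\to 0$ to the stationary PMF of $\mathcal{R}_{\gamma_0}^{\varepsilon} \cup_{i=1}^M \mathcal{R}_{\gamma_i}^{\varepsilon,\sigma}$, which is the asserted limit.

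The main obstacle is that the catalytic reactions are \emph{structurally} fast (their rate coefficients $\gamma_{i,j}$ diverge as $\mu\to 0$) yet fire at an \emph{effective} order-one rate, because $z_{j,x_{i,j}+1}$ is rare in the quasi-stationary state; hence this is not a textbook fast/slow splitting, and the rare-but-large catalytic events act on the same effective timescale as the averaged fast block. Making the quasi-steady-state closure rigorous --- producing precisely the factorial powers $T_j^{\underline{x_{i,j}+1}}$ rather than ordinary powers, and controlling the $\mathcal{O}(\mu\,\gamma_{0,j,l})$ and $\mathcal{O}(\mu\,\gamma_{i,j})$ corrections uniformly in the state --- so that the limiting slow generator coincides exactly with that of the reduced network, is the delicate step. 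The cleanest rigorous route is to verify that the present network, under the kinetic conditions~(\ref{eq:kinetic_conditions_alg}), satisfies the hypotheses of the reduction theorem of \cite{Me_Bimolecular}.
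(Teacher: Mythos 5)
Your proposal is correct, and it is consistent with the route the paper takes: the paper's own proof of Theorem~\ref{theorem:Kronecker_mu} consists entirely of the citation ``See~\cite{Me_Bimolecular}'', i.e.\ it defers to the reduction theorem of that reference, which is exactly where your final paragraph lands. What you add is a reconstruction of the mechanics behind that reduction, and the pieces check out: the weighting $T_j = x_j + \sum_{l} l\, z_{j,l}$ is indeed conserved by the reversible block and changed by $\pm 1$ only by the production and catalytic reactions; the detailed-balance/QSS computation along the assembly chain gives $\langle z_{j,x_{i,j}+1}\rangle \approx \mu^{x_{i,j}+1}\bigl(\prod_{m=1}^{x_{i,j}+1}\gamma_{0,j,m}\bigr) T_j^{\underline{x_{i,j}+1}}$ to leading order, with the falling factorial arising because each assembly step depletes one unit of target content; and averaging the catalytic propensity $\gamma_{i,j} y_i z_{j,x_{i,j}+1}$ then reproduces precisely the mass-action propensity $(\varepsilon\sigma)^{-1} y_i\, x_j^{\underline{x_{i,j}+1}}$ of the reduced reaction, with the conditions $\mu\gamma_{0,j,l}, \mu\gamma_{i,j} \ll 1$ guaranteeing that the mediating species are asymptotically absent so that $T_j$ may be identified with $x_j$. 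You also correctly flag the genuinely delicate point — that the catalytic channel has a diverging rate coefficient acting on a vanishingly rare species, so the splitting is not a textbook fast/slow decomposition and the error terms must be controlled uniformly — which is precisely why the paper outsources the rigorous statement to \cite{Me_Bimolecular} rather than proving it in-line. In short: your sketch supplies the argument the paper only cites, and it is the right argument.
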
 
\begin{proof} 
See~\cite{Me_Bimolecular}.
\end{proof}

\subsection{Perturbation analysis: Limit $\varepsilon \to 0$} \label{sec:perturbations}
The CME induced by the output network~(\ref{eq:R_output}) is given by
\begin{equation}
\frac{\partial}{\partial t}  p_{\varepsilon}(\mathbf{x},\mathbf{y},t) 
= \mathcal{L}_{\varepsilon} p_{\varepsilon}(\mathbf{x},\mathbf{y},t) =
\left(
\frac{1}{\varepsilon} \mathcal{L}_{\gamma} + (\mathcal{L}_{\alpha} + \mathcal{L}_{\beta})
\right) 
p_{\varepsilon}(\mathbf{x},\mathbf{y},t),
 \label{eq:CME}
\end{equation}
where $\mathcal{L}_{\alpha}$, $\mathcal{L}_{\beta}$, and $\mathcal{L}_{\gamma}$ 
are the forward operators of the sub-networks $\mathcal{R}_{\alpha}$, $\mathcal{R}_{\beta}$, 
and $\mathcal{R}_{\gamma}^1$, respectively.
Here, $\mathbf{x} = (x_1, x_2, \ldots, x_N) \in \mathbb{Z}_{\ge}^{N}$
and $\mathbf{y} = (y_1, y_2, \ldots, y_M) \in \mathbb{Z}_{\ge}^{M}$
are the copy-number vectors of the input species $\mathcal{X} = \{X_1, X_2, \ldots, X_N\}$
and the controlling species $\mathcal{Y} = \{Y_1, Y_2, \ldots, Y_M\}$, respectively. 
Furthermore, we denote the copy-number vectors 
of the target and residual species, 
$\mathcal{X}_{\tau} = \{X_1, X_2, \ldots, X_n\}$ and $\mathcal{X}_{\rho} = \{X_{n+1}, X_{n+2}, \ldots, X_N\}$, 
by $\mathbf{x}_{\tau} = (x_1, x_2, \ldots, x_{n}) \in \mathbb{Z}_{\ge}^{n}$
and $\mathbf{x}_{\rho} = (x_{n+1}, x_{n + 2}, \ldots, x_{N}) \in \mathbb{Z}_{\ge}^{N - n}$, 
respectively. 

The CME~(\ref{eq:CME}) involves a singularly
perturbed forward operator, which we now exploit by considering the following
perturbation series~\cite{Pavliotis}:
\begin{equation}
p_{\varepsilon}(\mathbf{x},\mathbf{y},t) 
= 
p_0(\mathbf{x},\mathbf{y},t) 
+ 
\varepsilon \, p_1(\mathbf{x},\mathbf{y},t) 
+ 
\ldots 
+ 
\varepsilon^i \, p_i(\mathbf{x},\mathbf{y},t) + \ldots, 
\; \; \; \;
\textrm{where } i \ge 2. \label{eq:perturbation_series}
\end{equation}
Here, $p_0(\mathbf{x},\mathbf{y},t)$ is required to be non-negative and normalized, 
and we call it the \emph{zero-order} PMF, while we require $p_i(\mathbf{x},\mathbf{y},t)$, 
called an $i$th-order corrector,  to be centered, $\langle 1, p_i(\mathbf{x},\mathbf{y},t) \rangle_{\mathbf{x},\mathbf{y}} 
= \sum_{\mathbf{x},\mathbf{y}} p_i(\mathbf{x},\mathbf{y},t) = 0$, for $i \in \{1, 2, \ldots \}$.
Substituting~(\ref{eq:perturbation_series}) into~(\ref{eq:CME}), and equating terms of 
equal powers in $\varepsilon$, 
the following system of equations is obtained:
\begin{align}
\mathcal{O} \left(\frac{1}{\varepsilon} \right): \; 
\mathcal{L}_{\gamma} \,
p_0(\mathbf{x},\mathbf{y},t) & = 0, \label{eq:QSAa1}\\
\mathcal{O}(1): \; 
\mathcal{L}_{\gamma} \,
p_1(\mathbf{x},\mathbf{y},t)  & = 
 \left(\frac{\partial}{\partial t} - (\mathcal{L}_{\alpha} + \mathcal{L}_{\beta})  \right) 
p_0(\mathbf{x},\mathbf{y},t). \label{eq:QSAa2}
\end{align}

\emph{Order $1/\varepsilon$ equation} (\ref{eq:QSAa1}). 
It follows from~(\ref{eq:R_gamma})--(\ref{eq:R_gamma_detail}) 
that the operator $\mathcal{L}_{\gamma}$ may be written
as the following linear combination
\begin{align}
\mathcal{L}_{\gamma} = \mathcal{L}_{\gamma_0} + \sum_{i = 1}^M y_i \mathcal{L}_{\gamma_i}, 
\label{eq:operator_gamma}
\end{align}
where $\mathcal{L}_{\gamma_i}$ is the forward operator 
of the sub-network $\mathcal{R}_{\gamma_i}^1(\mathcal{X}_{\tau}; \, \varnothing)$ from~(\ref{eq:R_gamma}),
for $i \in \{0,1, \ldots, M\}$.  Here, $\mathcal{R}_{\gamma_i}^{\varepsilon}(\mathcal{X}_{\tau}; \, \varnothing)$
is obtained by removing the catalyst $Y_i$ from the reactions underlying
 $\mathcal{R}_{\gamma_i}^{\varepsilon}(\mathcal{X}_{\tau}; \, Y_i)$, for $i \in \{1, 2, \ldots, M\}$.
Using the fact that $\mathcal{L}_{\gamma}$ acts only on the copy-numbers of the 
target species $\mathbf{x}_{\tau}$, and depends parametrically on $\mathbf{y}$,
the definition of conditional probability implies that
$p_0(\mathbf{x},\mathbf{y},t) = p_0(\mathbf{x}_{\tau}| \mathbf{y}) p_0(\mathbf{x}_{\rho},\mathbf{y},t)$,
 and~(\ref{eq:QSAa1}) becomes
\begin{align}
\left( \mathcal{L}_{\gamma_0} + \sum_{i = 1}^M y_i \mathcal{L}_{\gamma_i} \right)
p_0(\mathbf{x}_{\tau}| \mathbf{y}) = 0.
\label{eq:QSAa12}
\end{align}
 
\emph{Order $1$ equation} (\ref{eq:QSAa2}). 
Applying the $l^2$ inner-product
$\langle 1, \cdot \rangle_{\mathbf{x}_{\tau}} = (\sum_{\mathbf{x}_{\tau}} \cdot)$
on equation~(\ref{eq:QSAa2}), and using  the fact
 that $\mathcal{L}_{\beta}$ acts and depends only on $\mathbf{y}$,
leads to the solvability condition in a form of 
an \emph{effective} CME, describing the time-evolution
of the $(\mathbf{x}_{\rho}, \mathbf{y})$-marginal PMF, given by
\begin{align}
\frac{\partial}{\partial t} p_0(\mathbf{x}_{\rho},\mathbf{y},t) 
& = 
(\bar{\mathcal{L}}_{\alpha} 
+
\mathcal{L}_{\beta} )
 p_0(\mathbf{x}_{\rho},\mathbf{y},t),
\; \; \; \;
\textrm{where }
\bar{\mathcal{L}}_{\alpha} = 
\left \langle 1, \mathcal{L}_{\alpha} p_0(\mathbf{x}_{\tau}| \mathbf{y}) \right \rangle_{\mathbf{x}_{\tau}}.
\label{eq:effective_CME}
\end{align}
This motivates the following definition.
\begin{definition}[\textbf{Residual network}]  \label{def:residual}
Consider an input network $\mathcal{R}_{\alpha}$ with the forward operator
$\mathcal{L}_{\alpha}$, embedded into 
an output network~{\rm (\ref{eq:R_output})}--{\rm (\ref{eq:R_gamma_detail})}.
 The operator  $\bar{\mathcal{L}}_{\alpha} = \left \langle 1, \mathcal{L}_{\alpha} 
p_0(\mathbf{x}_{\tau}| \mathbf{y}) \right \rangle_{\mathbf{x}_{\tau}}$ 
is called the \emph{residual forward operator}, where
$p_0(\mathbf{x}_{\tau}| \mathbf{y})$ satisfies~{\rm (\ref{eq:QSAa12})}.
The reaction network induced by $\bar{\mathcal{L}}_{\alpha}$
is called the corresponding \emph{residual network}, 
and is denoted by
$\bar{\mathcal{R}}_{\alpha} = 
\bar{\mathcal{R}}_{\alpha}(\mathcal{X}_{\rho}; \, \mathcal{Y})$.
\end{definition}
The residual network $\bar{\mathcal{R}}_{\alpha}(\mathcal{X}_{\rho}; \, \mathcal{Y})$
is obtained by averaging the input network $\mathcal{R}_{\alpha}(\mathcal{X})$
over the faster species $\mathcal{X}_{\tau}$ conditioned on the slower species $\mathcal{Y}$.
Note that the controlling species $\mathcal{Y}$ play a catalytic role in the residual network, 
which we capture with the notation $\bar{\mathcal{R}}_{\alpha} = 
\bar{\mathcal{R}}_{\alpha}(\mathcal{X}_{\rho}; \, \mathcal{Y})$. 
See also Theorem~\ref{theorem:Poisson_Kronecker} and 
Section~\ref{sec:residual_networks} for more details on
the residual networks. 

The main object of interest in this paper is 
the zero-order marginal-PMF of the target species $\mathcal{X}_{\tau}$,
denoted by $p_0(\mathbf{x}_{\tau}, t)$, and given by
\begin{align}
p_0(\mathbf{x}_{\tau}, t) 
& = \sum_{\mathbf{y}} p_0(\mathbf{y}, t) p_0(\mathbf{x}_{\tau}| \mathbf{y}).
\label{eq:pox_general}
\end{align} 
Here, $p_0(\mathbf{x}_{\tau}| \mathbf{y})$ is a solution of~(\ref{eq:QSAa12}), 
while, applying the inner-product
$\langle 1, \cdot \rangle_{\mathbf{x}_{\rho}} = (\sum_{\mathbf{x}_{\rho}} \cdot)$  on the
equation~(\ref{eq:effective_CME}),
and using the fact that $\bar{\mathcal{L}}_{\alpha}$ acts only on $\mathbf{x}_{\rho}$, 
it follows that $p_0(\mathbf{y}, t)$ satisfies
\begin{align}
 \frac{\partial}{\partial t} p_0(\mathbf{y},t) 
& = \mathcal{L}_{\beta}  p_0(\mathbf{y},t).
\label{eq:effective_CME_y}
\end{align}

\subsubsection{Convergence} \label{sec:convergence}
We now provide conditions under which the PMF of the 
output network $\mathcal{R}_{\alpha, \beta, \gamma}$, given by~(\ref{eq:R_output}), converges 
to its zero-order approximation from the perturbation series~(\ref{eq:perturbation_series}),
thereby mathematically delineating the class of input networks $\mathcal{R}_{\alpha}$
which may be controlled with the stochastic morpher $\mathcal{R}_{\beta, \gamma}$.
\begin{theorem} \label{theorem:convergence} 
\textit{Consider the output network $\mathcal{R}_{\alpha,\beta,\gamma} = 
\mathcal{R}_{\alpha} \cup \mathcal{R}_{\beta} \cup \mathcal{R}_{\gamma}^{\varepsilon}$, 
given by~{\rm (\ref{eq:R_output})}--{\rm (\ref{eq:R_gamma_detail})}, on a bounded state-space. 
Let $p_{\varepsilon}$ be the {\rm PMF} of the output network, satisfying~{\rm (\ref{eq:CME})},
and let $p_{0}$ be the zero-order {\rm PMF}, satisfying~{\rm (\ref{eq:QSAa1})}.
Assume there exists a function $p_1$, satisfying~{\rm (\ref{eq:QSAa2})}, 
 which is bounded, and has a bounded time-derivative, for each time $t \ge 0$, 
and assume also that $p_{\varepsilon} = p_{0}$ initially, at time $t = 0$.
Then, $p_{\varepsilon} \to p_{0}$ as $\varepsilon \to 0$ over any finite time-interval, with
\begin{align}
 \|p_{\varepsilon}(\mathbf{x},\mathbf{y},t) - p_{0}(\mathbf{x},\mathbf{y},t) \|_1 & 
\le  c(T) \varepsilon, 
\; \; \; \;
\textrm{for } 0 \le t \le T, 
\; \; \; \;
 \textrm{as } \varepsilon \to 0,
\label{eq:convergence}
\end{align}
where $c(T)$ is a constant independent of $\varepsilon$.
}
\end{theorem}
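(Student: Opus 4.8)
The plan is to control the remainder of the two-term inner expansion and to exploit the fact that, on a bounded state-space, the full forward operator $\mathcal{L}_{\varepsilon} = \varepsilon^{-1}\mathcal{L}_{\gamma} + (\mathcal{L}_{\alpha} + \mathcal{L}_{\beta})$ generates a genuine continuous-time Markov semigroup for every $\varepsilon > 0$. First I would introduce the remainder $w_{\varepsilon} = p_{\varepsilon} - p_{0} - \varepsilon\, p_{1}$ and substitute $p_{\varepsilon} = p_{0} + \varepsilon p_{1} + w_{\varepsilon}$ into the CME~(\ref{eq:CME}). Using the order-$1/\varepsilon$ relation $\mathcal{L}_{\gamma} p_{0} = 0$ from~(\ref{eq:QSAa1}) to annihilate the singular term, and the order-$1$ relation~(\ref{eq:QSAa2}), namely $\mathcal{L}_{\gamma} p_{1} = (\partial_t - \mathcal{L}_{\alpha} - \mathcal{L}_{\beta}) p_0$, to cancel the surviving $\mathcal{O}(1)$ contributions, the leftover collapses to the inhomogeneous evolution equation
\[
\frac{\partial}{\partial t} w_{\varepsilon} = \mathcal{L}_{\varepsilon}\, w_{\varepsilon} + \varepsilon\, g, \qquad g = (\mathcal{L}_{\alpha} + \mathcal{L}_{\beta}) p_{1} - \frac{\partial}{\partial t} p_{1},
\]
with initial datum $w_{\varepsilon}(0) = -\varepsilon\, p_{1}(0)$, since $p_{\varepsilon} = p_{0}$ at $t = 0$ by hypothesis. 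Both the forcing and the initial data are thus manifestly $\mathcal{O}(\varepsilon)$.

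The key structural observation is that $\mathcal{L}_{\varepsilon}$ is, for each fixed $\varepsilon > 0$, the forward operator of a bona fide continuous-time Markov chain: it is a non-negative linear combination of the generators $\mathcal{L}_{\gamma}$, $\mathcal{L}_{\alpha}$, $\mathcal{L}_{\beta}$, so it has non-negative off-diagonal entries and probability-conserving columns. Consequently its adjoint (backward) semigroup $e^{t\mathcal{L}_{\varepsilon}^{*}}$ fixes constants and is positivity-preserving, hence an $l^{\infty}$-contraction; by duality $e^{t\mathcal{L}_{\varepsilon}}$ is an $l^{1}$-contraction, $\|e^{t\mathcal{L}_{\varepsilon}} v\|_{1} \le \|v\|_{1}$ for every signed sequence $v$ and every $t \ge 0$. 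Crucially, this contraction constant equals $1$ \emph{independently} of $\varepsilon$, which is exactly what prevents the singular term $\varepsilon^{-1}\mathcal{L}_{\gamma}$ from doing harm.

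With the contraction in hand, I would apply the variation-of-constants (Duhamel) formula,
\[
w_{\varepsilon}(t) = e^{t\mathcal{L}_{\varepsilon}} w_{\varepsilon}(0) + \varepsilon \int_{0}^{t} e^{(t - s)\mathcal{L}_{\varepsilon}}\, g(s)\, ds,
\]
and estimate in $l^{1}$. The contraction bound then gives
\[
\|w_{\varepsilon}(t)\|_{1} \le \|w_{\varepsilon}(0)\|_{1} + \varepsilon \int_{0}^{T} \|g(s)\|_{1}\, ds \le \varepsilon \Big( \|p_{1}(0)\|_{1} + T \sup_{0 \le s \le T} \|g(s)\|_{1} \Big).
\]
Since the state-space is bounded, $\mathcal{L}_{\alpha} + \mathcal{L}_{\beta}$ is a bounded operator, and by the standing assumption $p_{1}$ and $\partial_t p_{1}$ are bounded on $[0,T]$; hence $\sup_{[0,T]}\|g\|_{1}$ is finite and $\|w_{\varepsilon}(t)\|_{1} \le C_{1}(T)\,\varepsilon$. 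A final triangle inequality yields $\|p_{\varepsilon}(t) - p_{0}(t)\|_{1} \le \|w_{\varepsilon}(t)\|_{1} + \varepsilon\,\|p_{1}(t)\|_{1} \le c(T)\,\varepsilon$, which is precisely~(\ref{eq:convergence}), with $c(T)$ independent of $\varepsilon$ and growing at most linearly in $T$.

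The main obstacle is the $\varepsilon$-uniformity of the contraction estimate. A naive Gr\"onwall argument applied directly to $\partial_t w_{\varepsilon} = \mathcal{L}_{\varepsilon} w_{\varepsilon} + \varepsilon g$ would produce a prefactor $e^{t\|\mathcal{L}_{\varepsilon}\|}$ that blows up like $e^{cT/\varepsilon}$ and destroys the bound. Circumventing this requires genuinely exploiting the Markov generator structure of $\mathcal{L}_{\varepsilon}$ — the probabilistic $l^{1}$-contraction rather than its operator norm — together with boundedness of the state-space, which guarantees both that $g$ remains $\mathcal{O}(1)$ in $l^{1}$ for all times in $[0,T]$ and that all operators involved are well-defined bounded maps. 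Since the claim is only over finite intervals, no uniformity in $T$ is needed, consistent with the statement.
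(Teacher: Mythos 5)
Your proposal is correct and follows essentially the same route as the paper: the same remainder decomposition $p_{\varepsilon} = p_0 + \varepsilon p_1 + w_{\varepsilon}$, the same Duhamel representation, and the same $\varepsilon$-uniform $l^1$-contraction bound $\|e^{t\mathcal{L}_{\varepsilon}}\|_1 = 1$ for the singularly perturbed semigroup, yielding the linear-in-$T$ constant. Your added justification of the contraction via the Markov-generator structure and $l^{\infty}/l^1$ duality, and your remark on why a naive Gr\"onwall bound would blow up like $e^{cT/\varepsilon}$, merely make explicit what the paper delegates to a citation.
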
 

\begin{proof}
Let us write the PMF of the output network in the following form:
\begin{equation}
p_{\varepsilon}(\mathbf{x},\mathbf{y},t) 
= 
p_0(\mathbf{x},\mathbf{y},t) 
+ 
\varepsilon \, p_1(\mathbf{x},\mathbf{y},t) 
+ 
r_{\varepsilon}(\mathbf{x},\mathbf{y},t), 
 \label{eq:perturbation_series2}
\end{equation}
where $p_0 = p_0(\mathbf{x},\mathbf{y},t) $ and $p_1 = p_1(\mathbf{x},\mathbf{y},t) $ are the zero-order PMF
and a first-order corrector, respectively, satisfying~(\ref{eq:QSAa1})--(\ref{eq:QSAa2}), 
while $r_{\varepsilon} = r_{\varepsilon}(\mathbf{x},\mathbf{y},t)$ is a residual function.
Substituting~(\ref{eq:perturbation_series2})
into~(\ref{eq:CME}), and using~(\ref{eq:QSAa1})--(\ref{eq:QSAa2}),
one obtains a linear non-homogeneous ordinary-differential equation governing the time-evolution
of the residual function:
\begin{align}
\frac{\mathrm{d}}{\mathrm{d} t} r_{\varepsilon} (t) - \mathcal{L}_{\varepsilon} r_{\varepsilon} (t) & = 
\varepsilon \left( (\mathcal{L}_{\alpha} + \mathcal{L}_{\beta}) - \frac{\mathrm{d}}{\mathrm{d} t} \right) p_1(t),
\label{eq:error}
\end{align}
where $p_1(t) = p_1(\mathbf{x},\mathbf{y},t)$ and
 $r_{\varepsilon}(t) = r_{\varepsilon}(\mathbf{x},\mathbf{y},t)$ are interpreted as column-vectors, 
while $\mathcal{L}_{\varepsilon}$ as a matrix, on a bounded state-space. 
Assuming that $p_{\varepsilon}(0) = p_0(0)$, equation~(\ref{eq:perturbation_series2})
provides an initial condition for the residual function, given by
\begin{align}
r_{\varepsilon}(0) = - \varepsilon p_1(0). \label{eq:IC}
\end{align}
The solution to the initial-value problem~(\ref{eq:error})--(\ref{eq:IC})
is given by
\begin{align}
 r_{\varepsilon} (t) & = - \varepsilon e^{\mathcal{L}_{\varepsilon} t} p_1(0)  + 
\varepsilon \int_{0}^t e^{\mathcal{L}_{\varepsilon} (t - s)}  
\left( (\mathcal{L}_{\alpha} + \mathcal{L}_{\beta}) - \frac{\mathrm{d}}{\mathrm{d} s} \right) p_1(s) \mathrm{d} s.
\label{eq:errorsol}
\end{align}

Let $\| \cdot \|_1$ denote the $l^1$-norm over
the bounded state-spaces of $\mathbf{x}$ and $\mathbf{y}$,
as well as the induced matrix-operator norm. Applying 
$\| \cdot \|_1$ on~(\ref{eq:errorsol}), and using the fact that 
$\| e^{\mathcal{L}_{\varepsilon} t} \|_1 = 1$~\cite{Pavliotis}, 
one obtains
\begin{align}
 \| r_{\varepsilon} (t) \|_1 & \le  \varepsilon \left( \| p_1(0) \|_1 
+ t  \, \textrm{sup}_{0 \le s \le t} \left \| \left((\mathcal{L}_{\alpha} + \mathcal{L}_{\beta})- \frac{\mathrm{d}}{\mathrm{d} s} \right) p_1(s) \right \|_1 \right).
\label{eq:errorbound}
\end{align}
Assuming a first-order corrector $p_1(t)$ exists, which is bounded, with a bounded 
time-derivative $\mathrm{d} p_1(t)/\mathrm{d} t$, for each fixed $t \ge 0$, 
it follows from~(\ref{eq:errorbound}) that 
$r_{\varepsilon} (t) \to 0$ as $\varepsilon \to 0$ for each fixed $t \ge 0$. 
Furthermore, the residual
function is asymptotically given by $\| r_{\varepsilon} (t) \|_1  = \mathcal{O}(\varepsilon)$
for sufficiently small $0 < \varepsilon \ll 1$, which, together with equation~(\ref{eq:perturbation_series2}), 
implies~(\ref{eq:convergence}).
\end{proof}

\subsection{Lower- and higher-resolution control} \label{app:long_time}
In Section~\ref{sec:perturbations}, we have established a weak convergence result:
 under suitable conditions, the time-dependent 
zero-order PMF approximates well the time-dependent PMF of the general output network
 $\mathcal{R}_{\alpha, \beta, \gamma}$, given 
by~(\ref{eq:R_output})--(\ref{eq:R_gamma_detail}), 
for $0 < \varepsilon \ll 1$, over arbitrarily long (but finite) time-intervals.
In what follows, we analyze the stationary (time-independent) zero-order PMF,
which, under suitable conditions, approximates well the stationary 
PMF of the output network for $0 < \varepsilon \ll 1$.
Furthermore, we consider two particular
classes of the controller:
the lower- and higher-resolution stochastic morphers, 
given by $\mathcal{R}_{\beta}(\mathcal{Y})
\cup \mathcal{R}_{\gamma}^{\mathcal{P}}(\mathcal{X}_{\tau}; \, \mathcal{Y})$
and 
$\mathcal{R}_{\beta}(\mathcal{Y})
\cup \mathcal{R}_{\gamma}^{\delta}(\mathcal{X}_{\tau}, \mathcal{Z}; \, \mathcal{Y})$
in Algorithm~\ref{alg:SM_algorithm}, respectively. 
Note that one may also consider other choices
for the sub-networks $\mathcal{R}_{\beta}$ 
and $\mathcal{R}_{\gamma}^{\varepsilon}$, see~\cite{Me_Mixing}.

\emph{Network $\mathcal{R}_{\beta}$}. 
Let us choose the sub-network $\mathcal{R}_{\beta}$ to be
 given by~(\ref{eq:R_beta_alg}) in Algorithm~\ref{alg:SM_algorithm}.
The stationary $\mathbf{y}$-marginal PMF is the normalized solution
of  $\mathcal{L}_{\beta}  p_0(\mathbf{y}) = 0$, obtained by
setting the left-hand side in~(\ref{eq:effective_CME_y}) to zero.
The structure of~(\ref{eq:R_beta_alg})  implies that, in
the long-run, its state-space is given by 
$\mathcal{S}_y = \{\mathbf{y} \in \{\mathbf{e}_i\}_{i = 1}^M \, | \, 
\mathbf{e}_i \in \mathbb{R}^M, \, \textrm{for } i \in \{1, 2, \ldots, M\}\}$, 
where $\mathbf{e}_i$ denotes the $i$th standard Euclidean basis vector, 
whose $i$th element equals one, while the rest are zero.
In other words, the long-time state space of~(\ref{eq:R_beta_alg})
 is constrained by the linear kinetic conservation law
$\sum_{i = 1}^M y_i = 1$. 
As a consequence, the stationary PMF
of~(\ref{eq:R_beta_alg})  is equivalent to the stationary PMF of the first-order
conversion network, with an initial condition being element of $\mathcal{S}_y$,
given by
\begin{align}
Y_1 \xrightarrow[]{\beta_{1,2}} Y_2 
       \xrightarrow[]{\beta_{2,3}} Y_3 \xrightarrow[]{\beta_{3,4}} \ldots 
	   \xrightarrow[]{\beta_{M-1,M}} Y_{M}
	  \xrightarrow[]{\beta_{M,1}} Y_{1},
\; \; \; \; 
\textrm{with } \sum_{i = 1}^M Y_i(0) = 1,
\label{eq:effective_R_beta}
\end{align}
which takes a multinomial product-form~\cite{David1,Me_Mixing}, given by
\begin{align}
p_0(\mathbf{y}) = \left(
\sum_{l = 1}^M 
\left(\beta_{l,l+1} (1 - \delta_{l,M}) + \beta_{M,1} \delta_{l,M} \right)^{-1}
\right)^{-1}
\, 
\prod_{i = 1}^{M} 
\left(\beta_{i,i+1} (1 - \delta_{i,M}) + \beta_{M,1} \delta_{i,M}
\right)^{- y_i},
\label{eq:p0y}
\end{align} 
with $\sum_{i = 1}^M y_i = 1$, and where $\delta_{x, x_0}$ denotes the Kronecker-delta PMF
centered at $x = x_0$.

Note that the convergence of $p_{0}(\mathbf{y},t)$ to the
stationary PMF~(\ref{eq:p0y}) may be sped-up by increasing the rate coefficient $\beta_{1,1}$ from 
the sub-network~(\ref{eq:R_beta_alg}), i.e. by taking $\beta_{i,j} \ll \beta_{1,1}$, 
for $(i, j) \ne (1,1)$. If desired, the convergence rate may be further increased 
by adding suitable (faster) reactions to~(\ref{eq:R_beta_alg}),
which also constrain the state-space of the species $\mathcal{Y}$
to $\mathcal{S}_y$,  such as the reactions 
$2 Y_i \xrightarrow[]{\beta_{i,i}} Y_i$, for $i \in \{2, 3, \ldots, M\}$. 
Note also that the sample paths of the 
network~(\ref{eq:effective_R_beta}) may be readily characterized.
In particular, given that the state of the network~(\ref{eq:effective_R_beta})
is $\mathbf{y} = \mathbf{e}_i$, the corresponding 
holding time (i.e. the time spent in the state
$\mathbf{y} = \mathbf{e}_i$) is an exponentially distributed
random variable with mean $1/\beta_{i,i+1}$ for $i \in \{1, 2, \ldots, M-1\}$
(and $1/\beta_{M,1}$ for $i = M$), after which the system 
jumps with probability one (deterministically) to the
state $\mathbf{y} = \mathbf{e}_{i+1}$ for $i \in \{1, 2, \ldots, M-1\}$
(and $\mathbf{y} = \mathbf{e}_{1}$ for $i = M$). 
Hence, while re-scaling the rate coefficient
vector $\boldsymbol{\beta}$ does not influence the 
stationary PMF of the network~(\ref{eq:effective_R_beta})
(this is equivalent to re-scaling the time-variable in~(\ref{eq:effective_CME_y})), 
it does modify the behavior of the underlying sample paths,
by changing the holding times. 

Substituting~(\ref{eq:p0y}) into~(\ref{eq:pox_general}), one obtains 
\begin{align}
p_0(\mathbf{x}_{\tau}) 
& = \sum_{i = 1}^M p_0(\mathbf{e}_i) p_0(\mathbf{x}_{\tau}| \mathbf{e}_i)
=  \sum_{i = 1}^M a_i(\boldsymbol{\beta}) p_{\gamma_i}(\mathbf{x}_{\tau}),
\label{eq:p0x}
\end{align} 
where $a_i(\boldsymbol{\beta}) \equiv p_0(\mathbf{e}_i)$, i.e.
\begin{align}
a_i(\boldsymbol{\beta}) & =  
\left(
\sum_{l = 1}^M 
\left(\beta_{l,l+1} (1 - \delta_{l,M}) + \beta_{M,1} \delta_{l,M} \right)^{-1}
\right)^{-1}
\, 
\left(\beta_{i,i+1} (1 - \delta_{i,M}) + \beta_{M,1} \delta_{i,M}
\right)^{-1},
\label{eq:PMF_coefficients}
\end{align}
and $p_{\gamma_i}(\mathbf{x}_{\tau}) \equiv p_0(\mathbf{x}_{\tau}| \mathbf{e}_i)$
is a solution of~(\ref{eq:QSAa12}) with $\mathbf{y} = \mathbf{e}_i$, i.e.
\begin{align}
(\mathcal{L}_{\gamma_0} + \mathcal{L}_{\gamma_i}) p_{\gamma_i}(\mathbf{x}_{\tau}) = 0, \; \; \; \; \; \; 
\textrm{for } i \in \{1, 2, \ldots, M\}.
\label{eq:auxiliary}
\end{align}
Let us now consider the two choices for the network $\mathcal{R}_{\gamma}^{\varepsilon}$
from Algorithm~\ref{alg:SM_algorithm}. 

\begin{theorem} \label{theorem:Poisson_Kronecker} 
\textit{Consider the input network $\mathcal{R}_{\alpha}$ with the forward operator 
$\mathcal{L}_{\alpha}$, given by~{\rm (\ref{eq:general_networks})} 
and~{\rm (\ref{eq:CME_general})}, respectively. Consider also a
corresponding output network $\mathcal{R}_{\alpha,\beta,\gamma} = 
\mathcal{R}_{\alpha} \cup \mathcal{R}_{\beta} \cup \mathcal{R}_{\gamma}^{\varepsilon}$, 
given by~{\rm (\ref{eq:R_output})}--{\rm (\ref{eq:R_gamma_detail})},
with the sub-network $\mathcal{R}_{\beta}$ fixed to~{\rm (\ref{eq:R_beta_alg})} 
from {\rm Algorithm~\ref{alg:SM_algorithm}}. 
\begin{enumerate}
\item[{\rm \textbf{(i)}}] \textbf{Lower-resolution control}.
If $\mathcal{R}_{\gamma}^{\varepsilon} = \mathcal{R}_{\gamma}^{\mathcal{P}}(\mathcal{X}_{\tau}; \, \mathcal{Y})$,
where $\mathcal{R}_{\gamma}^{\mathcal{P}}(\mathcal{X}_{\tau}; \, \mathcal{Y}) 
= \mathcal{R}_{\gamma_0}^{\varepsilon}(\mathcal{X}_{\tau}; \, \varnothing) 
\cup_{i = 1}^M \mathcal{R}_{\gamma_i}^{\varepsilon}(\mathcal{X}_{\tau}; \, Y_i)$
is given by~{\rm (\ref{eq:Poisson_control_alg})} in {\rm Algorithm~\ref{alg:SM_algorithm}}, 
then the stationary zero-order $\mathbf{x}_{\tau}$-marginal {\rm PMF}~{\rm (\ref{eq:p0x})}
of the output network is given by
\begin{align}
p_0(\mathbf{x}_{\tau}) & = 
\sum_{i = 1}^M a_i(\boldsymbol{\beta}) 
\prod_{j = 1}^{n} \mathcal{P}(x_j; \, \frac{\gamma_{i,j}}{\gamma_{0,j}}),
\; \; \; \;
\textrm{for } 0 < \varepsilon \ll 1,
\label{eq:p0x_Poisson}
\end{align} 
where $\mathcal{P}(x; \, \Lambda)$ denotes the Poisson distribution
with mean $\Lambda$, and
the coefficients $\{a_i(\boldsymbol{\beta})\}_{i = 1}^M$ are given by~{\rm (\ref{eq:PMF_coefficients})}. 
Furthermore, the residual forward operator $\bar{\mathcal{L}}_{\alpha}^{\mathcal{P}} = \bar{\mathcal{L}}_{\alpha}$, 
defined in {\rm Definition}~{\rm \ref{def:residual}}, is given by
\begin{align}
 \bar{\mathcal{L}}_{\alpha}^{\mathcal{P}}& = 
\sum_{j = 1}^A (E_{\mathbf{x}_u}^{-\Delta \mathbf{x}_{j,\rho}} - 1) \alpha_{j}
\left( \sum_{i = 1}^M y_i  \prod_{l = 1}^n 
\left( \frac{\gamma_{i,l}}{\gamma_{0,l}} \right)^{\nu_{j,l}} \right)
 \prod_{l = n+1}^N x_l^{\underline{\nu_{j, l}}}, \label{eq:effective_Poisson}
\end{align}
where $\Delta \mathbf{x}_{j,\rho} = (\Delta x_{j, n+1}, \Delta x_{j, n+2}, \ldots, \Delta x_{j, N}) 
\in \mathbb{Z}^{N - n}$. 
\item[{\rm \textbf{(ii)}}]  \textbf{Higher-resolution control}.
If $\mathcal{R}_{\gamma}^{\varepsilon} = \mathcal{R}_{\gamma}^{\delta}(\mathcal{X}_{\tau}, \mathcal{Z}; \, \mathcal{Y})$, 
where $\mathcal{R}_{\gamma}^{\delta}(\mathcal{X}_{\tau}, \mathcal{Z}; \, \mathcal{Y})
= \mathcal{R}_{\gamma_0}^{\mu,\varepsilon,\sigma}(\mathcal{X}_{\tau}, \mathcal{Z}; \, \varnothing) 
\cup_{i = 1}^M \mathcal{R}_{\gamma_i}^{\mu,\varepsilon,\sigma}(\mathcal{Z}; \, Y_i)$
 is given by~{\rm (\ref{eq:Kronecker_control_alg})} in {\rm Algorithm~\ref{alg:SM_algorithm}}, 
with the truncation vector $\mathbf{c} = (c_1, c_2, \ldots, c_n) \in \mathbb{Z}_{>}^{n}$,
and if the kinetic conditions~{\rm (\ref{eq:kinetic_conditions_alg})} are satisfied,
then the stationary zero-order $\mathbf{x}_{\tau}$-marginal {\rm PMF}~{\rm (\ref{eq:p0x})}
of the output network is given by
\begin{align}
p_0(\mathbf{x}_{\tau}) & = 
\sum_{i = 1}^M a_i(\boldsymbol{\beta}) 
\prod_{j = 1}^{n} \delta_{x_j, x_{i, j}},
\; \; \; \; 
\textrm{for }  \{x_{i,j}\}_{i = 1}^M \in \{0, 1,\ldots, c_j-1\}, 
\; \; 
0 < \mu \ll \varepsilon, \sigma \ll 1,
\label{eq:p0x_Kronecker}
\end{align} 
where $\delta_{x, x_0}$ denotes the Kronecker-delta distribution
centered at $x = x_0$, and the coefficients $\{a_i(\boldsymbol{\beta})\}_{i = 1}^M$ 
are given by~{\rm (\ref{eq:PMF_coefficients})}. 
Furthermore, the residual forward operator $\bar{\mathcal{L}}_{\alpha}^{\delta} = \bar{\mathcal{L}}_{\alpha}$ 
is given by
\begin{align}
 \bar{\mathcal{L}}_{\alpha}^{\delta}
& = \sum_{j = 1}^A (E_{\mathbf{x}_u}^{-\Delta \mathbf{x}_{j,\rho}} - 1) \alpha_{j}
\left( \sum_{i = 1}^M y_i
\prod_{l = 1}^n x_{i,l}^{\underline{\nu_{j,l}}} \right)
 \prod_{l = n+1}^N x_l^{\underline{\nu_{j, l}}}. \label{eq:effective_delta}
\end{align}
\end{enumerate}
}
\end{theorem}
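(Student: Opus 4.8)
The plan is to exploit the decomposition (\ref{eq:p0x}), which already reduces the stationary zero-order marginal $p_0(\mathbf{x}_{\tau})$ to the weighted sum $\sum_{i=1}^M a_i(\boldsymbol{\beta}) p_{\gamma_i}(\mathbf{x}_{\tau})$ with the weights (\ref{eq:PMF_coefficients}) already supplied by the analysis of $\mathcal{R}_{\beta}$. Thus, for both parts of the theorem, the only genuinely new work is (a) to solve the fast-subsystem stationarity equation (\ref{eq:auxiliary}), $(\mathcal{L}_{\gamma_0} + \mathcal{L}_{\gamma_i}) p_{\gamma_i}(\mathbf{x}_{\tau}) = 0$, for each fixed catalyst state $\mathbf{y} = \mathbf{e}_i$, and (b) to evaluate the residual operator $\bar{\mathcal{L}}_{\alpha} = \langle 1, \mathcal{L}_{\alpha} p_0(\mathbf{x}_{\tau}|\mathbf{y})\rangle_{\mathbf{x}_{\tau}}$ of Definition~\ref{def:residual} against the resulting conditional PMF. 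Step (b) is uniform across both parts: writing the input propensities in the factorial-power form (\ref{eq:propensity_general}) and splitting each reaction vector into its target and residual blocks, summation over $\mathbf{x}_{\tau}$ telescopes away every shift acting on the target coordinates, leaving $(E_{\mathbf{x}_{\rho}}^{-\Delta\mathbf{x}_{j,\rho}} - 1)$ multiplied by the target factorial moments $\sum_{\mathbf{x}_{\tau}} \prod_{l=1}^n x_l^{\underline{\nu_{j,l}}} p_0(\mathbf{x}_{\tau}|\mathbf{y})$. Hence the two residual operators (\ref{eq:effective_Poisson}) and (\ref{eq:effective_delta}) follow immediately once the factorial moments of $p_{\gamma_i}$ are known.

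For part (i) I would observe that fixing $\mathbf{y} = \mathbf{e}_i$ switches on exactly the production reactions $Y_i \to Y_i + X_j$ at rate $\gamma_{i,j}/\varepsilon$ together with the degradations $X_j \to \varnothing$ at rate $\gamma_{0,j}/\varepsilon$ from (\ref{eq:Poisson_control_alg}), so the target coordinates evolve as $n$ decoupled birth-death chains with constant birth $\gamma_{i,j}$ and linear death $\gamma_{0,j} x_j$ (the common factor $1/\varepsilon$ cancels in (\ref{eq:auxiliary})). The unique normalized solution is the product of Poissons $p_{\gamma_i}(\mathbf{x}_{\tau}) = \prod_{j=1}^n \mathcal{P}(x_j; \gamma_{i,j}/\gamma_{0,j})$, which one verifies directly by substitution into the product birth-death operator. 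Inserting this into (\ref{eq:p0x}) yields (\ref{eq:p0x_Poisson}), while the Poisson factorial-moment identity $\sum_x x^{\underline{\nu}} \mathcal{P}(x; \Lambda) = \Lambda^{\nu}$ turns the generic residual formula from step (b) into (\ref{eq:effective_Poisson}).

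For part (ii) I would first invoke Theorem~\ref{theorem:Kronecker_mu} to pass to the limit $\mu \to 0$, replacing $\mathcal{R}_{\gamma}^{\delta}$ by the effective network (\ref{eq:Kronecker_control_reduced}); under the kinetic conditions (\ref{eq:kinetic_conditions_alg}) this is legitimate and again decouples the target species. With $\mathbf{y} = \mathbf{e}_i$, each $X_j$ is then a birth-death chain with constant birth $1/\varepsilon$ and death propensity $(\sigma\varepsilon)^{-1} x_j^{\underline{x_{i,j}+1}}$, which vanishes for $x_j \le x_{i,j}$ and is positive above. Solving (\ref{eq:auxiliary}) through the one-dimensional detailed-balance relation equating the up-flux $\varepsilon^{-1} p(x_j)$ with the down-flux $(\sigma\varepsilon)^{-1}(x_j+1)^{\underline{x_{i,j}+1}} p(x_j+1)$ shows that the states $x_j < x_{i,j}$ are transient and carry no stationary mass, whereas $p(x_{i,j}+k) = \mathcal{O}(\sigma^{k}) p(x_{i,j})$; after normalization the chain concentrates at $x_j = x_{i,j}$, so $p_{\gamma_i}(\mathbf{x}_{\tau}) \to \prod_{j=1}^n \delta_{x_j, x_{i,j}}$ as $\sigma \to 0$. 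This gives (\ref{eq:p0x_Kronecker}) via (\ref{eq:p0x}), and the Kronecker factorial moment $\sum_x x^{\underline{\nu}} \delta_{x, x_0} = x_0^{\underline{\nu}}$ converts the step-(b) formula into (\ref{eq:effective_delta}).

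The main obstacle I anticipate is the interplay of the nested asymptotic limits in part (ii): the zero-order framework of Appendix~\ref{sec:perturbations} is built for $\varepsilon \to 0$, yet (\ref{eq:p0x_Kronecker}) is an exact identity only in the further limit $\sigma \to 0$ (with $\mu \to 0$ already taken). I would therefore need to argue that the three scales may be separated in the order $0 < \mu \ll \varepsilon, \sigma \ll 1$ without the limits interfering — in particular, that the $\sigma$-dependent tail of the birth-death stationary distribution remains summable and the normalization is well-behaved uniformly, so that the concentration onto the Kronecker-delta is genuine rather than an artifact of interchanging limits. Establishing this clean separation, rather than the essentially elementary single-species stationary computations, is where the care is required.
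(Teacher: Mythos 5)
Your proposal is correct, and part (i) follows the paper's proof essentially verbatim: solve the decoupled constant-birth/linear-death chains to get the Poisson product form (\ref{eq:Poisson_conditional}), insert into (\ref{eq:p0x}), and evaluate the residual operator via the Poisson factorial moments. Where you genuinely diverge is in part (ii), in the step establishing that $p_{\gamma_i}$ concentrates on $\prod_j \delta_{x_j, x_{i,j}}$. The paper, after the same reductions via Theorem~\ref{theorem:Kronecker_mu} ($\mu \to 0$) and the $\varepsilon$-expansion, runs a second singular-perturbation expansion in $\sigma$ on equation (\ref{eq:auxiliary_Kronecker}): the $\mathcal{O}(1/\sigma)$ equation kills the mass on $x_j \ge x_{i,j}+1$, and the $\mathcal{O}(1)$ solvability conditions — obtained from the Fredholm alternative by pairing against the null space $\{1_{x_j}, \delta_{x_j,0}, \ldots, \delta_{x_j, x_{i,j}-1}\}$ of the backward operator $\mathcal{L}_{\gamma_i}^*$ — kill the mass on $x_j \le x_{i,j}-1$. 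You instead exploit the one-dimensional birth--death structure directly: detailed balance forces $p(x_j)=0$ below $x_{i,j}$ (the down-flux out of $x_j+1$ vanishes there) and gives $p(x_{i,j}+k) = \mathcal{O}(\sigma^k)$ above, so the normalized distribution collapses onto $x_{i,j}$ as $\sigma \to 0$. Your route is more elementary and, as a bonus, yields the explicit $\mathcal{O}(\sigma)$ rate of concentration, which the paper's leading-order expansion does not state; the paper's operator-theoretic route is less transparent here but is the one that would survive if the target coordinates were coupled so that detailed balance failed. On the nested limits: your caution is warranted, but the paper resolves it no more carefully than you would — it simply takes the limits sequentially in the order $\mu \to 0$, then $\varepsilon \to 0$, then $\sigma \to 0$, without a uniformity argument, so you are not missing anything the paper supplies.
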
 
\begin{proof}
\textbf{(i) Lower-resolution control.}

If $\mathcal{R}_{\gamma}^{\varepsilon}$ is given by~\ref{eq:Poisson_control_alg},
then the operators $\{\mathcal{L}_{\gamma_i}\}_{i = 0}^M$ 
from~(\ref{eq:operator_gamma}) read
\begin{align}
\mathcal{L}_{\gamma_0} & = \sum_{j = 1}^n (E_{x_j}^{+1} - 1) \gamma_{0,j} x_j, \nonumber \\
\mathcal{L}_{\gamma_i} & = \sum_{j = 1}^n (E_{x_j}^{-1} - 1) \gamma_{i,j}, 
\; \; \; \; 
\textrm{for } i \in \{1, 2, \ldots, M\}. \nonumber
\end{align}
The solution of the equation~(\ref{eq:auxiliary}) 
may be written in the product-form, 
with each factor being a Poisson PMF:
\begin{align}
p_{\gamma_i}(\mathbf{x}_{\tau}) & = \prod_{j = 1}^n \mathcal{P}(x_j; \, \frac{\gamma_{i,j}}{\gamma_{0,j}}),
\; \; \; \; 
\textrm{for } i \in \{1, 2, \ldots, M\}, \label{eq:Poisson_conditional}
\end{align}
which, upon substitution into~(\ref{eq:p0x}), leads to~(\ref{eq:p0x_Poisson}).
Substituting~(\ref{eq:Poisson_conditional}) into the expression for 
$\bar{\mathcal{L}}_{\alpha}$, given in Definition~\ref{def:residual}, 
and using the fact that $y_i y_j = \delta_{i,j}$,
one obtains the residual operator~(\ref{eq:effective_Poisson}).

\textbf{(ii) Higher-resolution control}. 

\emph{Limit $\mu \to 0$}. Taking first the limit $\mu \to 0$, with $\varepsilon, \sigma = \mathcal{O}(1)$,
the network $\mathcal{R}_{\gamma}^{\delta}(\mathcal{X}_{\tau}, \mathcal{Z}; \, \mathcal{Y})$
reduces to the effective network $\mathcal{R}_{\gamma}^{\varepsilon,\sigma}(\mathcal{X}_{\tau}; \, \mathcal{Y})
= \mathcal{R}_{\gamma_0}^{\varepsilon}(\mathcal{X}_{\tau}; \, \varnothing) 
\cup_{i = 1}^M \mathcal{R}_{\gamma_i}^{\varepsilon,\sigma}(\mathcal{X}_{\tau}; \, Y_i)$, 
given by~(\ref{eq:Kronecker_control_reduced}), see Theorem~\ref{theorem:Kronecker_mu}.

\emph{Limit $\varepsilon \to 0$}. Taking the limit $\varepsilon \to 0$,  with 
$\sigma = \mathcal{O}(1)$, the $\mathbf{x}_{\tau}$-marginal PMF of the effective output 
network $\mathcal{R}_{\alpha} \cup \mathcal{R}_{\beta} \cup 
\mathcal{R}_{\gamma}^{\varepsilon,\sigma}$ is given by~(\ref{eq:p0x}). 
Equation~(\ref{eq:auxiliary}) may be written in the following form
\begin{align}
(\frac{1}{\sigma} \mathcal{L}_{\gamma_i} + \mathcal{L}_{\gamma_0}) 
p_{\gamma_i}(\mathbf{x}_{\tau}; \, \sigma) = 0, 
\; \; \; \; 
\textrm{for } i \in \{1, 2, \ldots, M\},
\label{eq:auxiliary_Kronecker}
\end{align}
with the operators $\{\mathcal{L}_{\gamma_i}\}_{i = 0}^M$ given by
\begin{align}
\mathcal{L}_{\gamma_0} & = \sum_{j = 1}^n (E_{x_j}^{-1} - 1), \nonumber \\
\mathcal{L}_{\gamma_i} & = \sum_{j = 1}^n (E_{x_j}^{+1} - 1) x_j^{\underline{(x_{i,j} + 1)}},
\; \; \; \; 
\textrm{for } i \in \{1, 2, \ldots, M\}. \nonumber
\end{align}

\emph{Limit $\sigma \to 0$}.
Substituting the perturbation series:
\begin{equation}
p_{\gamma_i}(\mathbf{x}_{\tau}; \, \sigma) 
= 
p_{\gamma_i}^0(\mathbf{x}_{\tau}) 
+ 
\sigma p_{\gamma_i}^1(\mathbf{x}_{\tau}) 
+ 
\ldots 
+ 
\sigma^j \, p_{\gamma_i}^j(\mathbf{x}_{\tau}) + \ldots, 
\; \; \; \;
\textrm{where } j \ge 2, 
\nonumber
\end{equation}
 into~(\ref{eq:auxiliary_Kronecker}), and equating terms of 
equal powers in $\sigma$, one obtains
\begin{align}
\mathcal{O} \left(\frac{1}{\sigma} \right): \; 
\mathcal{L}_{\gamma_i} \,
p_{\gamma_i}^0(\mathbf{x}_{\tau}) & = 0, \label{eq:QSAa1_Kronecker}\\
\mathcal{O}(1): \; 
\mathcal{L}_{\gamma_i} \,
p_{\gamma_i}^1(\mathbf{x}_{\tau})   & = 
  - \mathcal{L}_{\gamma_0} p_{\gamma_i}^0(\mathbf{x}_{\tau}). 
\label{eq:QSAa2_Kronecker}
\end{align}

\emph{Order $1/\sigma$ equation} (\ref{eq:QSAa1_Kronecker}). 
It follows from the structure of the forward operator $\mathcal{L}_{\gamma_i}$
that the PMF $p_{\gamma_i}^0(\mathbf{x}_{\tau})$ takes the product-form
\begin{align}
p_{\gamma_i}^0(\mathbf{x}_{\tau}) & = \prod_{j = 1}^n p_{\gamma_i}^0(x_j; \, x_{i, j}), 
\; \; \; \; 
\textrm{where }
p_{\gamma_i}^0(x_j; \, x_{i, j}) = 0, 
\; 
\textrm{for } x_j \ge (x_{i, j} + 1), 
\; 
i \in \{1, 2, \ldots, M\}.
\label{eq:Kronecker_condition_1}
\end{align}

\emph{Order $1$ equation} (\ref{eq:QSAa2_Kronecker}). 
The null-space of the backward operator $\mathcal{L}_{\gamma_i}^*$
 is given by $\mathcal{N}(\mathcal{L}_{\gamma_i}^*) = 
\prod_{j = 1}^n \{1_{x_j}, \delta_{x_j, 0}, \delta_{x_j, 1} \ldots, \delta_{x_j, x_{i,j}-1}\}$,
where $1_{x_j}$ denotes functions independent of $x_j$.
The Fredholm alternative theorem~\cite{Pavliotis} implies that the solvability conditions
are given by $0 = \langle u_j(\mathbf{x}_{\tau}),  \mathcal{L}_{\gamma_0} 
p_{\gamma_i}^0(\mathbf{x}_{\tau}) \rangle_{\mathbf{x}_{\tau}}$, 
where $u_j(\mathbf{x}_{\tau}) = u_j(x_1) u_j(x_2) \ldots u_j(x_n) \in \mathcal{N}(\mathcal{L}_{\gamma_i}^*)$.
Taking $u(x_j) \in \{1_{x_j}, \delta_{x_j, 0}, \delta_{x_j, 1} \ldots, \delta_{x_j, x_{i,j}-1}\}$
and $\{u(x_m) = 1_{x_m} \}_{m = 1, m \ne j}^n$ implies that 
\begin{align}
p_{\gamma_i}^0(x_j; \, x_{i, j}) = 0, 
\; \; \; \;
\textrm{for } x_j \le (x_{i, j} - 1), 
\; 
i \in \{1, 2, \ldots, M\}.
\label{eq:Kronecker_condition_2}
\end{align}
Conditions~(\ref{eq:Kronecker_condition_1})--(\ref{eq:Kronecker_condition_2})
jointly imply that $p_{\gamma_i}^0(x_j; \, x_{i, j})  = \delta_{x_j, x_{i,j}}$,
and $p_{\gamma_i}^0(\mathbf{x}_{\tau}) = \prod_{j = 1}^n \delta_{x_j, x_{i,j}}$, 
which, upon substitution into~(\ref{eq:p0x}), leads to~(\ref{eq:p0x_Kronecker}), 
while substituting into the definition
of $\bar{\mathcal{L}}_{\alpha}$ leads to~(\ref{eq:effective_delta}).
\end{proof}
\noindent Under the assumption of suitably well-behaved
residual networks, Theorem~\ref{theorem:Poisson_Kronecker}
implies that the stochastic morphers $\mathcal{R}_{\beta}(\mathcal{Y})
\cup \mathcal{R}_{\gamma}^{\mathcal{P}}(\mathcal{X}_{\tau}; \, \mathcal{Y})$
and
$\mathcal{R}_{\beta}(\mathcal{Y})
\cup \mathcal{R}_{\gamma}^{\delta}(\mathcal{X}_{\tau}, \mathcal{Z}; \, \mathcal{Y})$
are robust, according to Definition~\ref{def:robust}. 
In particular, the $\mathbf{x}_{\tau}$-marginal PMFs~(\ref{eq:p0x_Poisson}) and~(\ref{eq:p0x_Kronecker}) 
depend only on the parameters $\boldsymbol{\beta}$ and $\boldsymbol{\gamma}$
from the controller in the asymptotic limit $\varepsilon \to 0$
(i.e. the $\mathbf{x}_{\tau}$-marginal PMFs are independent 
of the parameters $\boldsymbol{\alpha}$ from the input network),
so that the same is true for \emph{all} of the
stationary statistics of the target species $\mathcal{X}_{\tau}$.

Assuming convergence, Theorem~\ref{theorem:Poisson_Kronecker} ensures that, 
under the action of the stochastic morpher, 
the stationary marginal-PMF of the target species $\mathcal{X}_{\tau}$ 
morphs into a PMF which is a linear combination 
of appropriate basis functions, 
whose form and centers depend on the stoichiometry and the rate coefficients $\boldsymbol{\gamma}$ from 
the interfacing sub-network $\mathcal{R}_{\gamma}^{\varepsilon}$, 
while the weights depend on the stoichiometry and the rate coefficients 
$\boldsymbol{\beta}$ from the sub-network $\mathcal{R}_{\beta}$. 
Furthermore, note that by choosing $\mathcal{R}_{\beta}$
to be given by~(\ref{eq:R_beta_alg}), one also gains a control
over the underlying long-time sample paths (strong control), which switch between 
the modes of the stationary PMF at exponentially distributed random times (whose average
is controllable via the rate coefficients $\boldsymbol{\beta})$, but
in a predefined deterministic order, owning to the fact that each of the 
first-order conversion reaction from~(\ref{eq:R_beta_alg}) is irreversible. 

More specifically, choosing $\mathcal{R}_{\gamma}^{\varepsilon}$ to be
 the first-order (uni-molecular) network 
$\mathcal{R}_{\gamma}^{\mathcal{P}}(\mathcal{X}_{\tau}; \, \mathcal{Y})$,
given by~(\ref{eq:Poisson_control_alg}), 
allows one to design PMFs in the space spanned
by the non-negative linear combinations of the Poisson-product functions. 
In this space of functions, one may construct
PMFs with predefined modes (maxima of the PMFs): 
the $i$th summand from~(\ref{eq:p0x_Poisson}) 
peaks at $(x_1, x_2, \ldots, x_n) = (\gamma_{i,1}/\gamma_{0,1}, 
\gamma_{i,2}/\gamma_{0,2}, \ldots, \gamma_{i,n}/\gamma_{0,n})$
with the amplitude given by $a_i(\boldsymbol{\beta})$, as defined in~(\ref{eq:PMF_coefficients}), 
allowing for the design of reaction networks displaying multi-modality/multi-stability.
On the other hand, choosing $\mathcal{R}_{\gamma}^{\varepsilon}$ to be
 the second-order (bi-molecular) network 
$\mathcal{R}_{\gamma}^{\delta}(\mathcal{X}_{\tau}, \mathcal{Z}; \, \mathcal{Y})$,
given by~{\rm (\ref{eq:Kronecker_control_alg})},
allows one to construct arbitrary PMFs defined on a bounded-domain,
 $P(\cdot) : \prod_{j = 1}^n [0, c_j-1] \to [0,1]$, with the state-space truncation
vector $\mathbf{c} = (c_1, c_2, \ldots, c_n) \in \mathbb{Z}_{>}^{n}$.
More precisely, $P(\mathbf{x})$ may be realized with
$M = \prod_{j = 1}^n c_j$ species $\{\{Z_{j, l} \}_{j = 1}^n\}_{l = 1}^{c_j}$  
and $\{Y_i\}_{i = 1}^{M}$, and choosing $\boldsymbol{\beta}$
such that $a_i(\boldsymbol{\beta}) = P(\mathbf{x}_i)$
for all $\mathbf{x}_i = (x_{i, 1}, x_{i, 2}, \ldots, x_{i, n}) \in  \prod_{j = 1}^n [0, c_j-1]$.
Let us note that the centers of the Poisson distributions are encoded kinetically, i.e.
 they are determined by the rate coefficients from~(\ref{eq:Poisson_control_alg}),
while the centers of the Kronecker-delta distributions are
encoded stoichiometrically, i.e. they are determined 
by which species $\mathcal{Z}$ is catalysed by $\mathcal{Y}$ in~(\ref{eq:Kronecker_control_alg}). 

\subsubsection{Hybrid control} \label{app:hybrid}
The lower- and higher-resolution
networks, $\mathcal{R}_{\gamma}^{\mathcal{P}}$ and $\mathcal{R}_{\gamma}^{\delta}$, 
respectively, may be combined into a composite hybrid scheme, capable of morphing
input PMFs into a mixture of Kronecker-delta and Poisson distributions.
In particular, consider the interfacing network $\mathcal{R}_{\gamma}^{\varepsilon} = 
\mathcal{R}_{\gamma}^{\mathcal{P},\delta}(\mathcal{X}_{\tau}, \mathcal{Z}; \, \mathcal{Y})
= \mathcal{R}_{\gamma_0}^{\mu,\varepsilon,\sigma}(\mathcal{X}_{\tau}, \mathcal{Z}; \, \varnothing) 
\cup_{i = 1}^{M_{\mathcal{P}}} \mathcal{R}_{\gamma_i}^{\varepsilon}(\mathcal{X}_{\tau}; \, Y_i)
\cup_{i = M_{\mathcal{P}}+1}^{M_{\mathcal{P}} + M_{\delta}} 
\mathcal{R}_{\gamma_i}^{\mu,\varepsilon,\sigma}(\mathcal{Z}; \, Y_i)$,
with $M_{\mathcal{P}}, M_{\delta} \in \mathbb{Z}_{\ge}$, and
\begin{align}
\mathcal{R}_{\gamma_0}^{\mu,\varepsilon,\sigma}: \; 
& & Z_{j,1} &  \xrightarrow[]{1/\mu} X_j, \nonumber \\
& & X_j + Z_{j,l} &  \xrightleftharpoons[1/\mu]{\gamma_{0,j,l+1}^{\delta}} Z_{j,l+1},
\; \; \; \; \; \; \; 
\textrm{for } j \in \{1, 2, \ldots, n\}, 
\; 
l \in \{1, 2, \ldots, c_j-1\}, \nonumber \\
\mathcal{R}_{\gamma_i}^{\varepsilon}: \; 
& & Y_i &  \xrightarrow[]{\gamma_{i,j}^{\mathcal{P}}/\varepsilon}  Y_i + X_j, \nonumber \\
& & Y_i + X_j & \xrightarrow[]{\tilde{\gamma}_{i,j}^{\mathcal{P}}/\varepsilon} Y_i, 
\; \; \; \; \; \; \; \; \;
\; \; \; \; \; \; 
\textrm{for } 
i \in \{1, 2, \ldots, M_{\mathcal{P}}\},  
\;
 j \in \{1, 2, \ldots, n\}, \nonumber \\
\mathcal{R}_{\gamma_i}^{\mu,\varepsilon,\sigma}: \; 
& & Y_i + X_j &  \xrightarrow[]{\gamma_{0,j,1}^{\delta}} Y_i + Z_{j,1}, \nonumber \\
& & Y_i & \xrightarrow[]{1/\varepsilon}  Y_i + X_j, \nonumber \\
& & Y_i + Z_{j,x_{i,j} + 1} &  \xrightarrow[]{\gamma_{i,j}^{\delta}}  Y_i + Z_{j,x_{i, j}}, 
\; \; \;  \;
\textrm{for } i \in \{M_{\mathcal{P}} + 1, M_{\mathcal{P}} + 2, \ldots, M_{\mathcal{P}} + M_{\delta}\}, 
\; j \in \{1, 2, \ldots, n\}, \nonumber \\
& & &
\; \; \; \; \; \; \; \; \;
\; \; \; \; \; \; \; \; \;
\; \; \; \; \; \; \; \; \;
\; \; \; \; \; \; \; \; \;
\; \; \; \; \; \; 
\textrm{for }  \{x_{i,j}\}_{i = M_{\mathcal{P}} + 1}^{M_{\mathcal{P}} + M_{\delta}}  \in \{0, 1,\ldots, c_j-1\}.
\label{eq:hybrid_control}
\end{align} 
Assume the kinetic conditions~(\ref{eq:kinetic_conditions_alg})
are satisfied by the rate coefficient with the superscript $\delta$ from~(\ref{eq:hybrid_control}).
Then, one can readily show that the stationary zero-order $\mathbf{x}_{\tau}$-marginal PMF,
on the domain bounded by the truncation vector $\mathbf{c} = (c_1, c_2, \ldots, c_n)$,
under the hybrid stochastic morpher $\mathcal{R}_{\beta}(\mathcal{Y}) \cup 
\mathcal{R}_{\gamma}^{\mathcal{P},\delta}(\mathcal{X}_{\tau}, \mathcal{Z}; \, \mathcal{Y})$, 
with $\mathcal{R}_{\beta}$ and $\mathcal{R}_{\gamma}^{\mathcal{P},\delta}$ given 
by~(\ref{eq:R_beta_alg}) and~(\ref{eq:hybrid_control}), respectively, 
is given by
\begin{align}
p_0(\mathbf{x}_{\tau}) & = 
\sum_{i = 1}^{M_{\mathcal{P}}} a_i(\boldsymbol{\beta}) 
\prod_{j = 1}^{n} \mathcal{P}(x_j; \, \frac{\gamma^{\mathcal{P}}_{i,j}}{\tilde{\gamma}^{\mathcal{P}}_{i,j}})
+
\sum_{i = M_{\mathcal{P}} + 1}^{M_{\mathcal{P}} + M_{\delta}} a_i(\boldsymbol{\beta}) 
\prod_{j = 1}^{n}\delta_{x_j, x_{i,j}},
\; \; \; \; 
\textrm{for }  \{x_{i,j}\}_{i = M_{\mathcal{P}} + 1}^{M_{\mathcal{P}} + M_{\delta}} \in \{0, 1,\ldots, c_j-1\}, \nonumber \\
& 
\; \; \; \; \; \; \; \; \; \; \; \; \; \; \; \; \; \; \; \; \; \; \; \; 
\; \; \; \; \; \; \; \; \; \; \; \; \; \; \; \; \; \; \; \; \; \; \; \; 
\; \; \; \; \; \; \; \; \; \; \; \; \; \; \; \; \; \; \; \; \; \; \; \; 
\; \; \; \; \; \; \; \; \; \; \; \; \; \; \; \; \; \; \; \; \; \; \; \; 
\; \; \; \; \; 
0 < \mu \ll \varepsilon, \sigma \ll 1,
\label{eq:p0x_hybrid}
\end{align} 
with  the coefficients 
$\{a_i(\boldsymbol{\beta})\}_{i = 1}^{M_{\mathcal{P}} + M_{\delta}}$  
given by~(\ref{eq:PMF_coefficients}).
The hybrid controller $\mathcal{R}_{\beta}(\mathcal{Y}) \cup 
\mathcal{R}_{\gamma}^{\mathcal{P},\delta}(\mathcal{X}_{\tau}, \mathcal{Z}; \, \mathcal{Y})$
designs Poisson PMFs centered at the $M_{\mathcal{P}}$ points
$(x_1, x_2, \ldots, x_n) = (\gamma_{i,1}/\tilde{\gamma}_{i,1},  \gamma_{i,2}/\tilde{\gamma}_{i,2},$ 
$\ldots, \gamma_{i,n}/\tilde{\gamma}_{i,n}) \in  \prod_{j = 1}^n [0, c_j-1]$ 
for $i \in \{1, 2, \ldots, M_{\mathcal{P}} \}$, 
and Kronecker-delta PMFs centered at the $M_{\delta}$ points 
$(x_1, x_2, \ldots, x_n) = (x_{i,1}, x_{i,2}, \ldots, x_{i,n}) \in  \prod_{j = 1}^n [0, c_j-1]$ 
for $i \in \{M_{\mathcal{P}} + 1, M_{\mathcal{P}} + 2, \ldots, M_{\mathcal{P}} + M_{\delta} \}$. 

\begin{example}
Consider the hybrid stochastic morpher
$\mathcal{R}_{\beta} \cup \mathcal{R}_{\gamma}^{\mathcal{P},\delta} = 
\mathcal{R}_{\beta}(Y_1, Y_2) \cup 
\mathcal{R}_{\gamma}^{\mathcal{P},\delta}(X, Z_1, Z_2; \, Y_1, Y_2)$, given by
\begin{align}
\mathcal{R}_{\beta}:
& & 2 Y_1 & \xrightarrow[]{\beta_{1,1}} Y_1 \xrightleftharpoons[\beta_{2,1}]{\beta_{1,2}} Y_2,
\nonumber \\
\mathcal{R}_{\gamma}^{\mathcal{P},\delta}: 
 & \hspace{1cm} \mathcal{R}_{\gamma_0}^{\mu,\varepsilon,\sigma}:  \hspace{-1.5cm}
&  Z_1 & \xrightarrow[]{1/\mu} X, \nonumber \\
&& X + Z_1 & \xrightleftharpoons[1/\mu]{\gamma_{0,2}^{\delta}} Z_2, \nonumber \\
& \hspace{1cm} \mathcal{R}_{\gamma_1}^{\varepsilon}: \hspace{-1.5cm}
& Y_1 & \xrightarrow[]{\gamma_{1}^{\mathcal{P}}/\varepsilon} Y_1 + X, \nonumber \\
& & Y_1 + X & \xrightarrow[]{\tilde{\gamma}_{1}^{\mathcal{P}}/\varepsilon} Y_1, \nonumber \\
& \hspace{1cm} \mathcal{R}_{\gamma_2}^{\mu,\varepsilon,\sigma}: \hspace{-1.5cm}
& Y_2 + X & \xrightarrow[]{\gamma_{0,1}^{\delta}} Y_2 + Z_1, \nonumber \\
 & & Y_2 & \xrightarrow[]{1/\varepsilon} Y_2 + X, \nonumber \\
 & & Y_2 + Z_2 & \xrightarrow[]{\gamma_{2}^{\delta}} Y_2 + Z_1, 
\; \; \; \; 0 < \mu \ll \varepsilon, \sigma \ll 1,
\label{eq:R_gamma_Kronecker_3}
\end{align}
which is obtained from~{\rm (\ref{eq:hybrid_control})} by taking
one target species $X \equiv X_1$, two controlling species
$Y_1$ and $Y_2$, and two mediating species $Z_1$ and $Z_2$. 
The species $Y_1$ is responsible for creating a Poisson
distribution centered at $x = (\gamma_{1}^{\mathcal{P}}/\tilde{\gamma}_{1}^{\mathcal{P}})$, 
while $Y_2$ generates a Kronecker-delta distribution centered
at $x = 1$. In particular, under the kinetic conditions, the controller~{\rm (\ref{eq:R_gamma_Kronecker_3})}
morphs an input {\rm PMF} into a bi-modal one, given by
\begin{align}
p_{0}(x) & = \left(1 + \frac{\beta_{1,2}}{\beta_{2,1}} \right)^{-1}
\mathcal{P} \left(x; \,  \frac{\gamma_{1}^{\mathcal{P}}}{\tilde{\gamma}_{1}^{\mathcal{P}}} \right) 
+ \left(1 + \frac{\beta_{2,1}}{\beta_{1,2}} \right)^{-1}
\delta_{x,1}, 
\; \; \; \; \textrm{for } 0 < \mu \ll \varepsilon, \sigma \ll 1. \label{eq:PMF_hybrid}
\end{align}
See {\rm Figure~\ref{fig:Kronecker}(c)--(d)} in the main text for the
 plots of the stochastic morphing induced when the hybrid 
controller~{\rm (\ref{eq:R_gamma_Kronecker_3})} is embedded
into the input network~{\rm (\ref{eq:input_1})}.
\end{example}

\subsection{Residual networks} \label{sec:residual_networks}
The residual network under the lower-resolution control,
denoted by 
$\bar{\mathcal{R}}_{\alpha}^{\mathcal{P}}(\mathcal{X}_{\rho}; \, \mathcal{Y}) = 
\bar{\mathcal{R}}_{\alpha}^{\mathcal{P}}$,
is induced by the effective forward operator~(\ref{eq:effective_Poisson}), 
and given by
\begin{align}
\bar{\mathcal{R}}_{\alpha}^{\mathcal{P}}(\mathcal{X}_{\rho}; \, \mathcal{Y}):
& & \sum_{l = n+1}^N \nu_{j, l} X_l + (1 - \delta_{\sum_{l = 1}^n \nu_{j,l},0}) Y_i & \xrightarrow[]
{\alpha_{j} \prod_{l = 1}^n \left(\gamma_{i,l}/\gamma_{0,l} \right)^{\nu_{j,l}}} 
\sum_{l = n+1}^N \bar{\nu}_{j, l}X_l + (1 - \delta_{\sum_{l = 1}^n \nu_{j,l},0}) Y_i, \nonumber \\
&&& \; \; 
\textrm{for } 
i \in \{1, 2, \ldots, M\}, 
\; \; 
j \in \{1, 2, \ldots, A\}.
\label{eq:residual_Poisson}
\end{align}
In other words, if the reactant complex in the $j$th reaction from the input network $\mathcal{R}_{\alpha}$
contains no target species $\mathcal{X}_{\tau}$ (i.e. $\delta_{\sum_{l = 1}^n \nu_{j,l},0} = 1$, 
and $\prod_{l = 1}^n \left(\gamma_{i,l}/\gamma_{0,l} \right)^{\nu_{j,l}} = 1$ for each $i \in \{1, 2, \ldots, M\}$), 
then such a reaction becomes the $j$th reaction in the residual network, without any modifications. 
Otherwise, the $j$th reaction from the input network 
gives rise to a family of $M$ reactions in the corresponding residual network, 
as given by~(\ref{eq:residual_Poisson}).

Analogously, the residual network under the higher-resolution control,
denoted by $\bar{\mathcal{R}}_{\alpha}^{\delta}(\mathcal{X}_{\rho}; \, \mathcal{Y}) = 
\bar{\mathcal{R}}_{\alpha}^{\delta}$, is induced by~(\ref{eq:effective_delta}), 
and reads
\begin{align}
\bar{\mathcal{R}}_{\alpha}^{\delta}(\mathcal{X}_{\rho}; \, \mathcal{Y}):
& & \sum_{l = n+1}^N \nu_{j, l} X_l + (1 - \delta_{\sum_{l = 1}^n \nu_{j,l},0}) Y_i & \xrightarrow[]
{\alpha_{j} \prod_{l = 1}^n x_{i,l}^{\underline{\nu_{j,l}}}} 
\sum_{l = n+1}^N \bar{\nu}_{j, l}X_l + (1 - \delta_{\sum_{l = 1}^n \nu_{j,l},0}) Y_i, \nonumber \\
&&& \; \; 
\textrm{for } 
i \in \{1, 2, \ldots, M\}, 
\; \; 
j \in \{1, 2, \ldots, A\}.
\label{eq:residual_delta}
\end{align}

When going from the input to the residual network,
the dynamics of the underlying residual species may undergo
qualitative changes.
For example, note that the $j$th reaction from the input network 
may be switched off in the corresponding residual network~(\ref{eq:residual_delta}),
and this occurs if there exist indices $l \in \{1,2, \ldots, n\}$ such that 
$x_{i,l} < \nu_{j,l}$ for each $i \in \{1, 2, \ldots, M\}$. Such 
changes in the network structure may induce bifurcations
in the dynamics of the underlying  residual species, 
which may have biochemical significance. 
Let us note that the residual networks may display blow-ups,  
in which case the assumptions made in Theorem~(\ref{theorem:convergence})
from Appendix~(\ref{sec:convergence}) may fail, 
so that the control imposed by the stochastic morpher may also fail.
Explosivity of residual networks can be studied using 
the methods put forward in e.g.~\cite{Engblom,Khammash3}.
On the other hand, the control remains successful
if the residual networks display multiple stationary PMFs, 
as outlined in the following example. 

\begin{example} \label{ex:ergodicity}
Consider the input network
\begin{align}
\mathcal{R}_{\alpha}: 
& & \varnothing & \xrightleftharpoons[\alpha_{2}]{\alpha_{1}} X_1, \hspace{0.3cm}
X_1 \xrightarrow[]{\alpha_{3}} X_1 + X_2, \hspace{0.3cm}
X_1 + X_2 \xrightarrow[]{\alpha_{4}} X_1, \hspace{0.3cm}
2 X_2 \xrightleftharpoons[\alpha_{6}]{\alpha_{5}} 3 X_2,
\label{eq:ex_ergodic}
\end{align}
with positive rate coefficients, $\boldsymbol{\alpha} \in \mathbb{R}_{>}^6$. 
Let us control the target species $X_1$, by embedding into~{\rm (\ref{eq:ex_ergodic})}
the stochastic morpher
\begin{align}
\mathcal{R}_{\beta}:
& & 2 Y_1 & \xrightarrow[]{\beta_{1,1}} Y_1, \nonumber \\
\mathcal{R}_{\gamma}^{\mathcal{P}}: 
& & \mathcal{R}_{\gamma_0}^{\varepsilon}: \; \; \; \; 
X & \xrightarrow[]{\gamma_{0}/\varepsilon} \varnothing, \nonumber \\
& & \mathcal{R}_{\gamma_1}^{\varepsilon}:  \; \; \; \; 
Y_1 & \xrightarrow[]{\gamma_{1}/\varepsilon} Y_1 + X, 
\; \; \; \; 0 < \varepsilon \ll 1,
\label{eq:R_gamma_Poisson_1_ex}
\end{align}
with $\gamma_1 = 0$. In this case, the stationary
$x_1$-marginal {\rm PMF} is given by $p_0(x_1) = 
\mathcal{P}(x_1; \, 0) = \delta_{x_1,0}$. On the other hand, 
it follows from~{\rm (\ref{eq:residual_Poisson})} that
the dynamics of the species $X_2$ is goverened
by the residual network
\begin{align}
\bar{\mathcal{R}}_{\alpha}: \; 
& & 2 X_2 & 
\xrightleftharpoons[\alpha_{6}]
{\alpha_{5}} 3 X_2.
\label{eq:ex_ergodic_residual}
\end{align}
The input network~{\rm (\ref{eq:ex_ergodic})} is jointly 
ergodic in both of the species $X_1$ and $X_2$, with the
third and fourth reactions, which are catalyzed by $X_1$, 
allowing $X_2$ to enter and exit the states $x_2 \in \{0,1\}$. 
On the other hand, 
the output network~{\rm (\ref{eq:ex_ergodic})}$\cup${\rm (\ref{eq:R_gamma_Poisson_1_ex})}, 
in the limit $\varepsilon \to 0$, 
is only marginally ergodic in the target species $X_1$. 
In particular, the residual network~{\rm (\ref{eq:ex_ergodic_residual})} is non-ergodic, 
as the third and fourth reactions from~{\rm (\ref{eq:ex_ergodic})}
are switched off, resulting in a reducible state-space for $X_2$, with the three irreducible
components given by $\{0\}$, $\{1\}$, and $\{x_2 | x_2 \ge 2\}$.
\end{example}

\label{lastpage}


\begin{thebibliography}{9}
\bibitem{SynthBio1} Endy D., 2005. 
Foundations for engineering biology. 
\textit{Nature}, \textbf{484}: 449--453.

\bibitem{Experiment5} Zhang, D. Y., Winfree, E., 2009.
 Control of DNA strand displacement kinetics using toehold exchange. 
\textit{Journal of the American Chemical Society}, \textbf{131}: 17303--17314. 

\bibitem{RNAComputing} \v{S}ulc P, Ouldridge TE, Romano F, Doye JPK, Louis AA., 2015.
Modelling toehold-mediated RNA strand displacement.
\textit{Biophysical Journal}, \textbf{108}: 1238--1247. 

\bibitem{Sulc} Hong, F., and \v{S}ulc, P., 2019. 
Strand displacement: a fundamental mechanism in RNA biology?
Available as https://arxiv.org/abs/1811.02766.

\bibitem{Tom} Ouldridge, T. E., 2015. 
DNA nanotechnology: understanding and optimisation through simulation.
\textit{Molecular Physics} \textbf{113}: 1--15.

\bibitem{Experiment1} Yurke, B., Turberfield, A. J., Mills, A. P., Simmel, F. C., and Neumann, J. L., 2000. 
A DNA-fuelled molecular machine made of DNA. 
\textit{Nature}, \textbf{406}: 605--608.

\bibitem{DNAComputing1} 
Soloveichik D, Seeling G, Winfree E, 2010.
DNA as a universal substrate for chemical kinetics.
\textit{Proceedings of the National Academy of Sciences}, \textbf{107}(12): 5393--5398.

\bibitem{Displacillator} 
Srinivas, N., Parkin, J., Seeling, G., Winfree, E., Soloveichik, D., 2017.
Enzyme-free nucleic acid dynamical systems.
\textit{Science}, \textbf{358}, eaal2052.

\bibitem{Experiment2} Holliday, R., 1964.
A mechanism for gene conversion in fungi. 
\textit{Genetics Research}, \textbf{5}(\textbf{2}): 282--304. 

\bibitem{CRISPR} Jinek, M., Chylinski, K., Fonfara, I., Hauer, M., Doudna, J. A., and Charpentier, E., 2012.
A programmable dual-RNA-guided DNA endonuclease in adaptive bacterial immunity. 
\textit{Science}, \textbf{337}(\textbf{6096}): 816--821. 

\bibitem{Feinberg} Feinberg, M. \textit{Lectures on Chemical Reaction Networks}, 
Delivered at the Mathematics Research Center, U. of Wisconsin, 1979.


\bibitem{Me_Homoclinic}  Plesa, T., Vejchodsk\'{y}, T., and Erban, R., 2016. 
Chemical Reaction Systems with a Homoclinic Bifurcation: An Inverse Problem.
 \textit{Journal of Mathematical Chemistry}, \textbf{54}(\textbf{10}): 1884--1915.

\bibitem{Me_Limitcycles} Plesa, T., Vejchodsk\'{y}, T., and Erban, R. Test Models for Statistical Inference: 
Two-Dimensional Reaction Systems Displaying Limit Cycle Bifurcations and Bistability, 2017. 
\textit{Stochastic Dynamical Systems, Multiscale Modeling, 
Asymptotics and Numerical Methods for Computational Cellular Biology}, 2017. 

\bibitem{CellCycle} Kar S., Baumann W. T., Paul M. R., Tyson J. J., 2009. 
Exploring the roles of noise in the eukaryotic cell cycle.
\textit{Proceedings of the National Academy of Sciences} USA, \textbf{106}: 6471--6476.

\bibitem{Circadian} Vilar, J. M. G., Kueh, H. Y., Barkai, N., Leibler, S., 2002.
Mechanisms of noise-resistance in genetic oscillators.
\textit{PNAS}, USA, \textbf{99} (\textbf{9}): 5988--5992.

\bibitem{Vesicles1} 
Hasatani, K., Leocmach, M., Genot, A. J., Est\'evez-Torres, A., Fujii, T., Rondelez, Y., 2013.
High-throughput and long-term observation of
compartmentalized biochemical oscillators.
\textit{ChemComm}, \textbf{49}: 8090--8092.

\bibitem{Vesicles2} 
Weitz, M., Kim, J., Kapsner, K., Winfree, E., Franco, E., Simmel, F. C., 2014.
Diversity in the dynamical behaviour of a compartmentalized programmable biochemical oscillator.
\textit{Nature Chemistry}, \textbf{6}: 295--302.

\bibitem{Vesicles3} 
Genot, A. J., Baccouche, A., Sieskind, R., Aubert-Kato, N., Bredeche, N., Bartolo, J. F., et al, 2016.
High-resolution mapping of bifurcations in nonlinear biochemical circuits.
\textit{Nature Chemistry}, 10.1038/nchem.2544.

\bibitem{GillespieDerivation} Gillespie, D. T., 1992. A rigorous derivation of the chemical master equation.
 \textit{Physica A: Statistical Mechanics and its Applications}, \textbf{188}(1): 404--425.


\bibitem{Me_NAA} Plesa, T., Zygalakis, K. C., Anderson, D. F., and Erban, R., 2018. 
Noise control for molecular computing.
\textit{Journal of the Royal Society Interface}, \textbf{15}(\textbf{144}): 20180199.

\bibitem{Enzymes} Fujii T, Rondelez Y., 2013. 
Predator-prey molecular ecosystems.
\textit{ACS Nano}, \textbf{7}: 27--34. 


\bibitem{PreTranscription} Chappell, J., Takahashi, M. K., and Lucks, J. B., 2015. 
Creating small transcription activating RNAs. 
\textit{Nature chemical biology}, \textbf{11}(\textbf{3}): 214--220.

\bibitem{PostTranscription} Isaacs, F. J., Dwyer, D. J., Ding, C., Pervouchine, D. D., Cantor, C. R., Collins, J. J., 2004.
Engineered riboregulators enable post-transcriptional control of gene expression. 
\textit{Nature biotechnology}, \textbf{22}(\textbf{7}): 841--847.

\bibitem{Control3} Boo, A., Ellis, T., Stan, G. B., 2019. 
 Host-aware synthetic biology.
\textit{Current Opinion in Systems Biology}, \textbf{14}: 66--72.

\bibitem{Retroactivity} Xiang, Y., Dalchau, N., Wang, B., 2018.
Scaling up genetic circuit design for cellular computing: advances and prospects.
\textit{Natural Computing}, \textbf{17}(\textbf{4}): 833--853.


\bibitem{Liao} Liao, S., Vejchodsk\'{y}, T., Erban, R., 2015.
Tensor methods for parameter estimation and 
bifurcation analysis of stochastic reaction networks. 
\textit{Journal of the Royal Society Interface},
\textbf{12}(108): 20150233.

\bibitem{Zoltan} Tuza, Z. A., Stan, G. B., 2018.
Characterization of Biologically Relevant Network Structures form Time-series Data.
\textit{IEEE Conference on Decision and Control (CDC)}: 1089--1095.

\bibitem{Wooli} Kempter, S., Khmelinskaia, A., Strauss, M. T., Schwille, P., Jungmann, R., Liedl, T., and Bae, W., 2019.
Single particle tracking and super-resolution imaging of membrane-assisted 
stop-and-go diffusion and lattice assembly of DNA origami.
\textit{ACS Nano}, \textbf{13}(\textbf{2}): 996--1002.


\bibitem{Me_Bimolecular} 
Plesa T., 2018. Stochastic approximation of high- by bi-molecular reactions.
In the submission process. Available as https://arxiv.org/abs/1811.02766.

\bibitem{Nonelementary} 
King, G. A. M., 1983.
Reactions for chemical systems far from equilibrium.
\textit{Journal of the Chemical Society, Faraday Transactions 1}, 
\textbf{79}: 75--80.

\bibitem{Catalysis1}  
Deshpande, A., Ouldridge, T. E., 2019.
High rates of fuel consumption are not required by insulating motifs to suppress retroactivity in biochemical circuits.
\textit{Engineering Biology}, \textbf{1}(\textbf{2}): 86--99.

\bibitem{Catalysis2}  
Deshpande, A., Ouldridge, T. E., 2019.
Optimizing enzymatic catalysts for rapid turnover of substrates with low enzyme sequestration.
In the submission process. Available as https://arxiv.org/abs/1905.00555.

\bibitem{Mismatches}  
Machinek, R. R., Ouldridge ,T. E., Haley, N. E., Bath, J., Turberfield, A. J., 2014.
Programmable energy landscapes for kinetic control of DNA strand displacement.
\textit{Nature communications} 5, 5324.


\bibitem{Adaptation} Drengstig, T., Ueda, H. R., Ruoff, P., 2008.
Predicting perfect adaptation motifs in reaction kinetic networks.
\textit{Journal of Physical Chemistry B}, \textbf{112}(\textbf{51}): 16752--16758. 

\bibitem{CellSignal} Ferrell, J. E., 2016.
Perfect and near-perfect adaptation in cell signaling.
\textit{Cell Systems}, \textbf{2}(\textbf{2}): 62--67.

\bibitem{Glycolic} Chandra, F. A., Buzi, G., and Doyle, J. C., 2011.
 Glycolytic oscillations and limits on robust efficiency.
\textit{Science}, \textbf{333}(\textbf{6039}): 187--192. 

\bibitem{Chemotaxis1} Barkai, N., and Leibler, S., 1997.
Robustness in simple biochemical networks.
\textit{Nature}, \textbf{387}: 913--917.

\bibitem{Chemotaxis2} Spiro, P., Parkinson, J., and Othmer, H. G., 1997.
A model of excitation and adaptation in bacterial chemotaxis.
\textit{PNAS}, USA, \textbf{94}: 7263--7268.

\bibitem{Chemotaxis3} Yi, T. M., Huang, Y., Simon, M. I., and Doyle, J., 2000.
Robust perfect adaptation in bacterial chemotaxis through integral feedback control.
\textit{PNAS}, \textbf{97} (\textbf{9}): 4649--4653. 


\bibitem{Bistability} Ghomi, M. S. , Ciliberto, A., Kar, S., Novak, B., Tyson, J. J. 2008.
Antagonism and bistability in protein interaction networks.
\textit{Journal of Theoretical Biology}, \textbf{250}: 209--218. 

\bibitem{Kepler} Kepler, T. B., Elston, T.~C., 2001. 
Stochasticity in transcriptional regulation: 
Origins, consequences, and mathematical representations. 
\textit{Biophysical Journal} 
\textbf{81}: 3116--3136.

\bibitem{Me_Mixing} Plesa, T., Erban, R., and Othmer, H. G, 2018. 
Noise-induced Mixing and Multimodality in Reaction Networks. 
\textit{European Journal of Applied Mathematics}, 1-25. doi:10.1017/S0956792518000517.  

\bibitem{Multiscale} Qu, Z., Garfinkel, A., Weiss, J. N., Nivala, M., 2011.
Multi-scale modeling in biology: How to bridge the gaps between scales?
\textit{Prog Biophys Mol Biol.}, \textbf{107}(\textbf{1}): 21--31.


\bibitem{LeakyControl1}  Qian, Y., Vecchio, D., 2018. 
Realizing ‘integral control’ in living cells:
how to overcome leaky integration due to dilution? 
\textit{Journal of the Royal Society Interface}, \textbf{15}: 20170902.

\bibitem{Khammash} Briat, C., Gupta, A.,  and Khammash, M., 2016. 
Antithetic integral feedback ensures robust perfect adaptation in noisy bimolecular networks. 
\textit{Cell Systems}, \textbf{2}(\textbf{1}): 15--26.


\bibitem{Control1} Del Vecchio, D., Dy, A. J., and Qian, Y., 2016. 
Control theory meets synthetic biology.
\textit{Journal of the Royal Society Interface}, \textbf{13}(\textbf{120}): 3--43.

\bibitem{Control2} Hsiao, V., de Los Santos, E. L. C., Whitaker, W. R., Dueber, J. E., and Murray, R. M., 2014.
Design and implementation of a biomolecular concentration tracker.
\textit{ACS synthetic biology}, \textbf{4}(\textbf{2}): 150--161. 


\bibitem{Khammash2} Briat, C., Gupta, A.,  and Khammash, M., 2016. 
Antithetic proportional-integral feedback for reduced variance and improved
control performance of stochastic reaction networks. 
\textit{Journal of the Royal Society Interface}, \textbf{15}: 20180079.

\bibitem{Cardelli} Laurenti, L., Kwiatkowska, M., Czikasz-Nagy, A., Cardelli, L., 2018. 
Molecular Filters for Noise Reduction. 
\textit{Biophysical Journal}, \textbf{114}(\textbf{12}): 3000--3011.


\bibitem{Kurtz} Kurtz, T. G., 1972. 
The relationship between stochastic and deterministic models for chemical reactions. 
\textit{Journal of Chemical Physics}, \textbf{57}: 2976--2978. 


\bibitem{Switching} Bressloff, P. C., 2017.
Stochastic switching in biology: from genotype to phenotype.
\textit{Journal of Physics A: Mathematical and Theoretical}, \textbf{50}: 133001.

\bibitem{Swarming} Yates, C., Erban, R., Escudero, C., Couzin, I., Buhl, J., Kevrekidis, I., Maini, P., and Sumpter, D., 2009.
 Inherent noise can facilitate coherence in collective swarm motion. 
\textit{PNAS}, \textbf{106}(\textbf{14}): 5464--5469.

\bibitem{Vesicles5} 
Okumus, B., Wilson, T. J., Lilley, D. M. J., Ha, T., 2004.
Vesicle encapsulation studies reveal that single molecule ribozyme heterogeneities are intrinsic.
\textit{Biophys Journal}, \textbf{87}(\textbf{4}): 2798--2806.


\bibitem{David2} Cappelletti, D.,  Ortiz-Munoz, A.,  Anderson, D. F., and Winfree, E., 2018.
 Stochastic chemical reaction networks for approximating arbitrary probability distributions.
Available as https://arxiv.org/abs/1810.02854.


\bibitem{Vesicles4} 
Cisse, I.I., Kim, H., Ha, T., 2012.
A rule of seven in Watson-Crick base-pairing of mismatched sequences.
\textit{Nature Structural} \& \textit{Molecular Biology}, \textbf{19}: 623-627.

\bibitem{Experiment3} McKinney, S. A., D\'eclais, A.-C., Lilley, D. M. J., and Ha, T., 2003. 
Structural dynamics of individual Holliday junctions. 
\textit{Nature Structural Biology}, \textbf{10}: 93--97.  

\bibitem{Experiment4} Chen, X., 2012. 
Expanding the rule set of DNA circuitry with associative toehold activation. 
\textit{Journal of the American Chemical Society}, \textbf{134}: 263--271.

\bibitem{ControlTheoryBook} Stubberud, A. R., Williams, I. J., and DiStefano, J. J.
\textit{Schaum's outline of theory and problems of feedback and control systems, second edition}. 
McGraw Hill Professional, 1995.


\bibitem{VanKampen} Van Kampen, N. G. 
\textit{Stochastic processes in physics and chemistry}. 
Elsevier, 2007.

\bibitem{Gardiner} Gardiner, C. 
\emph{Handbook of stochastic methods for physics, chemistry, and the natural sciences}. 
Springer series in synergetics, Springer, New York, 2004.

\bibitem{David1}
Anderson, D.~F., Kurtz, T.~G.
\textit{Stochastic analysis of biochemical systems}. 
Springer, 2015.

\bibitem{Pavliotis} Pavliotis, G.~A., Stuart, A.~M.
\textit{Multiscale Methods: Averaging and Homogenization}. 
Springer, New York, 2008.


\bibitem{Engblom} Engblom, S., 2012.
On the stability of stochastic jump kinetics.
\textit{Applied Mathematics}, \textbf{5}: 3217--3239.

\bibitem{Khammash3} Gupta, A., Briat, C.,  and Khammash, M., 2014. 
A scalable computational framework for establishing
long-term behavior of stochastic reaction networks.
\textit{PLoS computational biology}, \textbf{10}(\textbf{6}): e1003669.

\end{thebibliography}
\end{document}